\documentclass[10pt,journal,letterpaper,romanappendices]{IEEEtran}

\usepackage[english]{babel} 
\usepackage{amsmath}                
\usepackage{amsfonts}               
\usepackage{amssymb}                
\usepackage{amsopn}                 
\usepackage{bbm}                    
\usepackage{mathrsfs}               
\usepackage{calc}                   
\usepackage[dvips]{graphicx}        
\usepackage{epsfig}                
\usepackage{psfrag}                 
\usepackage[dvips]{color}           
\usepackage{fancyhdr}               
\usepackage{verbatim}               
\usepackage{exscale}                
\usepackage{macros}

%
\newtheorem{theorem}{Theorem}

\newtheorem{corollary}{Corollary}
\newtheorem{proposition}{Proposition}

\title{On Noncoherent Fading Relay Channels\\ at High Signal-to-Noise Ratio}



\author{Tobias~Koch,~\IEEEmembership{Member,~IEEE,}~and~Gerhard Kramer,~\IEEEmembership{Fellow,~IEEE}%
\thanks{T.~Koch has received funding from the European Community's Seventh Framework Programme (FP7/2007-2013) under grant agreement No. 252663 and from the Ministerio de Econom\'ia of Spain (projects DEIPRO, id.~TEC2009-14504-C02-01, and COMONSENS, id.~CSD2008-00010). G.~Kramer was supported by an Alexander von Humboldt Professorship endowed by the German Federal Ministry of Education and Research. He was also supported by ARO Grant W911NF-06-1-0182 and by NSF Grant CCF-09-05235. The material in this paper was presented in part at the 2005 IEEE International Symposium on Information Theory (ISIT), Adelaide, Australia, Sept.~4--9, 2005.}
 \thanks{T.~Koch was with the Department of Engineering, University of Cambridge, Cambridge CB2 1PZ, UK. He is now with the Signal Theory and Communications Department, Universidad Carlos III de Madrid, 28911 Legan\'es, Spain (koch@tsc.uc3m.es).}
 \thanks{G.~Kramer is with the Institute for Communications Engineering, Technische Universit\"at M\"unchen, D-80333 M\"unchen, Germany (e-mail: gerhard.kramer@tum.de).}
}


\begin{document}

\maketitle

\begin{abstract}
The capacity of noncoherent fading relay channels is studied where all terminals are aware of the fading statistics but not of their realizations. It is shown that if the fading coefficient of the channel between the transmitter and the receiver can be predicted more accurately from its infinite past than the fading coefficient of the channel between the relay and the receiver, then at high signal-to-noise ratio (SNR) the relay does not increase capacity. It is further shown that if the fading coefficient of the channel between the transmitter and the relay can be predicted more accurately from its infinite past than the fading coefficient of the channel between the relay and the receiver, then at high SNR one can achieve communication rates that are within one bit of the capacity of the multiple-input single-output fading channel that results when the transmitter and the relay can cooperate.
\end{abstract}

\begin{IEEEkeywords}
  Channel capacity, fading channels, noncoherent, relay channels, time-selective.
\end{IEEEkeywords}
\section{Introduction}
\label{sec:intro} 
\IEEEPARstart{A}{} \emph{relay channel} consists of a transmitter, a receiver, and a relay that supports the transmitter in communicating with the receiver. We study \emph{fading} relay channels, where the word ``fading'' refers to the variation in the strength of the links between the terminals. Coherent fading relay channels were studied, e.g., in \cite{kramergastpargupta05}, \cite{wangzhanghostmadsen05}. For such channels, the fading coefficients are available at the corresponding receiving terminals.

The assumption that the fading coefficients are available at the receiving terminals is commonly justified by saying that these coefficients vary slowly over time and can therefore be estimated by transmitting training sequences. However, this assumption yields overly-optimistic results, since it is \emph{prima facie} not clear whether the fading coefficients can be estimated perfectly, and since the transmission of training sequences reduces the achievable communication rates. For instance, in the point-to-point case (where a transmitter communicates with a receiver without the aid of a relay) the loss in not knowing the fading coefficient at the receiver can be substantial. Indeed, if the fading is \emph{regular} in the sense that the present fading coefficient cannot be predicted perfectly from its infinite past, then at high signal-to-noise ratio (SNR) the capacity grows double-logarithmically with the SNR \cite{lapidothmoser03_3} which is in stark contrast to the logarithmic growth in the coherent case \cite{ericson70}. If the fading is \emph{nonregular}, then the capacity can grow logarithmically with the SNR, but the pre-log, defined as the limiting ratio of capacity to $\log\SNR$ as $\SNR$ tends to infinity, depends on the fading's autocovariance function and is typically strictly smaller than one \cite{lapidoth05}.

In this paper, we study the capacity of noncoherent fading relay channels with \emph{regular} fading. For such channels the terminals are aware of the laws of the fading coefficients but not of their realizations. We derive two basic results. First, we show that if the fading coefficient of the channel between the transmitter and the receiver can be predicted more accurately from its infinite past than the fading coefficient of the channel between the relay and the receiver, then at high SNR the relay does not increase capacity. Second, we show that if the fading coefficient of the channel between the transmitter and the relay can be predicted more accurately from its infinite past than the fading coefficient of the channel between the relay and the receiver, then at high SNR one can achieve communication rates that are within one bit of the capacity of the multiple-input single-output (MISO) fading channel that results when the transmitter and the relay can cooperate. Thus, at high SNR the rate penalty for establishing cooperation between the transmitter and the relay is not greater than one bit.

We model the fading coefficients as stationary and ergodic stochastic processes whose autocovariance functions determine the fading's time-variation. This excludes the nonstationary \emph{block-fading model} introduced by Marzetta and Hochwald \cite{marzettahochwald99}. In the point-to-point case, the block-fading model and the stationary and ergodic fading model yield very different capacity behaviors at high SNR \cite{lapidothmoser03_3,lapidoth05,zhengtse02,liangveeravalli04}. This will also be the case for fading relay channels.

This paper is organized as follows. Section~\ref{sec:channel} describes the channel model. Section~\ref{sec:capacity} introduces channel capacity and defines the fading number. Section~\ref{sec:results} presents the main results. Section~\ref{sec:nonasymptotic} presents nonasymptotic bounds on the capacity of the fading relay channel. Sections~\ref{sec:upperbound} and \ref{sec:lowerbound} contain the proofs of the main results. Sections~\ref{sec:discussion} and \ref{sec:conclusion} conclude the paper with a discussion and summary of the obtained results.

\section{Channel Model}
\label{sec:channel}
\begin{figure}[t!]
\centering
 \epsfig{file=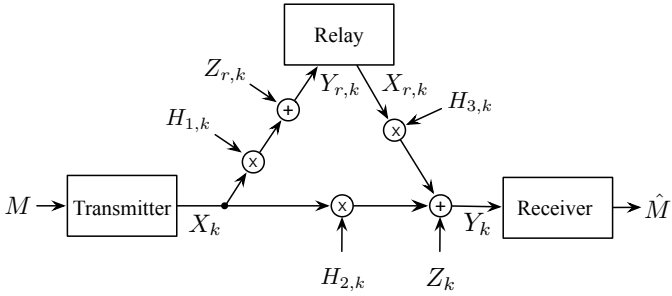, width=0.49\textwidth}
 \caption{The fading relay channel.}
 \label{fig1}
\end{figure}
The fading relay channel has three terminals (see Figure~\ref{fig1}): the transmitter, the receiver, and the relay. The message $M$ is uniformly distributed over the set \mbox{$\set{M}=\{1,\ldots,|\set{M}|\}$}, where $|\set{M}|$ is a positive integer.  The transmitter maps $M$ to the length-$n$ sequence $X_1^n$, where we use the short form $A_m^n$ to denote the sequence $A_m,\ldots,A_n$. Thus, we have $X_1^n=\phi_n\bigl(M\bigr)$ for some mapping $\phi_n\colon \set{M} \to \Complex^n$ (where $\Complex$ denotes the set of complex numbers). At each time instant $k\in\Integers$ (where $\Integers$ denotes the set of integers) the relay observes $Y_{r,k}\in\Complex$ and emits the symbol $X_{r,k}\in\Complex$, which is a function of the previously received symbols $Y_{r,1}^{k-1}$, i.e., $X_{r,k}=\varphi_{n,k}\bigl(Y_{r,1}^{k-1}\bigr)$, $k=1,\ldots,n$, for some mapping $\varphi_{n,k}\colon \Complex^{k-1} \to \Complex$. The receiver observes the channel output symbols $Y_1^n$ from which it guesses $M$. The receiver's guess is $\hat{M} = \psi_n\bigl(Y_1^n\bigr)$ for some mapping $\psi\colon \Complex^n \to \set{M}$.

The time-$k$ channel outputs $Y_{r,k}$ and $Y_k$ corresponding to the channel inputs $x_{k}$ and $x_{r,k}$ are given by
\begin{IEEEeqnarray}{rCll}
  Y_{r,k} & = & H_{1,k}x_k + Z_{r,k}, \quad & k\in\Integers\\
  Y_{k} & = & H_{2,k}x_k + H_{3,k}x_{r,k} + Z_k, \quad & k\in\Integers. 
\end{IEEEeqnarray}
Here $\{H_{1,k},\, k\in\Integers\}$, $\{H_{2,k},\, k\in\Integers\}$, $\{H_{3,k},\, k\in\Integers\}$, $\{Z_{r,k},\, k\in\Integers\}$, and $\{Z_{k},\, k\in\Integers\}$ are stationary and ergodic stochastic processes that take values in $\Complex$ and that are independent of each other. Furthermore, $\{H_{1,k},\,k\in\Integers\}$ and $\{Z_{r,k},\,k\in\Integers\}$ are of a joint law that does not depend on $\{x_{k},\,k\in\Integers\}$; and $\{H_{2,k},\,k\in\Integers\}$, $\{H_{3,k},\,k\in\Integers\}$, and $\{Z_{k},\,k\in\Integers\}$ are of a joint law that does not depend on $\{(x_{k},x_{r,k}),\,k\in\Integers\}$.

The additive noise terms $\{Z_k,\,k\in\Integers\}$ and $\{Z_{r,k},\,k\in\Integers\}$ are both sequences of independent and identically distributed (i.i.d.), zero-mean, circularly-symmetric, complex Gaussian random variables of variance $\sigma^2$. The multiplicative (fading) noise terms $\{H_{1,k},\,k\in\Integers\}$, $\{H_{2,k},\,k\in\Integers\}$, and $\{H_{3,k},\,k\in\Integers\}$ are zero-mean, unit-variance, stationary and ergodic, circularly-symmetric, complex Gaussian processes with the respective spectral distribution functions $F_1(\cdot)$, $F_2(\cdot)$, and $F_3(\cdot)$. Thus, $F_{\ell}(\cdot)$, $\ell=1,2,3$ are bounded and nondecreasing functions on $[-1/2,1/2]$ satisfying
\begin{equation}
\label{eq:spectraldist}
  \E{H_{\ell,k+m}\conj{H}_{\ell,k}} = \int_{-1/2}^{1/2} e^{\ii 2\pi m\lambda}\d F_{\ell}(\lambda), \quad \ell=1,2,3
\end{equation}
where $\ii=\sqrt{-1}$. We consider a \emph{noncoherent} channel model where the transmitter, receiver, and relay are not aware of the realizations of the fading processes $\{H_{\ell,k},\,k\in\Integers\}$, $\ell=1,2,3$ but only of their joint law. We assume that the fading processes $\{H_{\ell,k},\,k\in\Integers\}$, $\ell=1,2,3$ are regular in the sense that they satisfy
\begin{equation}
\label{eq:regularity}
\int_{-1/2}^{1/2} \log F'_{\ell}(\lambda)\d\lambda > -\infty, \quad \ell=1,2,3
\end{equation}
where $F'_{\ell}(\cdot)$ denotes the derivative of $F_{\ell}(\cdot)$. Note that $F_{\ell}(\cdot)$ is monotonic and almost everywhere differentiable. At the discontinuity points of $F_{\ell}(\cdot)$ the derivative $F'_{\ell}(\cdot)$ is undefined. If $F_{\ell}(\cdot)$ is absolutely continuous with respect to the Lebesgue measure on $[-1/2,1/2]$, then we shall refer to $F'_{\ell}(\cdot)$ as the \emph{spectral density} of $\{H_{\ell,k},\,k\in\Integers\}$. 

The assumption \eqref{eq:regularity} implies that the mean-square error in predicting $H_{\ell,0}$ from $H_{\ell,-1},H_{\ell,-2},\ldots$ is given by \cite{doob90}
\begin{equation}
\label{eq:epsregular}
\eps^2_{\ell} = \exp\biggl(\int_{-1/2}^{1/2} \log F'_{\ell}(\lambda)\d\lambda\biggr)
\end{equation}
and is strictly positive. We also have $\eps^2_{\ell}\leq 1$, $\ell=1,2,3$ since we take $\{H_{\ell,k},\,k\in\Integers\}$ to have unit variance. It follows from \eqref{eq:regularity} and \eqref{eq:epsregular} that a regular process cannot be predicted perfectly from its infinite past.

Processes with a bandlimited spectral density do not satisfy \eqref{eq:regularity} and are therefore \emph{nonregular}. (See \cite{kochkramer05} for results concerning the high-SNR capacity of noncoherent fading relay channels with nonregular fading.) Such processes can be predicted perfectly from their infinite past \cite[Sec.~10.1.5]{priestley81}, which leads to a dilemma. On the one hand, channel models based on the physics of the channel---such as Jakes' model \cite{jakes75}---suggest that practical fading processes have a bandlimited spectral density. On the other hand, such fading processes can be predicted perfectly from their infinite past, which seems unrealistic. Nevertheless, we believe that both regular and nonregular fading models are relevant, and that the answer to the question which one is more accurate depends on the SNR, bandwidth, and the channel statistics.

While fading processes with a bandlimited spectral density exhibit a direct relation between the spectral density's bandwidth and the Doppler spread of the channel, such a relation is less obvious for regular fading processes. A relationship between the spectral distribution function $F_{\ell}(\cdot)$ and the coherence time (which is inversely proportional to the Doppler spread) can be established by defining the coherence time as the time over which the autocorrelation function  $m\mapsto\Exp[H_{\ell,k+m}\conj{H}_{\ell,k}]$ is above, say, $1/2$ of its value at $0$, and by using \eqref{eq:spectraldist} to relate the autocorrelation function to $F_{\ell}(\cdot)$; see also \cite[Sec.~II]{etkintse06}. 

We assume that the channel inputs $X_k$ and $X_{r,k}$ satisfy \emph{peak-power constraints}, i.e., with probability one we have
\begin{IEEEeqnarray}{lCl}
  |X_k|^2 & \leq & \const{A}^2, \quad k\in\Integers  \label{eq:peakpower1} \\
  |X_{r,k}|^2 & \leq & \const{A}_r^2, \quad k\in\Integers   \label{eq:peakpower2}
\end{IEEEeqnarray}
for some positive real $\const{A}$ and $\const{A}_r$. We define
\begin{equation}
  \label{eq:alpha}
  \rho \triangleq \frac{\const{A}}{\const{A}_r}
\end{equation}
and
\begin{equation}
  \SNR \triangleq \frac{\const{A}^2}{\sigma^2}.
\end{equation}
Note that the main results presented in Section~\ref{sec:results} continue to hold if the peak-power constraints are replaced by average-power constraints.

\section{Channel Capacity and Fading Number}
\label{sec:capacity}
A \emph{rate} $R(\SNR,\rho)$ (in nats per channel use) is said to be \emph{achievable} if for every $\delta > 0$ there exists an $n>0$ and mappings $\phi_n$, $\bigl(\varphi_{n,1},\ldots,\varphi_{n,n}\bigr)$, and $\psi_n$ satisfying \eqref{eq:peakpower1} and \eqref{eq:peakpower2} such that
\begin{equation*}
  \frac{\log|\set{M}|}{n} > R(\SNR,\rho) - \delta
\end{equation*}
and such that the error probability satisfies $\Prob\bigl(\hat{M}\neq M\bigr)<\delta$. (Here $\log(\cdot)$ denotes the natural logarithm function.) The \emph{capacity} $C(\SNR,\rho)$ is defined as the supremum of all achievable rates.
We will focus on the asymptotic behavior of capacity at high SNR. For convenience, we assume that $\rho$ does not depend on SNR. This corresponds to the case where the available power at the relay is of the same order as the available power at the transmitter.

Let $C(\SNR)$ (without the parameter $\rho$) denote the capacity of the point-to-point channel. Lapidoth and Moser demonstrated that for regular fading, we have \cite[Th.~4.2]{lapidothmoser03_3}
\begin{equation}
\label{eq:loglogSNR}
  \varlimsup_{\SNR\to\infty} \bigl\{C(\SNR)-\log\log\SNR\bigr\} < \infty
\end{equation}
where $\varlimsup$ denotes the \emph{limit superior}.
They defined the \emph{fading number} $\chi$ as \cite[Def.~4.6]{lapidothmoser03_3}
\begin{equation}
\label{eq:fadingnumberdef}
  \chi \triangleq \varlimsup_{\SNR\to\infty} \bigl\{C(\SNR)-\log\log\SNR\bigr\}
\end{equation}
and computed its value for different fading channels. For instance, when the fading is a zero-mean, unit-variance, circularly-symmetric, complex Gaussian process with spectral distribution function $F(\cdot)$, the fading number is \cite[Cor.~4.42]{lapidothmoser03_3}
\begin{equation}
\label{eq:fadingnumberp2p}
  \chi = -1 - \gamma + \log\frac{1}{\eps^2}
\end{equation}
where $\gamma\approx 0.577$ denotes Euler's constant, and where $\eps^2$ denotes the mean-square error in predicting the present fading from its infinite past, given by \eqref{eq:epsregular}.

It follows from \eqref{eq:fadingnumberdef} that, at high SNR, the capacity can be approximated as
\begin{equation}
\label{eq:highSNRapprox}
C(\SNR) \approx \log\log\SNR + \chi.
\end{equation}
Thus communication is very power-inefficient at high SNR, since one should expect to square the $\SNR$ for every additional bit per channel use. For example, $\log\log\SNR$ is between $2.1$ and $3$ for $\SNR \in [30\text{dB}, 80\text{dB}]$ and the capacity can be approximated as
\begin{equation*}
2.1+\chi \lessapprox C(\SNR) \lessapprox 3 + \chi, \quad \SNR\in[30\text{dB},80\text{dB}].
\end{equation*}
This gives rise to the rule of thumb that a system operating at \emph{rates} considerably larger than $2+\chi$ operates in the high-SNR regime and is very power-inefficient \cite{lapidothmoser06}, see also \cite{lapidoth05,lapidoth03_2}. The fading number can therefore be viewed as an indication of the maximal rate up to which power-efficient communication is feasible.
However, it should be noted that it is difficult to determine the SNR at which this happens. Indeed, the fading number of zero-mean Gaussian fading channels depends on the spectral distribution function $F(\cdot)$ only via the mean-square error $\eps^2$ in predicting the present fading from its past, whereas the SNR at which \eqref{eq:highSNRapprox} becomes accurate is sensitive to the shape of $F(\cdot)$ \cite{lapidoth05,lapidoth03_2,kochlapidoth05_1}.\footnote{The SNR at which \eqref{eq:highSNRapprox} becomes accurate depends on the so-called \emph{noisy prediction error} \cite[Eq.~(11)]{lapidoth05}.}

Lapidoth and Moser prove \eqref{eq:loglogSNR} for multiple-input multiple-output noncoherent regular-fading channels \cite{lapidothmoser03_3}. It therefore follows from the \emph{max-flow min-cut upper bound} \cite[Th.~14.7.1]{coverthomas91} that at high SNR the capacity of the noncoherent fading relay channel also grows double-logarithmically with the SNR, implying that the power-inefficiency of communication at high SNR cannot be avoided by adding a relay. Nevertheless, the relay can increase the fading number, thereby increasing the maximal rate up to which power-efficient communication is feasible. In the following section, we present upper and lower bounds on the fading number of fading relay channels. They indicate by how much (if at all) adding a relay pushes the power-inefficient regime further away.

\section{Main Results}
\label{sec:results}
We define the fading number $\chi$ of the fading relay channel as in \eqref{eq:fadingnumberdef}, but with $C(\SNR)$ replaced by $C(\SNR,\rho)$. Note that our \emph{bounds} on $\chi$ do not depend on $\rho$, which is a consequence of the slow growth of the $\log\log(\cdot)$-function:
\begin{equation*}
\lim_{\const{A}\to\infty}\bigl\{\log\log\bigl(\rho\const{A}^2\bigr) - \log\log\const{A}^2\bigr\} = 0, \quad \rho>0.
\end{equation*}
We therefore do not make the dependence of $\chi$ on $\rho$ explicit in our notation.
An upper bound on the fading number follows from the \emph{max-flow min-cut upper bound}.
\begin{theorem}[Upper bound]
\label{thm:upperbound}
Consider the fading relay channel described in Section~\ref{sec:channel}. Assume that $\rho$ is independent of the SNR. Then we have
\begin{IEEEeqnarray}{lCll}
\chi & \leq & \min\Biggl\{& -2\gamma+\log\frac{1}{\eps^2_1}+\log\frac{1}{\eps^2_2},\nonumber\\
& & & \,\max\biggl\{-1-\gamma+\log\frac{1}{\eps^2_{2}},-1-\gamma+\log\frac{1}{\eps^2_3}\biggr\}\Biggr\} \IEEEeqnarraynumspace \label{eq:thmUB}
\end{IEEEeqnarray}
which for $\eps^2_2 \leq \eps^2_3$ becomes
\begin{equation}
\label{eq:thmUB2}
\chi \leq -1 -\gamma+\log\frac{1}{\eps^2_2}.
\end{equation}
The prediction errors $\eps^2_{\ell}$, $\ell=1,2,3$ are defined in \eqref{eq:epsregular}.
\end{theorem}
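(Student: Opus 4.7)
The plan is to invoke the cut-set (max-flow min-cut) upper bound for relay channels, adapted to our stationary-ergodic fading setting via its multi-letter form:
\begin{equation*}
C(\SNR,\rho) \leq \lim_{n\to\infty}\frac{1}{n}\sup\min\bigl\{I(X_1^n; Y_{r,1}^n, Y_1^n \mid X_{r,1}^n),\ I(X_1^n, X_{r,1}^n; Y_1^n)\bigr\},
\end{equation*}
where the supremum is over joint input laws satisfying \eqref{eq:peakpower1}--\eqref{eq:peakpower2}. Using $\sup\min \leq \min\sup$, the fading number decomposes as $\chi \leq \min\{\chi_{\mathrm b},\chi_{\mathrm m}\}$, where $\chi_{\mathrm b}$ is the fading number of the \emph{broadcast-cut} sub-channel (the source alone against the relay-destination coalition) and $\chi_{\mathrm m}$ that of the \emph{multiple-access-cut} sub-channel (the source-relay coalition against the destination). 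The task is then to identify each sub-channel with one already analyzed by Lapidoth and Moser and to read off the appropriate constant.

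For the broadcast cut, conditioning on $X_r^n$ casts $X$ as the input of a $1\times 2$ noncoherent SIMO channel with outputs $Y_{r,k}=H_{1,k}X_k+Z_{r,k}$ and $Y_k=H_{2,k}X_k + H_{3,k}X_{r,k}+Z_k$. Mutual independence of the five underlying processes makes $Y_r$ and $Y$ conditionally independent given $(X,X_r)$; moreover the extra term $H_{3,k}X_{r,k}$ appears as additive noise of variance bounded by $\const{A}_r^2$, which does not affect the $\log\log\SNR$ leading asymptotics. Applying the Lapidoth--Moser high-SNR analysis \cite[Cor.~4.42]{lapidothmoser03_3} to this SIMO Gaussian-fading channel with independent components of prediction errors $\eps_1^2$ and $\eps_2^2$ yields $\chi_{\mathrm b}\leq -2\gamma+\log(1/\eps_1^2)+\log(1/\eps_2^2)$, the first entry of \eqref{eq:thmUB}.

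For the multiple-access cut, the joint input $(X,X_r)$ drives $Y = H_{2}X + H_{3}X_r + Z$, a $2\times 1$ noncoherent MISO channel with independent Gaussian fading paths of prediction errors $\eps_2^2$ and $\eps_3^2$. The Lapidoth--Moser treatment of such noncoherent MISO channels shows that the fading number is attained by concentrating all power on the single antenna whose fading is more predictable, yielding $\chi_{\mathrm m}\leq \max\{-1-\gamma+\log(1/\eps_2^2),\,-1-\gamma+\log(1/\eps_3^2)\}$, the second entry of \eqref{eq:thmUB}. Taking $\min\{\chi_{\mathrm b},\chi_{\mathrm m}\}$ establishes \eqref{eq:thmUB}. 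The specialization \eqref{eq:thmUB2} follows because, when $\eps_2^2\leq\eps_3^2$, the MISO bound becomes $-1-\gamma+\log(1/\eps_2^2)$, and this is at least as tight as the SIMO bound since $\eps_1^2\leq 1$ forces $\log(1/\eps_1^2)\geq 0>\gamma-1$.

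The main obstacle is making the passage from the $n$-block cut-set bound to a clean per-letter fading-number statement rigorous under stationary-ergodic (not i.i.d.) fading, which requires the memory-aware form of the Lapidoth--Moser analysis based on the infinite-past prediction-error characterization \eqref{eq:epsregular}, together with a careful check that the $H_{3,k}X_{r,k}$ contribution in the broadcast cut---whose instantaneous variance scales with $\SNR$---cannot inflate the $\log\log\SNR$ leading term of the induced SIMO sub-channel.
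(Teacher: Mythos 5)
Your two-cut architecture, and the two constants you extract from it, coincide with the paper's (and your MISO cut and the specialization to $\eps_2^2\le\eps_3^2$ are fine), but your starting point contains a genuine gap. You invoke the cut-set bound with the \emph{conditional} broadcast-cut term $I\bigl(X_1^n;Y_{r,1}^n,Y_1^n\bigm|X_{r,1}^n\bigr)$ "adapted to the stationary-ergodic setting via its multi-letter form." That adaptation is precisely what needs proof, and it does not follow from the standard max-flow min-cut theorem: the textbook proof of the conditional form uses memorylessness to collapse the chain-rule terms to single letters, and that collapse is unavailable here. Worse, for this channel $X_{r,1}^n$ is a deterministic causal function of $Y_{r,1}^{n-1}$, so at the code-induced distribution
\begin{equation*}
I\bigl(X_1^n;Y_1^n,Y_{r,1}^n\bigr) \;=\; I\bigl(X_1^n;X_{r,1}^n\bigr) \;+\; I\bigl(X_1^n;Y_1^n,Y_{r,1}^n\bigm|X_{r,1}^n\bigr),
\end{equation*}
and $I\bigl(X_1^n;X_{r,1}^n\bigr)$ is typically $\Theta(n)$, since the relay's transmissions carry the message. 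Hence the conditional quantity, evaluated at the code distribution, undershoots the quantity that actually dominates $I\bigl(M;Y_1^n\bigr)$; to use it you would have to show that the supremum over \emph{all} admissible joint input laws still dominates capacity, which you do not do.

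The paper sidesteps this entirely: it uses only the unconditional bound $I\bigl(X_1^n;Y_1^n\bigr)\le I\bigl(X_1^n;Y_{r,1}^n,Y_1^n\bigr)$, then conditions on $H_{3,1}^n$ (legitimate for an upper bound because $H_{3,1}^n$ is independent of $X_1^n$), and observes that inside the chain-rule expansion the conditioning variables $\bigl(H_{3,1}^n,Y_{r,1}^{k-1}\bigr)$ determine $X_{r,1}^k$, so the interference $H_{3,\ell}X_{r,\ell}$ can be subtracted from $Y_\ell$, i.e., one may set $H_{3,1}^k=0$. This causality-plus-$H_3$-conditioning maneuver is the heart of the converse and is exactly what your proposal replaces with an unproven assertion. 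Two secondary points: (i) even granting your conditioning on $X_{r,1}^n$, arguing that the term $H_{3,k}X_{r,k}$ "does not affect the $\log\log\SNR$ leading asymptotics" is insufficient, because the theorem concerns the additive constant; what is needed (and true, given that conditioning) is a degradation/data-processing argument showing that adding noise independent of the input cannot increase the mutual information; (ii) your citation of \cite[Cor.~4.42]{lapidothmoser03_3} (a SISO result) for the SIMO-with-memory constant is off---the paper instead bounds this cut by the \emph{memoryless} SIMO capacity \cite[Cor.~4.32]{lapidothmoser03_3}, which supplies $-2\gamma$, plus the memory gain $\log(1/\eps_1^2)+\log(1/\eps_2^2)$ obtained from data processing, Ces\`aro's theorem, and the Gaussian prediction-error characterization, so no exact SIMO-with-memory fading number is ever required.
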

\begin{proof}
See Section~\ref{sec:upperbound}. Equation \eqref{eq:thmUB2} follows because $\eps^2_1\leq 1$ and because $-2\gamma > -1-\gamma$.
\end{proof}

Note that $-1-\gamma-\log \eps_2^2$ is the fading number of the channel between the transmitter and the receiver, while $-1-\gamma-\log\eps_3^2$ is the fading number of the channel between the relay and the receiver  \eqref{eq:fadingnumberp2p}. Thus, denoting the fading number of the former channel by $\chi_2$ and denoting the fading number of the latter channel by $\chi_3$, the upper bound \eqref{eq:thmUB} implies\begin{equation}
\label{eq:fadingnumberMISO}
\chi \leq \max\bigl\{\chi_2,\chi_3\bigr\}.
\end{equation}
The right-hand side (RHS) of \eqref{eq:fadingnumberMISO} is the fading number of a multiple-input single-output (MISO) fading channel with two transmit antennas, where the fading processes corresponding to the different antennas are independent, zero-mean, circularly-symmetric, complex Gaussian processes of spectral distribution function $F_{\ell}(\cdot)$, $\ell=2,3$ \cite{kochlapidoth05_3}, see also \cite{kochlapidoth05_1,koch04}. Thus, the fading number of the relay channel is upper-bounded by the fading number of the MISO channel that arises when the transmitter and the relay can cooperate. In the following, we shall refer to this channel as the \emph{TRC-MISO channel} (``TRC" stands for ``transmitter-relay cooperation").

It follows from \eqref{eq:fadingnumberMISO} that if the fading number of the channel between the transmitter and the receiver is larger than the fading number of the channel between the relay and the receiver, i.e., $\chi_2 \geq \chi_3$, then at high SNR the relay does not increase capacity:
\begin{corollary}
\label{cor:direct}
Let the fading processes $\{H_{2,k},\,k\in\Integers\}$ and $\{H_{3,k},\,k\in\Integers\}$ satisfy
\begin{equation*}
\eps^2_2 \leq \eps^2_3.
\end{equation*}
Then we have
\begin{equation}
\chi = -1-\gamma+\log\frac{1}{\eps^2_2}.
\end{equation}
\end{corollary}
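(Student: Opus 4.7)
The plan is to establish Corollary~\ref{cor:direct} by matching an upper bound with a lower bound. The upper bound is essentially already in hand: under the hypothesis $\eps_2^2 \leq \eps_3^2$, Theorem~\ref{thm:upperbound} directly yields
\begin{equation*}
\chi \leq -1-\gamma+\log\frac{1}{\eps_2^2}
\end{equation*}
via \eqref{eq:thmUB2}. So the entire task reduces to producing a matching lower bound.

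For the lower bound, the natural idea is that the relay can do no harm because the transmitter can always choose to operate as if the relay were absent: the relay is a cooperating terminal, so the encoders $\varphi_{n,k}$ are free to be chosen. In particular, selecting $\varphi_{n,k} \equiv 0$ for all $k$ (i.e., the relay stays silent) reduces the channel to
\begin{equation*}
Y_k = H_{2,k} x_k + Z_k, \quad k \in \Integers,
\end{equation*}
which is exactly the point-to-point noncoherent regular-fading channel between transmitter and receiver with fading process $\{H_{2,k}\}$. Any rate achievable over this point-to-point channel is therefore achievable over the relay channel (with the same peak-power constraint $\const{A}^2$ on the input), so $C(\SNR,\rho) \geq C_{\text{p2p}}(\SNR)$, where $C_{\text{p2p}}$ is the capacity of the induced direct link.

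Applying the Lapidoth--Moser formula \eqref{eq:fadingnumberp2p} to the direct link with spectral distribution $F_2(\cdot)$ and prediction error $\eps_2^2$ gives
\begin{equation*}
\varliminf_{\SNR\to\infty}\bigl\{C_{\text{p2p}}(\SNR) - \log\log\SNR\bigr\} = -1-\gamma+\log\frac{1}{\eps_2^2},
\end{equation*}
and hence $\chi \geq -1-\gamma+\log(1/\eps_2^2)$. Combined with the upper bound, this yields the claimed equality. I do not anticipate any real obstacle: the lower bound is a one-line observation (turn the relay off) and the upper bound is an immediate specialization of the already-proved Theorem~\ref{thm:upperbound}; the only subtlety is being explicit that silencing the relay is a legitimate choice of relay function under the channel model of Section~\ref{sec:channel}, so that the point-to-point fading-number result \eqref{eq:fadingnumberp2p} applies verbatim to the induced channel.
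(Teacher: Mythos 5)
Your proof is correct and is essentially the paper's own argument: the upper bound is the specialization \eqref{eq:thmUB2} of Theorem~\ref{thm:upperbound}, and your matching lower bound---silencing the relay and invoking the point-to-point fading number \eqref{eq:fadingnumberp2p} for the induced channel with fading $\{H_{2,k},\,k\in\Integers\}$---is exactly how the paper obtains the first term of the maximum in Theorem~\ref{thm:lowerbound1}, from which Corollary~\ref{cor:direct} follows.
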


Using a \emph{decode-and-forward} strategy \cite{coverelgamal79}, the following rates are achievable.
\begin{theorem}[Lower bound]
\label{thm:lowerbound1}
Consider the fading relay channel described in Section~\ref{sec:channel}. Assume that $\rho$ is independent of the SNR. Then we have
\begin{IEEEeqnarray}{lCll}
\chi & \geq & \max\Biggl\{& -1-\gamma+\log\frac{1}{\eps^2_2}, \nonumber\\
& & & \,{} -1-\gamma+\log\frac{1}{\eps^2_3}-\log\biggl(1+\frac{\eps^2_1}{\eps^2_3}\biggr)\Biggr\}.\label{eq:thmlowerbound1}
\end{IEEEeqnarray}
\end{theorem}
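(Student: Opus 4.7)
The lower bound in Theorem~\ref{thm:lowerbound1} consists of two expressions joined by a maximum, and I would prove each separately by exhibiting an achievable coding scheme.

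The first expression, $-1-\gamma+\log(1/\eps_2^2)$, is obtained by simply ignoring the relay. Setting $X_{r,k}\equiv 0$ reduces the relay channel to the direct point-to-point noncoherent fading channel $Y_k = H_{2,k} X_k + Z_k$, whose fading number equals $-1-\gamma+\log(1/\eps_2^2)$ by \eqref{eq:fadingnumberp2p} applied with prediction error $\eps_2^2$. Any rate achievable on the reduced channel is trivially achievable on the relay channel, so this yields the first term of the lower bound.

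The second expression, which after combining logarithms equals $-1-\gamma+\log(1/(\eps_1^2+\eps_3^2))$, I would obtain from the Cover--El Gamal decode-and-forward (DF) inner bound \cite{coverelgamal79}, adapted to channels with memory as in \cite{kramergastpargupta05}:
\begin{equation*}
R_{\textnormal{DF}} = \sup \min\bigl\{I(X; Y_r \,|\, X_r),\ I(X, X_r; Y)\bigr\},
\end{equation*}
the supremum ranging over stationary joint input processes subject to the peak-power constraints \eqref{eq:peakpower1}--\eqref{eq:peakpower2}. The scheme is block-Markov superposition coding: the relay transmits a constant-amplitude cooperative signal $X_{r,k}$ encoding the previously-decoded message, and the transmitter sends $X_k = \sqrt{1-\beta}\,\tilde{U}_k + \sqrt{\beta}\,\rho\,X_{r,k}$, where $\{\tilde{U}_k\}$ is a peak-constrained i.i.d.\ information-bearing stream independent of $\{X_{r,k}\}$ and $\beta\in[0,1]$ is a power-splitting parameter to be optimized.

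The high-SNR evaluation of $R_{\textnormal{DF}}$ then proceeds in two parts, adapting the techniques of Lapidoth and Moser \cite{lapidothmoser03_3}. The conditional information $I(X; Y_r \,|\, X_r) = I(\tilde{U}; Y_r \,|\, X_r)$ is, up to a known-amplitude nuisance term, the noncoherent fading channel with gain $\{H_{1,k}\}$, whose fading number is $-1-\gamma+\log(1/\eps_1^2)$. The joint information $I(X, X_r; Y)$, after substituting the input structure so that $Y_k = \sqrt{1-\beta}\,H_{2,k}\,\tilde{U}_k + (\sqrt{\beta}\,\rho\,H_{2,k} + H_{3,k})\,X_{r,k} + Z_k$, corresponds to a noncoherent MISO-type channel in which the receiver, using both the composite fading observation $\sqrt{\beta}\,\rho\,H_{2,k}+H_{3,k}$ on $X_{r,k}$ and the side information on $X_{r,k}$ inherited through the block-Markov dependence on the T-to-R link, forms its best joint estimate of $X_{r,k}$. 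Balancing $\beta$ so that the effective prediction error of $X_{r,k}$ becomes $\eps_1^2+\eps_3^2$ yields the target fading number $-1-\gamma+\log(1/(\eps_1^2+\eps_3^2))$, which may be rewritten as $-1-\gamma+\log(1/\eps_3^2)-\log(1+\eps_1^2/\eps_3^2)$ exactly as stated.

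The principal obstacle will be the rigorous asymptotic analysis of $I(X, X_r; Y)$: the duality-based upper bounds and Szeg\H{o}-type prediction formulas in \cite{lapidothmoser03_3} need to be extended to a channel with \emph{two independent} regular fading processes $\{H_{2,k}\}$ and $\{H_{3,k}\}$ acting on correlated inputs, and one must identify the sum $\eps_1^2+\eps_3^2$ as the effective residual error after MMSE-combining the two asymptotically independent ``views'' of the cooperatively-transmitted symbol (one from the R-to-Rx link with error variance $\eps_3^2$, and one inherited from the T-to-R link with error variance $\eps_1^2$). Once this identification is made precise, the optimization over $\beta$ and the verification of the peak-power constraints is routine bookkeeping.
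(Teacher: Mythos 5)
Your first term is handled exactly as in the paper (turn the relay off and invoke the point-to-point fading number), and your DF framework---the rate $\min\{I(X;Y_r|X_r),\,I(X,X_r;Y)\}$ extended to channels with memory---is also the paper's starting point (its Proposition~\ref{prop:decodeforward}). The gap is in how you evaluate the second constraint, and it is fatal to the scheme as you describe it.

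Your input structure uses \emph{linear} power splitting: $X_k=\sqrt{1-\beta}\,\tilde{U}_k+\sqrt{\beta}\,\rho X_{r,k}$ with $\tilde{U}_k$ peak-constrained at a constant fraction of $\const{A}^2$. In the joint constraint $I(X,X_r;Y)$, the term $H_{2,k}X_k$ acts as noise with unknown phase and conditional variance $|X_k|^2$, which under your scheme scales like $\const{A}^2$. The ratio of this interference to the relay's signal power $|X_{r,k}|^2\leq\const{A}_r^2$ is then bounded away from zero, so the effective SNR of the relay-to-receiver link stays \emph{bounded} as $\SNR\to\infty$; consequently $I(X,X_r;Y)$ is bounded and $I(X,X_r;Y)-\log\log\SNR\to-\infty$. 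Constant power fractions are invisible at the fading-number scale (capacity grows like $\log\log\SNR$), so the parameter $\beta$ cannot trade the two constraints against each other. The paper's key idea, which your proposal is missing, is to split power in the \emph{exponent} with \emph{independent} inputs: $\log|X_k|^2\sim\Uniform{[\log\log\const{A}^2,\,\alpha\log\const{A}^2]}$ and $\log|X_{r,k}|^2\sim\Uniform{[\beta\log\const{A}_r^2,\,\log\const{A}_r^2]}$ with $0<\alpha<\beta<1$ (see \eqref{eq:DFindep1}--\eqref{eq:DFindep2}), so that $|X_k|^2/|X_{r,k}|^2\leq\const{A}^{2\alpha}/\const{A}_r^{2\beta}\to 0$ polynomially while the two links retain $\log\log$ growth with penalties $\log\alpha$ and $\log(1-\beta)$ respectively.

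Your explanation of where $\eps_1^2+\eps_3^2$ comes from is also mechanistically wrong. In block-Markov DF the receiver knows $X_{r,k}$ exactly once the previous block is decoded, so there is no ``MMSE-combining of two views'' of $X_{r,k}$, and $\eps_1^2$ never enters the receiver's estimation problem at all: it enters only through the relay's decoding constraint $I(X;Y_r|X_r)$. (Combining independent observations would in any case \emph{reduce} the error below $\min\{\eps_1^2,\eps_3^2\}$, not produce the sum.) The sum arises purely from optimization: the two constraints evaluate to $-1-\gamma-\log\eps_1^2+\log\alpha$ and $-1-\gamma-\log\eps_3^2+\log(1-\beta)$; letting $\beta\downarrow\alpha$ and equalizing gives $\alpha=\eps_1^2/(\eps_1^2+\eps_3^2)$ and the common value $-1-\gamma-\log(\eps_1^2+\eps_3^2)$, which is \eqref{eq:thmlowerbound1}. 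Finally, the coherent component $\sqrt{\beta}\,\rho X_{r,k}$ in your superposition is counterproductive here: without phase knowledge the composite fading $\sqrt{\beta}\rho H_{2,k}+H_{3,k}$ has prediction error at least $\beta\rho^2\eps_2^2+\eps_3^2\geq\eps_3^2$, so coherent combining only inflates the residual error the receiver must tolerate.
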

\begin{proof}
See Section~\ref{sec:lowerbound}.
\end{proof}

For $\eps_2^2 > \eps_1^2+\eps_3^2$, the RHS of \eqref{eq:thmlowerbound1} is strictly larger than \[-1-\gamma+ \log\frac{1}{\eps_2^2}.\] In this case, using a cooperative communication strategy rather than turning the relay off increases the fading number. We thus have the following corollary.
\begin{corollary}
\label{cor:cooperative}
Let the fading processes $\{H_{1,k}\,k\in\Integers\}$, $\{H_{2,k},\,k\in\Integers\}$, and $\{H_{3,k},\,k\in\Integers\}$ satisfy
\begin{equation*}
\eps_2^2 > \eps^2_{1} + \eps^2_3.
\end{equation*}
Then we have
\begin{equation}
\chi > -1 -\gamma + \log\frac{1}{\eps^2_2}.
\end{equation}
\end{corollary}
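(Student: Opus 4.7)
The plan is to derive the corollary directly from the decode-and-forward lower bound in Theorem~\ref{thm:lowerbound1}. Since that theorem gives $\chi$ as the maximum of two quantities and the first of those quantities equals $-1-\gamma+\log(1/\eps^2_2)$, it suffices to show that under the hypothesis $\eps^2_2 > \eps^2_1+\eps^2_3$, the second quantity strictly exceeds $-1-\gamma+\log(1/\eps^2_2)$.

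The first step is a short algebraic simplification. Combining the two logarithms in the second entry of the maximum in \eqref{eq:thmlowerbound1} gives
\[
  -1-\gamma+\log\frac{1}{\eps^2_3}-\log\biggl(1+\frac{\eps^2_1}{\eps^2_3}\biggr) = -1-\gamma-\log\bigl(\eps^2_1+\eps^2_3\bigr),
\]
so the second entry of the maximum reduces to a single comparison against the ``effective prediction error'' $\eps^2_1+\eps^2_3$. The second step is to invoke strict monotonicity of $\log(\cdot)$: the hypothesis $\eps^2_1+\eps^2_3<\eps^2_2$ yields $-\log(\eps^2_1+\eps^2_3)>-\log\eps^2_2$, so the simplified second entry strictly exceeds $-1-\gamma+\log(1/\eps^2_2)$. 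Applying Theorem~\ref{thm:lowerbound1} then delivers $\chi > -1-\gamma+\log(1/\eps^2_2)$, as desired.

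There is no real obstacle: the entire content of the corollary is the identity above, which recasts the two logarithmic terms produced by the decode-and-forward scheme as a single comparison between $\eps^2_2$ and $\eps^2_1+\eps^2_3$. The only care required is to preserve strictness, which is immediate from strict monotonicity of the logarithm.
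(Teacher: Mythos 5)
Your proposal is correct and follows essentially the same route as the paper: the paper justifies this corollary by observing, just before stating it, that for $\eps_2^2 > \eps_1^2+\eps_3^2$ the right-hand side of \eqref{eq:thmlowerbound1} strictly exceeds $-1-\gamma+\log\frac{1}{\eps_2^2}$, which is exactly your combination of the identity $\log\frac{1}{\eps_3^2}-\log\bigl(1+\frac{\eps_1^2}{\eps_3^2}\bigr)=-\log\bigl(\eps_1^2+\eps_3^2\bigr)$ with strict monotonicity of the logarithm. Your write-up merely makes explicit the algebra the paper leaves implicit.
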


Corollaries~\ref{cor:direct} and \ref{cor:cooperative} demonstrate that direct communication from the transmitter to the receiver (i.e., turning the relay off) is optimal with respect to the fading number if the prediction error $\eps_2^2$ corresponding to the channel between the transmitter and receiver is not larger than the prediction error $\eps_3^2$ corresponding to the channel between the relay and the receiver. In contrast, cooperative communication is beneficial with respect to the fading number if the prediction error $\eps^2_2$ corresponding to the channel between the transmitter and receiver is larger than the sum of prediction errors $\eps_1^2+\eps_3^2$ corresponding to the channels from the transmitter to the relay and from the relay to the receiver. It is unknown whether cooperative communication is beneficial if the prediction errors satisfy $\eps_3^2 < \eps_2^2 \leq \eps_1^2+\eps_3^2$.

Denoting the fading number corresponding to the fading $\{H_{\ell,k},\,k\in\Integers\}$ by $\chi_{\ell}$, the lower bound \eqref{eq:thmlowerbound1} can be written as
\begin{IEEEeqnarray}{lCl}
\chi & \geq &  \max\Bigl\{\chi_2,\chi_3 - \log\Bigl(1+\exp\bigl(\chi_3-\chi_1\bigr)\Bigr)\Bigr\}\nonumber\\
& = & \max\Bigl\{\chi_2,\chi_1 - \log\Bigl(1+\exp\bigl(\chi_1-\chi_3\bigr)\Bigr)\Bigr\}. \label{eq:DF_1bit}
\end{IEEEeqnarray}
Note that if $\chi_1\geq \chi_3$, then
\begin{equation*}
\log\Bigl(1+\exp\bigl(\chi_3-\chi_1\bigr)\Bigr) \leq \log 2
\end{equation*}
and the difference between the lower bound \eqref{eq:DF_1bit} and the upper bound \eqref{eq:fadingnumberMISO} is at most one bit. This is summarized in the following corollary.
\begin{corollary}
\label{cor:1bit}
Let the fading processes $\{H_{1,k},\,k\in\Integers\}$ and $\{H_{3,k},\,k\in\Integers\}$ satisfy
\begin{equation*}
\eps_1^2 \leq \eps_3^2.
\end{equation*}
Then we have
\begin{equation}
0 \leq \max\biggl\{-1-\gamma+\log\frac{1}{\eps^2_{2}},-1-\gamma+\log\frac{1}{\eps^2_3}\biggr\} - \chi \leq \log 2.
\end{equation}
\end{corollary}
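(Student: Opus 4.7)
The plan is to derive the corollary as an immediate consequence of the upper bound in Theorem~\ref{thm:upperbound} and the decode-and-forward lower bound in Theorem~\ref{thm:lowerbound1}, combined with the fact that the per-link fading numbers $\chi_\ell = -1-\gamma+\log(1/\eps_\ell^2)$ are monotonically decreasing in the prediction error $\eps_\ell^2$. The left inequality (nonnegativity) is simply the upper bound \eqref{eq:fadingnumberMISO}, which reads $\chi \leq \max\{\chi_2,\chi_3\}$; nothing about the relative sizes of $\eps_1^2$ and $\eps_3^2$ is needed here. All the work therefore goes into the right inequality, which is quantitative.

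For the right inequality, I would start from the reformulation \eqref{eq:DF_1bit} of the decode-and-forward lower bound, namely $\chi \geq \max\{\chi_2,\,\chi_3 - \log(1+\exp(\chi_3-\chi_1))\}$. The hypothesis $\eps_1^2 \leq \eps_3^2$ translates, via $\chi_\ell = -1-\gamma+\log(1/\eps_\ell^2)$, into $\chi_1 \geq \chi_3$, which makes $\exp(\chi_3-\chi_1) \leq 1$ and hence $\log(1+\exp(\chi_3-\chi_1)) \leq \log 2$. Substituting this estimate into the lower bound gives
\begin{equation*}
\chi \;\geq\; \max\bigl\{\chi_2,\, \chi_3 - \log 2\bigr\}.
\end{equation*}

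The only remaining step is the elementary observation that for any reals $a,b$ and any $c \geq 0$ we have $\max\{a, b-c\} \geq \max\{a,b\} - c$: if the maximum on the right is attained by $a$ then the left side is at least $a$, and if it is attained by $b$ then the left side is at least $b - c$. Applying this with $a=\chi_2$, $b=\chi_3$, $c=\log 2$ yields $\chi \geq \max\{\chi_2,\chi_3\} - \log 2$, which is precisely the desired right-hand inequality. No calculation is truly delicate, so there is no real obstacle; the main subtlety is merely to notice that the max/max comparison goes through the $\log 2$ slack cleanly, and that the hypothesis is used only to bound the extra $\log(1+\exp(\cdot))$ term in \eqref{eq:DF_1bit} by $\log 2$.
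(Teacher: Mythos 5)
Your proof is correct and follows essentially the same route as the paper: the nonnegativity follows from the upper bound \eqref{eq:fadingnumberMISO}, and the right inequality follows by using the hypothesis $\eps_1^2\leq\eps_3^2$ (equivalently $\chi_1\geq\chi_3$) to bound the term $\log\bigl(1+\exp(\chi_3-\chi_1)\bigr)$ in \eqref{eq:DF_1bit} by $\log 2$. The only difference is that you make explicit the elementary step $\max\{a,b-c\}\geq\max\{a,b\}-c$ for $c\geq 0$, which the paper leaves implicit when it asserts that the gap between the two bounds is at most one bit.
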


As observed above, for SNR values below 80dB, the capacity is approximately upper-bounded by
\begin{equation}
\label{eq:1bitboundslow}
C(\SNR,\rho) \lessapprox 3+\chi, \quad \SNR\leq80\textnormal{dB}
\end{equation}
so a gap of $\log 2 \approx 0.6931$ nats seems substantial. Nevertheless, for slowly-varying fading channels, the prediction errors $\eps^2_{\ell}$, $\ell=1,2,3$ are small and the fading number, which depends on $\eps^2_{\ell}$ via $-\log\eps^2_{\ell}$, is much larger than $\log 2$. For example, for mobile speeds of the order of 5 km/h, prediction errors $\eps_{\ell}^2$ of roughly $10^{-4}$ seem plausible; see, e.g., \cite[Sec.~II]{etkintse06}. In this case, the fading number is approximately
\begin{equation*}
\chi = -1 -\gamma + \log\frac{1}{\eps_{\ell}^2} \approx 7.6331\textnormal{ nats}
\end{equation*}
and the RHS of \eqref{eq:1bitboundslow} becomes $10.6331$ nats. Thus, for slowly-varying fading channels, a gap of one bit (or equivalently $\log 2$ nats) is reasonably small.

Corollary~\ref{cor:1bit} demonstrates that, when $\chi_1\geq\chi_3$, decode-and-forward achieves communication rates that are within one bit of the capacity of the relay channel. This is consistent with the \emph{Gaussian relay channel} where decode-and-forward also achieves rates that are within one bit of the capacity \cite[Th.~3.1]{avestimehrdiggavitse11}. Note that the difference between the lower bound \eqref{eq:DF_1bit} and the upper bound \eqref{eq:fadingnumberMISO} decreases as $(\chi_1-\chi_3)$ increases.

We conclude that if the fading coefficient of the channel between the transmitter and the relay can be predicted more accurately than the fading coefficient of the channel between the relay and the receiver, then the fading number of the fading relay channel is at most one bit smaller than the fading number of the TRC-MISO channel. If we view the fading number as an indication of the rates at which communication is power-inefficient, then this result demonstrates that the rates up to which the fading relay channel and the TRC-MISO channel operate in the power-efficient regime are within one bit. Note, however, that this does not imply that for both channels the power-inefficient regime starts at the same SNR. Indeed, in the following section we derive nonasymptotic upper and lower bounds on the capacity of the fading relay channel as well as on the capacity of the TRC-MISO channel. These bounds suggest that the capacity of the fading relay channel increases much more slowly with SNR than the capacity of the TRC-MISO channel.

\section{Nonasymptotic Bounds}
\label{sec:nonasymptotic}
To simplify the analysis, we assume throughout this section that the channel between the transmitter and the receiver is memoryless, i.e., we have
\begin{equation*}
F'_2(\lambda) = 1, \quad -\frac{1}{2} \leq\lambda\leq\frac{1}{2}
\end{equation*}
which yields $\eps_2^2=1$. For this case, a nonasymptotic upper bound on the capacity of the relay channel follows by letting the transmitter and the relay cooperate, by relaxing the power constraint to
\begin{equation}
\label{eq:relaxed_PP}
|X_k|^2 + |X_{r,k}|^2 \leq \const{A}^2(1+\rho^2), \quad k\in\Integers
\end{equation} 
and by extending \cite[Eq.~(16)] {kochlapidoth05_1} to the TRC-MISO channel:
\begin{IEEEeqnarray}{lCl}
\IEEEeqnarraymulticol{3}{l}{C(\SNR,\rho)}\nonumber\\
\quad & \leq & C_{\text{IID}}\bigl(\SNR(1+\rho^2)\bigr) + \log\biggl(1+\frac{1}{(1+\rho^2)\SNR}\biggr)\nonumber\\
& & {} - \int_{1/2}^{1/2} \log\biggl(F_3'(\lambda)+\frac{1}{(1+\rho^2)\SNR}\biggr) \d\lambda\label{eq:nonasymptotic_U}
\end{IEEEeqnarray}
where $C_{\text{IID}}(\SNR)$ denotes the capacity in the memoryless fading case, which can be upper-bounded by \cite[Eq.~(141)]{lapidothmoser03_3}
\begin{IEEEeqnarray}{lCl}
C_{\text{IID}}(\SNR) & \leq & \inf_{\substack{\alpha,\beta<0,\\\delta> 0}} \Biggl\{-1+\alpha\log\frac{\beta}{\delta} + \log\Gamma\biggl(\alpha,\frac{\delta}{\beta}\biggr) + \log\delta \nonumber\\
\IEEEeqnarraymulticol{3}{r}{ {}  - (1-\alpha) e^{\delta}\Ei{-\delta} + \frac{\SNR+1}{\beta} + \frac{\delta}{\beta} \Biggr\}.\quad\,\,\,} \label{eq:3(141)}
\end{IEEEeqnarray}
Here
\begin{equation*}
\Gamma(\nu,\xi) \triangleq \int_{\xi}^{\infty} t^{\nu-1} e^{-t} \d t, \quad \bigl(\nu>0,\,\xi\geq 0\bigr)
\end{equation*}
denotes the incomplete Gamma function and
\begin{equation*}
\Ei{-x} \triangleq -\int_{x}^{\infty} \frac{e^{-t}}{t}\d t, \quad x >0
\end{equation*}
denotes the exponential integral function. Clearly, the upper bound \eqref{eq:nonasymptotic_U} is also an upper bound on the capacity of the TRC-MISO channel. 

Alternatively, \eqref{eq:nonasymptotic_U} can be directly derived from \cite[Th.~4.2]{koch09} by setting $\nr=2$ in \cite[Eq.~(4.37)]{koch09} and computing
\begin{IEEEeqnarray*}{lCl}
\eps_{1,1}^2\biggl(\frac{1}{(1+\rho^2)\SNR}\biggr) & = & \log\biggl(1+\frac{1}{(1+\rho^2)\SNR}\biggr)
\end{IEEEeqnarray*}
and
\begin{IEEEeqnarray*}{lCl}
\IEEEeqnarraymulticol{3}{l}{\eps_{1,2}^2\biggl(\frac{1}{(1+\rho^2)\SNR}\biggr)} \nonumber\\
\quad & = & \int_{-1/2}^{1/2} \log\biggl(F'_3(\lambda) + \frac{1}{(1+\rho^2)\SNR}\biggr)\d\lambda;
\end{IEEEeqnarray*}
and by noting that, by Jensen's inequality, we have
\begin{IEEEeqnarray*}{lCl}
\eps_{1,2}^2\biggl(\frac{1}{(1+\rho^2)\SNR}\biggr) & \leq & \eps_{1,1}^2\biggl(\frac{1}{(1+\rho^2)\SNR}\biggr)
\end{IEEEeqnarray*}
implying that the maximum on the RHS of \cite[Eq.~(4.37)]{koch09} is achieved for $\bigl(|\hat{x}(1)|^2,|\hat{x}(2)|^2\bigr)=(0,1)$.

The next proposition presents a nonasymptotic lower bound on the capacity. It is based on a lower bound that was derived in \cite[Prop.~4.1]{koch09} for single-antenna point-to-point fading channels, combined with a decode-and-forward strategy. For point-to-point fading channels, this bound is tight at high SNR in the sense that it achieves the fading number. For the fading relay channel, it can be shown that this bound achieves the lower bound on the fading number given in Theorem~\ref{thm:lowerbound1}.

\begin{proposition}
\label{prop:nonasymptotic}
Let the fading process $\{H_{2,k},\,k\in\Integers\}$ be memoryless. A decode-and-forward strategy yields
\begin{IEEEeqnarray}{lCll}
C(\SNR,\rho)  & \geq & \sup_{0<\delta,\alpha,\delta_r<1} \min \bigl\{& R_{tr}(\SNR;\delta,\alpha),\nonumber\\
& & & \, R_{rr}(\SNR,\rho;\delta,\alpha,\delta_r)\bigr\}\IEEEeqnarraynumspace \label{eq:nonasymptotic_L}
\end{IEEEeqnarray}
where
\begin{IEEEeqnarray*}{rCl}
\IEEEeqnarraymulticol{3}{l}{R_{tr}(\SNR;\delta,\alpha)}\nonumber\\
& \triangleq & \log\biggl(\frac{\sigma^{2(1-\alpha)}}{\delta^{\alpha}\SNR^{\alpha}}\biggr) -\int_{-1/2}^{1/2} \log\biggl(F'_1(\lambda)+\frac{\sigma^{2(1-\alpha)}}{\delta^{2\alpha}\SNR^{\alpha}}\biggr)\d\lambda \nonumber\\
& & {} - \exp\Biggl(\frac{\sigma^{2(1-\alpha)} e}{\alpha\log\bigl(\frac{1}{\delta^2}\bigr)\delta^{\alpha}\SNR^{\alpha}}\Biggr)\Ei{-\frac{\sigma^{2(1-\alpha)} e}{\alpha\log\bigl(\frac{1}{\delta^2}\bigr)\delta^{\alpha}\SNR^{\alpha}}}
\end{IEEEeqnarray*}
and
\begin{IEEEeqnarray*}{lCl}
\IEEEeqnarraymulticol{3}{l}{R_{rr}(\SNR,\rho;\delta,\alpha,\delta_r)}\nonumber\\
\,\,\, & \triangleq & \log\Biggl(\frac{e^{-(\gamma+1)}\alpha\log\bigl(\frac{1}{\delta^2}\bigr)\delta^{\alpha}\SNR^{\alpha}\sigma^{2(\alpha-1)}+1}{\delta_r\rho^2\SNR}\Biggr) \nonumber\\
& & {} -\int_{-1/2}^{1/2} \log\biggl(F_3'(\lambda)+\frac{\sigma^{2(\alpha-1)}}{\delta_r^2\rho^2\SNR^{1-\alpha}}+\frac{1}{\delta_r^2\rho^2\SNR}\biggr)\d\lambda \nonumber\\
& & {} - \exp\Biggl(\frac{e^{-\gamma}\alpha\log\bigl(\frac{1}{\delta^2}\bigr)\delta^{\alpha}\SNR^{\alpha}\sigma^{2(\alpha-1)}+e}{\log\bigl(\frac{1}{\delta_r^2}\bigr)\delta_r\rho^2\SNR}\Biggr)\times\nonumber\\
& & \qquad {} \times\Ei{-\frac{e^{-\gamma}\alpha\log\bigl(\frac{1}{\delta^2}\bigr)\delta^{\alpha}\SNR^{\alpha}\sigma^{2(\alpha-1)}+e}{\log\bigl(\frac{1}{\delta_r^2}\bigr)\delta_r\rho^2\SNR}}.
\end{IEEEeqnarray*}
\end{proposition}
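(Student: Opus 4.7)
\emph{Proof proposal.} I would prove \eqref{eq:nonasymptotic_L} by a block-Markov decode-and-forward (DF) scheme in which the transmitter and the relay use \emph{statistically independent} codebooks: $\{X_k\}$ drawn i.i.d.\ in time from the one-shot input distribution (with peak power $\const{A}^2$) that achieves the point-to-point lower bound of \cite[Prop.~4.1]{koch09} for parameters $(\delta,\alpha)$, and $\{X_{r,k}\}$ drawn i.i.d., independently of $\{X_k\}$, from the corresponding one-shot distribution (with peak power $\rho^2\const{A}^2$) for parameter $\delta_r$. Because the fading and noise processes are jointly stationary and ergodic, this scheme achieves every rate strictly below
\[
\min\bigl\{\underline I(X;Y_r\,|\,X_r),\;\underline I(X,X_r;Y)\bigr\}
=\min\bigl\{\underline I(X;Y_r),\;\underline I(X,X_r;Y)\bigr\},
\]
the equality following because $X$ is independent of $X_r$ and $Y_r$ depends only on $X$ and the independent $(H_1,Z_r)$. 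Supremizing over $(\delta,\alpha,\delta_r)\in(0,1)^3$ then produces \eqref{eq:nonasymptotic_L}.

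The first minimand $\underline I(X;Y_r)$ is handled directly: the transmitter-to-relay link is a single-antenna regular-fading noncoherent point-to-point channel with spectral density $F_1'$, noise variance $\sigma^2$, and peak power $\const{A}^2$, so applying \cite[Prop.~4.1]{koch09} with parameters $(\delta,\alpha)$ gives $\underline I(X;Y_r)\geq R_{tr}(\SNR;\delta,\alpha)$. For the second minimand I would lower-bound $\underline I(X,X_r;Y)\geq \underline I(X_r;Y\,|\,X)$ (valid because $\underline I(X;Y)\geq 0$) and then exploit the memorylessness of $\{H_{2,k}\}$: since $\{H_{2,k}\}$ is i.i.d.\ zero-mean unit-variance circularly symmetric complex Gaussian and independent of everything else, conditionally on $\{X_k\}$ the process $W_k\triangleq H_{2,k}X_k+Z_k$ is independent in $k$ and circularly symmetric complex Gaussian with variance $|X_k|^2+\sigma^2$. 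The conditional channel $X_r\to Y$ is therefore a regular-fading point-to-point channel with spectral density $F_3'$, peak input power $\rho^2\const{A}^2$, and Gaussian conditional noise variance $|X_k|^2+\sigma^2$. Substituting this noise variance into the one-shot bound of \cite[Prop.~4.1]{koch09} for parameter $\delta_r$ and then averaging over $|X_k|^2$ by Jensen's inequality (which applies because the explicit bound is convex in the noise variance) allows the expectation to be pulled inside the relevant logs and the spectral integral, leaving only $\sigma^2+\E[|X|^2]$ as the effective noise variance. Evaluating $\E[|X|^2]$ for the chosen $(\delta,\alpha)$-input and simplifying then yields $R_{rr}(\SNR,\rho;\delta,\alpha,\delta_r)$.

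The principal technical obstacle is carrying out the Jensen step cleanly: replacing $\{W_k\}$ by an i.i.d.\ circularly symmetric complex Gaussian sequence of variance $\sigma^2+\E[|X|^2]$ \emph{at the channel level} is not in general lossless for non-Gaussian inputs $X_r$, so one must instead retain the conditional variance $|X_k|^2+\sigma^2$ inside the explicit functional form of the lower bound from \cite[Prop.~4.1]{koch09} and apply Jensen once, directly to that final expression, where the relevant dependence on $|X_k|^2$ appears only through convex functions. A secondary but standard ingredient is the extension of the Cover--El~Gamal DF achievability result \cite{coverelgamal79} from discrete memoryless relay channels to the present stationary-ergodic-fading setting, which is accomplished by the usual block-Markov argument with block length chosen large enough that, by ergodicity, the per-block mutual informations converge to the mutual information \emph{rates} $\underline I(\cdot\,;\cdot)$ used above.
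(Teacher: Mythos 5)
Your scaffolding — decode-and-forward for stationary ergodic channels (the paper's Proposition~\ref{prop:decodeforward}), independent log-uniform inputs, and a verbatim application of \cite[Prop.~4.1]{koch09} to the transmitter--relay link to get $R_{tr}$ — matches the paper. The gap is in the second minimand, and as written your route cannot produce the stated formula $R_{rr}$. You bound $I\bigl(X_1^n,X_{r,1}^n;Y_1^n\bigr)\geq I\bigl(X_{r,1}^n;Y_1^n\bigm|X_1^n\bigr)$, i.e., you treat the direct-path signal $H_{2,k}X_k$ purely as noise, and then want to insert the conditional variance $|X_k|^2+\sigma^2$ into the point-to-point bound and ``apply Jensen once.'' This fails twice over. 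First, Jensen goes the wrong way where it matters: viewing the bound of \cite[Prop.~4.1]{koch09} as a function of the noise variance $v$, the dependence outside the spectral integral is $\psi(v)=\log\bigl(v/(\delta_r\rho^2\const{A}^2)\bigr)-e^{u}\Ei{-u}$ with $u=ev/L$ and $L=\log(1/\delta_r^2)\delta_r\rho^2\const{A}^2$; using $\tfrac{\mathrm{d}}{\mathrm{d}x}\bigl(-e^x\Ei{-x}\bigr)=-e^x\Ei{-x}-\tfrac1x$ one finds $\psi'(v)=\tfrac{e}{L}\bigl(-e^{u}\Ei{-u}\bigr)>0$ and, since $x\mapsto -e^x\Ei{-x}$ is decreasing, $\psi''<0$. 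So $\psi$ is \emph{concave}, and Jensen gives $\E{\psi(V)}\leq\psi(\E{V})$ — an upper bound, not the lower bound you need (the spectral term is convex in $v$, so the two pieces cannot be handled by a single application of Jensen in a consistent direction). Moreover, \cite[Prop.~4.1]{koch09} is proved for constant noise variance; with per-letter variances the prediction (spectral-integral) term depends on the entire past noise sequence, so there is no justified ``explicit functional form'' into which $|X_k|^2$ can simply be inserted. Second, even granting the averaging, you would obtain the constant $\E{|X_k|^2}=\tfrac{(1-\delta^{2\alpha})\const{A}^{2\alpha}}{\alpha\log(1/\delta^2)}$, whereas $R_{rr}$ contains $e^{-(\gamma+1)}\alpha\log\bigl(\tfrac{1}{\delta^2}\bigr)\delta^{\alpha}\const{A}^{2\alpha}$; these are different functions of $(\delta,\alpha)$, so no simplification turns your expression into $R_{rr}$.

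The reason for that particular constant is that the paper does \emph{not} discard the direct path: it keeps $I\bigl(X_k,X_{r,k};Y_k\bigm|\cdot\bigr)$, reduces the past observations to $H_{3,\ell}+\xi W_\ell$ with \emph{worst-case} (peak, not average) interference $\xi$ via Gaussian-noise injection and data processing — which is why the peak $\SNR^{\alpha}$ appears inside the spectral integral of $R_{rr}$ — and lower-bounds $h\bigl(Y_k\bigm|\cdot\bigr)$ with the Entropy Power Inequality, so that $H_{2,k}X_k$ enters as a \emph{gain} through its entropy power: $e^{-(\gamma+1)}\alpha\log\bigl(\tfrac{1}{\delta^2}\bigr)\delta^{\alpha}\const{A}^{2\alpha}$ is precisely $(\pi e)^{-1}e^{h(H_{2,k}X_k)}$, which is why it shows up both in the first logarithm and in the $\Ei$ argument. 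Your treat-as-noise decomposition can in fact be salvaged, but by worst case rather than Jensen: inject independent Gaussian noise to raise every letter's noise variance to the peak $\const{A}^{2\alpha}+\sigma^2$ (valid by data processing), apply \cite[Prop.~4.1]{koch09} verbatim, and note that the resulting bound is $\psi\bigl(\const{A}^{2\alpha}+\sigma^2\bigr)-S$ while $R_{rr}=\psi(v_{rr})-S$ with the same spectral term $S$ and $v_{rr}=e^{-(\gamma+1)}\alpha\log\bigl(\tfrac{1}{\delta^2}\bigr)\delta^{\alpha}\const{A}^{2\alpha}+\sigma^2<\const{A}^{2\alpha}+\sigma^2$; the monotonicity $\psi'>0$ then shows your bound dominates $R_{rr}$, and the proposition follows a fortiori. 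That comparison step (or the paper's EPI argument) is essential; the Jensen step you proposed must be abandoned.
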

\begin{proof}
See Appendix~\ref{app:nonasymptotic}.
\end{proof}

A lower bound on the capacity of the TRC-MISO channel (but not necessarily the fading relay channel) follows by using a beam-selection strategy, where the transmitter transmits either from the first antenna (i.e., the transmitter) or from the second antenna (i.e., the relay). To compare the lower bound with the upper bound \eqref{eq:nonasymptotic_U}, we consider the relaxed power constraint \eqref{eq:relaxed_PP}. Note, however, that the main conclusions drawn from the nonasymptotic bounds would not change if we had considered the original power constraints \eqref{eq:peakpower1} and \eqref{eq:peakpower2} to lower-bound the capacity of the TRC-MISO channel. In fact, the fading number of the TRC-MISO channel is the same for both power constraints. The TRC-MISO capacity is lower-bounded by \cite[Prop.~4.1]{koch09}
\begin{IEEEeqnarray}{lCl}
\IEEEeqnarraymulticol{3}{l}{C_{\text{MISO}}(\SNR,\rho)}\nonumber\\
\quad & \geq & \sup_{0<\delta<1} \max\bigl\{R_{2}(\SNR,\rho;\delta), R_3(\SNR,\rho;\delta)\bigr\} \label{eq:nonasymptotic_TRC}
\end{IEEEeqnarray}
where
\begin{IEEEeqnarray*}{lCl}
\IEEEeqnarraymulticol{3}{l}{R_{\ell}(\SNR,\rho;\delta)}\nonumber\\
\quad & \triangleq & \log\biggl(\frac{1}{\delta\SNR(1+\rho^2)}\biggr) \nonumber\\
& & {} -\int_{-1/2}^{1/2} \log\biggl(F'_{\ell}(\lambda)+\frac{1}{\delta^{2}\SNR (1+\rho^2)}\biggr)\d\lambda
\nonumber\\
& & {} - \exp\Biggl(\frac{e}{\log\bigl(\frac{1}{\delta^2}\bigr)\delta\SNR(1+\rho^2)}\Biggr)\times\nonumber\\
& & \qquad\qquad {} \times\Ei{-\frac{e}{\log\bigl(\frac{1}{\delta^2}\bigr)\delta\SNR(1+\rho^2)}}
\end{IEEEeqnarray*}
for $\ell=2,3$. Note that beam-selection is optimal at high SNR in the sense that it achieves the fading number \cite{kochlapidoth05_1,kochlapidoth05_3,koch04}. 

The lower bounds \eqref{eq:nonasymptotic_L} and \eqref{eq:nonasymptotic_TRC} are tight at high SNR, but they are loose at low SNR. We therefore include the following lower bounds that are superior to \eqref{eq:nonasymptotic_L} and \eqref{eq:nonasymptotic_TRC} when the SNR is small.

A lower bound on the capacity of the TRC-MISO channel follows by using a beam-selection strategy and by lower-bounding the capacity of the resulting point-to-point channel by choosing quaternary phase-shift keying (QPSK) channel inputs \cite[Prop.~2.1 \& Eq.~(17)]{sethuramanhajeknarayanan05}:
\begin{IEEEeqnarray}{lCll}
C_{\text{MISO}}(\SNR) & \geq & \max\Bigl\{& I\bigl(X;(\bar{H}_{1}+\tilde{H}_{1})X+Z_{r}\bigm|\bar{H}_{1}\bigr), \nonumber\\
& & & \, I\bigl(X_{r};(\bar{H}_{3}+\tilde{H}_{3})X_r+Z\bigm|\bar{H}_{3}\bigr)\Bigr\}\IEEEeqnarraynumspace\label{eq:nonasymptotic_TRC_lowSNR}
\end{IEEEeqnarray}
where $\bar{H}_{1}$, $\tilde{H}_1$, $\bar{H}_3$, $\tilde{H}_{3}$, $X$, $X_r$, $Z_{r}$, and $Z$ are independent random variables with the following distributions:
\begin{itemize}
\item $X$ and $X_{r}$ are uniformly distributed over the QPSK constellation
\begin{IEEEeqnarray*}{l}
\Bigl\{\sqrt{(1+\rho^2)}\const{A},\ii\sqrt{(1+\rho^2)}\const{A},\\
 \quad {} -\sqrt{(1+\rho^2)}\const{A},-\ii\sqrt{(1+\rho^2)}\const{A}\Bigr\};
\end{IEEEeqnarray*}
\item $\bar{H}_{\ell}$, $\ell=1,3$ and $\tilde{H}_{\ell}$, $\ell=1,3$ are zero-mean, circularly-symmetric, complex Gaussian random variables of respective variances
\begin{equation*}
1-\eps^2_{\ell}\biggl(\frac{1}{(1+\rho^2)\SNR}\biggr) \quad \textnormal{and} \quad \eps^2_{\ell}\biggl(\frac{1}{(1+\rho^2)\SNR}\biggr)
\end{equation*}
where $\eps_{\ell}^2(\cdot)$ denotes the noisy prediction error \cite[Eq.~(11)]{lapidoth05}
\begin{equation*}
\eps_{\ell}(\xi) \triangleq \exp\Biggl(\int_{-1/2}^{1/2} \log(F'_{\ell}(\lambda)+\xi)\d\lambda\Biggr)-\xi;
\end{equation*}
\item $Z_r$ and $Z$ are zero-mean, circularly-symmetric, complex Gaussian random variables of variance $\sigma^2$.
\end{itemize}
The RHS of \eqref{eq:nonasymptotic_TRC_lowSNR} can be computed numerically.

The above lower bound can be extended to the relay channel by employing a decode-and-forward strategy and by choosing $\{X_k,\,k\in\Integers\}$ and $\{X_{r,k},\,k\in\Integers\}$ to be i.i.d. random variables, independent of each other and with
\begin{itemize}
\item $X_k$ being uniformly distributed over the QPSK constellation $\bigl\{\delta\const{A},\ii\delta\const{A},-\delta\const{A},-\ii\delta\const{A}\bigr\}$, for some $0<\delta\leq 1$;
\item $X_{r,k}$ being uniformly distributed over the QPSK constellation $\bigl\{\const{A},\ii\const{A},-\const{A},-\ii\const{A}\bigr\}$.
\end{itemize}
This yields
\begin{IEEEeqnarray}{lCl}
C(\SNR,\rho) & \geq & \sup_{0<\delta\leq 1} \min\Bigl\{I\bigl(X_1;(\bar{H}'_1+\tilde{H}'_1)X_1+Z_{r}\bigm|\bar{H}'_1\bigr),\nonumber\\
\IEEEeqnarraymulticol{3}{r}{I\bigl(X_{r,1};(\bar{H}'_3+\tilde{H}'_3)X_{r,1}+H_{2}X_1+Z\bigm|\bar{H}_3,X_1\bigr)\Bigr\}\quad\,\,}\label{eq:nonasymptotic_L_lowSNR}
\end{IEEEeqnarray}
where $\bar{H}_1$, $\tilde{H}'_1$, $\bar{H}'_3$, $\tilde{H}'_3$, $X_1$, $X_{r,1}$, $Z$, $Z_{r}$, and $H_{2}$ are independent random variables with the following distributions:
\begin{itemize}
\item $\bar{H}'_1$ and $\tilde{H}'_1$ are zero-mean, circularly-symmetric, complex Gaussian random variables of respective variances
\begin{equation*}
1-\eps_1^2\biggl(\frac{1}{\delta^{2}\SNR}\biggr) \quad \textnormal{and} \quad \eps_1^2\biggl(\frac{1}{\delta^2\SNR}\biggr);
\end{equation*}
\item $\bar{H}'_3$ and $\tilde{H}'_3$ are zero-mean, circularly-symmetric, complex Gaussian random variables of respective variances
\begin{equation*}
1-\eps_3^2\biggl(\frac{\delta^2}{\rho^2}+\frac{1}{\rho^2\SNR}\biggr) \quad \textnormal{and} \quad\eps_3^2\biggl(\frac{\delta^2}{\rho^2}+\frac{1}{\rho^2\SNR}\biggr);
\end{equation*} 
\item $H_2$ is a zero-mean, unit-variance, circularly-symmetric, complex Gaussian random variable;
\item $Z$ and $Z_r$ are as above.
\end{itemize}
The RHS of \eqref{eq:nonasymptotic_L_lowSNR} can be computed numerically.

\begin{figure}
\begin{flushleft}
  \psfrag{Upper Bound TRC-MISO Channel}{{\footnotesize Upper bound TRC-MISO channel}}
  \psfrag{Lower Bound Relay Channel}{{\footnotesize Lower bound fading relay channel}}
  \psfrag{Lower Bound TRC-MISO Channel}{{\footnotesize Lower bound TRC-MISO channel}}
   \psfrag{Upper Bound Direct Communication}{{\footnotesize Upper bound direct communication}}
  \psfrag{Low SNR Lower Bound}{{\footnotesize Low-SNR lower bound}}
   \psfrag{Fading Number MISO Channel}{{\footnotesize Fading number TRC-MISO channel}}
  \psfrag{Fading Number Fading Relay Channel}{{\footnotesize Fading number relay channel}}
  \psfrag{e1}[b][b]{{$\eps_1^2=10^{-4}$, $\eps_2^2=1$, $\eps_3^2=10^{-2}$}}
  \psfrag{e2}[b][b]{{$\eps_1^2=10^{-2}$, $\eps_2^2=1$, $\eps_3^2=10^{-2}$}}
   \psfrag{SNR [dB]}[ct][cc]{SNR [dB]}
  \psfrag{Capacity [nats/channel use]}[cb][cb]{Capacity [nats/channel use]}
  \psfrag{0}[rt][rt]{\scriptsize$0$}
  \psfrag{1}[r][r]{\scriptsize$1$}
  \psfrag{2}[r][r]{\scriptsize$2$}
  \psfrag{3}[r][r]{\scriptsize$3$}
  \psfrag{4}[r][r]{\scriptsize$4$}
  \psfrag{5}[r][r]{\scriptsize$5$}
  \psfrag{6}[r][r]{\scriptsize$6$}
  \psfrag{7}[r][r]{\scriptsize$7$}
  \psfrag{-10}[t][t]{\scriptsize$-10$}
 \psfrag{10}[t][t]{\scriptsize$10$}
 \psfrag{20}[t][t]{\scriptsize$20$}
 \psfrag{30}[t][t]{\scriptsize$30$}
 \psfrag{40}[t][t]{\scriptsize$40$}
 \psfrag{50}[t][t]{\scriptsize$50$}
 \psfrag{60}[t][t]{\scriptsize$60$}
 \psfrag{70}[t][t]{\scriptsize$70$}
 \psfrag{80}[t][t]{\scriptsize$80$}
 \psfrag{90}[t][t]{\scriptsize$90$}
  \begin{center}
    \includegraphics[width=0.5\textwidth]{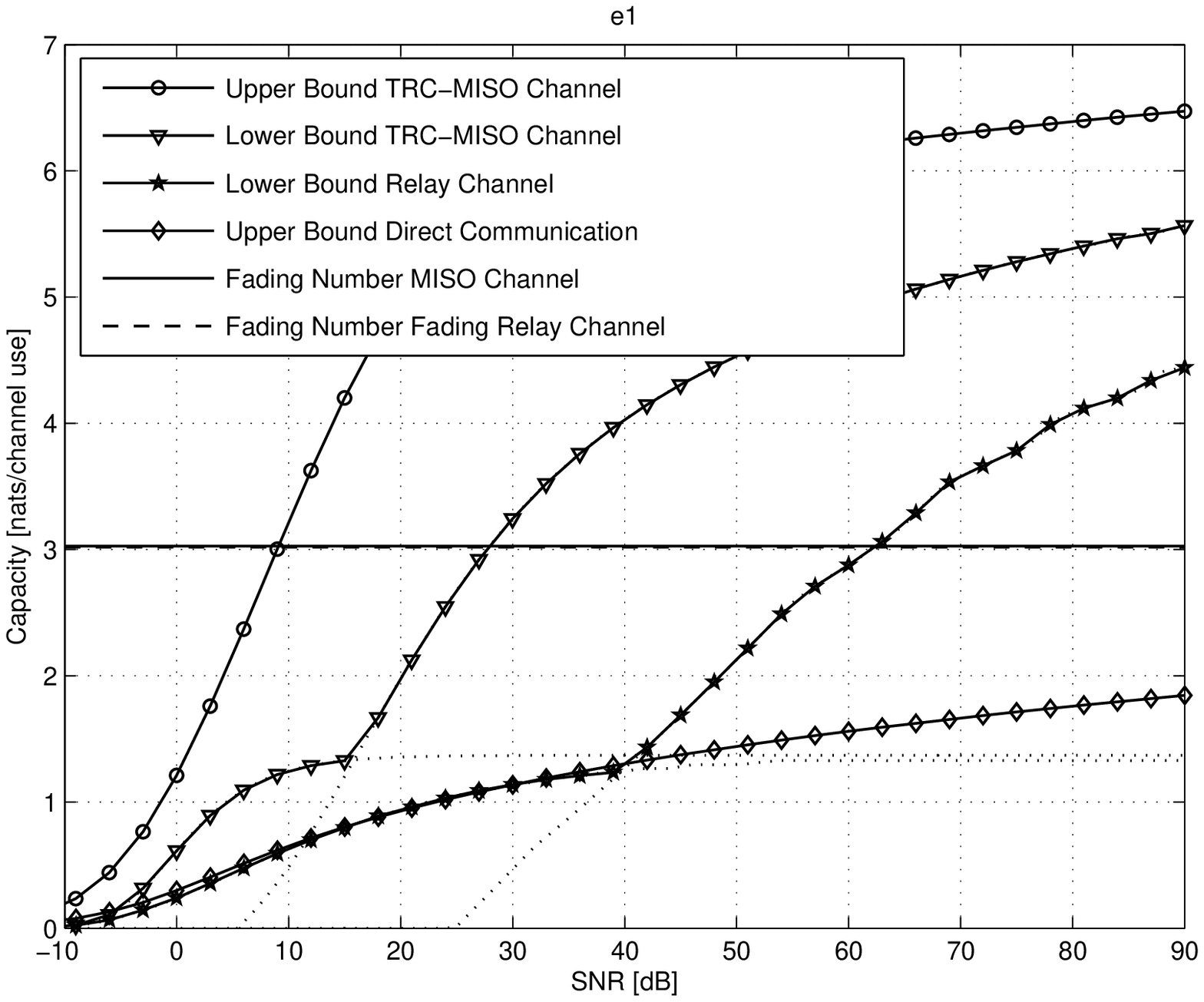}
  \end{center}
  \begin{center}
    \includegraphics[width=0.5\textwidth]{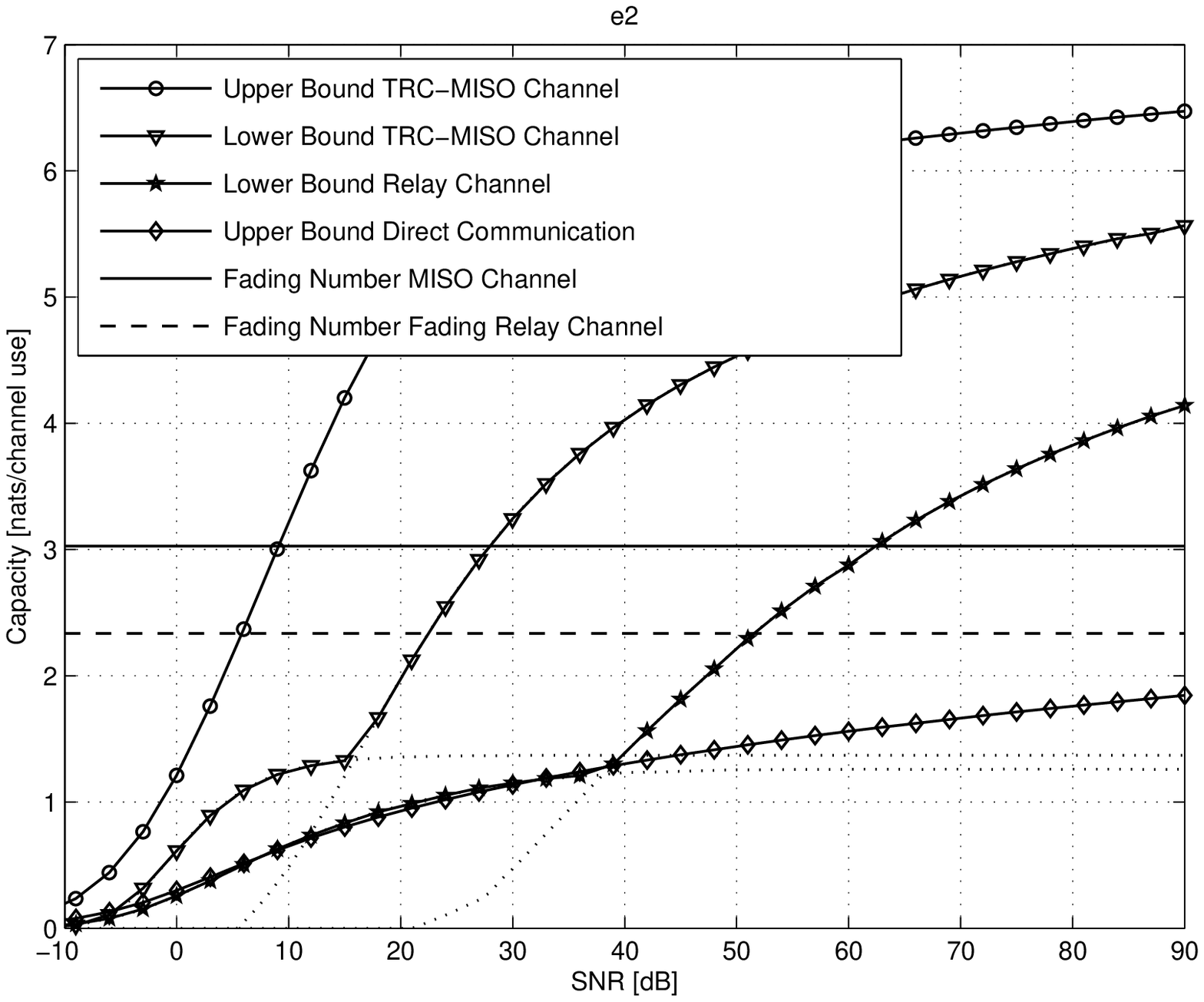}
  \end{center}
  \caption{Upper bound on the capacity of the TRC-MISO channel \eqref{eq:nonasymptotic_U}; lower bounds on the capacity of the TRC-MISO channel (maximum of \eqref{eq:nonasymptotic_TRC} and \eqref{eq:nonasymptotic_TRC_lowSNR}) and on the capacity of the fading relay channel (maximum of \eqref{eq:nonasymptotic_L} and \eqref{eq:nonasymptotic_L_lowSNR}); upper bound on the capacity for direct communication \eqref{eq:3(141)}; the fading number of the TRC-MISO channel and the lower bound \eqref{eq:thmlowerbound1} on the fading number of the fading relay channel. The prediction errors are $\eps_1^2=10^{-4}$, $\eps_2^2=1$, $\eps_3^2=10^{-2}$ (top subfigure) and $\eps_1^2=10^{-2}$, $\eps_2^2=1$, $\eps_3^2=10^{-2}$ (bottom subfigure). In both cases we assume $\rho=1$ and $\sigma=1$.}
  \label{fig:nonasymp}
  \end{flushleft}
\end{figure}

We evaluate the upper bound \eqref{eq:nonasymptotic_U} and the lower bounds \eqref{eq:nonasymptotic_L}--\eqref{eq:nonasymptotic_L_lowSNR} for spectral distribution functions of the form
\begin{equation}
F_{\ell}'(\lambda) = \left\{\begin{array}{ll} \Upsilon_{\ell}, \quad & |\lambda|\leq\Theta_{\ell} \\ \Lambda_{\ell}, \quad & \Theta_{\ell}< |\lambda|\leq\frac{1}{2}, \end{array}\right. \quad \ell=1,3
\end{equation}
where $\Upsilon_{\ell}>0$, $\Lambda_{\ell}>0$, and $0<\Theta_{\ell}<1/2$ satisfy
\begin{equation*}
\int_{-1/2}^{1/2} F'_{\ell}(\lambda)\d\lambda = 2\Upsilon_{\ell}\Theta_{\ell}+ \bigl(1-2\Theta_{\ell}\bigr)\Lambda_{\ell} =1, \quad \ell=1,3.
\end{equation*}
Recall that $F_2'(\lambda)=1$, $1/2\leq\lambda\leq 1/2$. We consider two scenarios: in the first scenario, the fading between the transmitter and the relay has a prediction error of $10^{-4}$, whereas the fading between the relay and the receiver has a prediction error of $10^{-2}$. This implies that the fading number of the relay channel is roughly the same as the fading number of the TRC-MISO channel.  In the second scenario, both the fading between the transmitter and the relay and the fading between the relay and the receiver have a prediction error of $10^{-2}$. In this case, the lower bound on the fading number of the relay channel \eqref{eq:thmlowerbound1} is $\log 2$ nats smaller than the fading number of the TRC-MISO channel.

Figure~\ref{fig:nonasymp} shows the upper bound on the capacity of the fading relay channel and of the TRC-MISO channel \eqref{eq:nonasymptotic_U}, the lower bounds on the capacity of the fading relay channel (maximum of \eqref{eq:nonasymptotic_L} and \eqref{eq:nonasymptotic_L_lowSNR}) and of the TRC-MISO channel (maximum of \eqref{eq:nonasymptotic_TRC} and \eqref{eq:nonasymptotic_TRC_lowSNR}), together with the corresponding fading numbers for the above two scenarios. In particular, the top subfigure in Figure~\ref{fig:nonasymp} shows the bounds \eqref{eq:nonasymptotic_U}, \eqref{eq:nonasymptotic_L}--\eqref{eq:nonasymptotic_L_lowSNR} for
\begin{equation*}
\begin{array}{lcl} \Upsilon_{1} & \approx & 5.76034 \\
\Lambda_{1} & = & 10^{-5} \\
\Theta_{1} & \approx & 0.08679
\end{array}
\qquad \text{and} \qquad
\begin{array}{lcl} \Upsilon_{3} & \approx & 10.99684 \\
\Lambda_{3} & = & 0.005 \\
\Theta_{3} & \approx & 0.04503
\end{array}
\end{equation*}
resulting in $\eps_1^2=10^{-4}$ and $\eps_3^{2}=10^{-2}$. In this case, the fading number is upper-bounded by \eqref{eq:thmUB}, which yields
\begin{equation*}
\chi \leq - 1- \gamma+\log\frac{1}{\eps_3^2} \approx 3.0280 \textnormal{ nats}
\end{equation*}
and is equal to the fading number of the TRC-MISO channel. The fading number of the relay channel is lower-bounded by \eqref{eq:thmlowerbound1}, which is
\begin{equation*}
\chi \geq -1-\gamma+\log\frac{1}{\eps_3^2} - \log\biggl(1+\frac{\eps^2_1}{\eps^2_3}\biggr) \approx 3.0180 \textnormal{ nats.}
\end{equation*}
The bottom subfigure in Figure~\ref{fig:nonasymp} shows the bounds  \eqref{eq:nonasymptotic_U}, \mbox{\eqref{eq:nonasymptotic_L}--\eqref{eq:nonasymptotic_L_lowSNR}} for
\begin{equation*}
\begin{array}{lclcl}  \Upsilon_{1} & = & \Upsilon_{3} & \approx & 10.99684 \\
\Lambda_{1} & = & \Lambda_{3} & = & 0.005 \\
\Theta_{1} & = & \Theta_{3} & \approx & 0.04503
\end{array}
\end{equation*}
resulting in $\eps_1^2=\eps_3^{2}=10^{-2}$. As in the above example, for the relay channel this yields
\begin{equation*}
\chi \leq - 1- \gamma+\log\frac{1}{\eps_3^2} \approx 3.0280 \textnormal{ nats}
\end{equation*}
which is equal to the fading number of the TRC-MISO fading channel. The lower bound \eqref{eq:thmlowerbound1} becomes
\begin{equation*}
\chi \geq -1-\gamma+\log\frac{1}{\eps_3^2} - \log 2 \approx 2.3348 \textnormal{ nats.}
\end{equation*}
In both examples, we assume that $\rho=1$ and $\sigma=1$.

To compare the performance of cooperative communication with that of direct communication, we also show an upper bound on the capacity when the relay is switched off. Since in this section we assume that the channel between the transmitter and receiver is memoryless, it follows that, when the relay is switched off, the capacity for both examples is upper-bounded by \eqref{eq:3(141)}, which is
\begin{IEEEeqnarray*}{lCll}
\IEEEeqnarraymulticol{4}{l}{C(\SNR)}\nonumber\\
\quad & \leq & \inf_{\substack{\alpha,\beta<0,\\\delta> 0}} \Biggl\{& -1+\alpha\log\frac{\beta}{\delta} + \log\Gamma\biggl(\alpha,\frac{\delta}{\beta}\biggr) + \log\delta \nonumber\\
& & & \, {}  - (1-\alpha) e^{\delta}\Ei{-\delta} + \frac{\SNR+1}{\beta} + \frac{\delta}{\beta} \Biggr\}. \IEEEeqnarraynumspace 
\end{IEEEeqnarray*}

Observe that the lower bound for the fading relay channel \eqref{eq:nonasymptotic_L} increases much more slowly with SNR than the lower bound for the TRC-MISO channel \eqref{eq:nonasymptotic_TRC}, even in the first example where the fading numbers of both channels are almost identical. Since these lower bounds are tight at high SNR (in the sense that they achieve the fading number of the TRC-MISO channel and the lower bound on the fading number of the relay channel, respectively) we suspect that the same is also true for the capacities of both channels at high SNR. Thus, even though the capacities of the fading relay channel and the TRC-MISO channel have similar asymptotic behaviors in the limit as the SNR tends to infinity, they may differ substantially at finite SNR.

Further observe that for SNR values below 40 dB the upper bound corresponding to direct communication \eqref{eq:3(141)} is comparable to the lower bound \eqref{eq:nonasymptotic_L_lowSNR} corresponding to cooperative communication, whereas for SNR values above 40dB the upper bound \eqref{eq:3(141)} is significantly smaller than the lower bound \eqref{eq:nonasymptotic_L} achievable with cooperation. We thus conclude that cooperation can provide significant capacity gains over direct communication for intermediate to large SNR values. Furthermore, since \eqref{eq:3(141)} does not seem to be tight for low to intermediate SNR (cf.~\cite[Fig.~1]{lapidothmoser03_3}), we expect that cooperation is also beneficial at SNR values below 40 dB.

Note that, for the above spectral distribution functions, the fading processes $\{H_{1,k},\,k\in\Integers\}$ and $\{H_{3,k},\,k\in\Integers\}$ are \emph{nonephermal} \cite[Def.~2.1]{sethuramanwanghajeklapidoth09} in the sense that
\begin{equation*}
\int_{-1/2}^{1/2} F'^2_{\ell}(\lambda) \d\lambda > 2, \quad \ell=1,3.
\end{equation*}
In this case i.i.d.\ inputs and QPSK as well as beam-selection achieve the low-SNR asymptotic capacity \cite[Secs.~II-A3 \& II-B4]{sethuramanwanghajeklapidoth09}. Thus, for the above spectral distribution functions, the lower bound \eqref{eq:nonasymptotic_TRC_lowSNR} is tight at low SNR. Further note that, while the high-SNR results presented in Section~\ref{sec:results} continue to hold if the peak-power constraints 
\eqref{eq:peakpower1} and \eqref{eq:peakpower2} are replaced by average-power constraints, this is not necessarily true for the nonasymptotic bounds presented in this section. In fact, for a peak-power constraint, the low-SNR asymptotic capacity behaves like $\SNR^2$, i.e., we have \cite[Cor.~2.1]{sethuramanwanghajeklapidoth09}
\begin{equation}
\lim_{\SNR\downarrow 0} \frac{C(\SNR)}{\SNR^2} = \kappa
\end{equation}
(where $\kappa$ is a function of the fading process's spectral distribution function), whereas for an average-power constraint, it behaves like $\SNR$, i.e., we have \cite[Th.~5.1.1]{lapidothshamai02}, \cite[Th.~1]{verdu02}
\begin{equation}
\label{eq:CUC_Verdu}
\lim_{\SNR\downarrow 0} \frac{C(\SNR)}{\SNR} = 1.
\end{equation}
Furthermore, in the average-power-limited case, QPSK channel inputs are highly suboptimal: \eqref{eq:CUC_Verdu} can be achieved only by input distributions that are \emph{flash signaling} \cite[Th.~7]{verdu02}. 

\section{Proof of Theorem~\ref{thm:upperbound}}
\label{sec:upperbound}
Theorem~\ref{thm:upperbound} follows from Fano's inequality \cite[Th.~2.11.1]{coverthomas91} and from the following upper bound on $\frac{1}{n}I\bigl(X_1^n;Y_1^n\bigr)$
\begin{IEEEeqnarray}{lCl}
C(\SNR,\rho) & \leq & \varlimsup_{n\to\infty}\sup \frac{1}{n} I\bigl(X_1^n;Y_1^n\bigr) \nonumber\\
& \leq & \min\biggl\{\varlimsup_{n\to\infty} \sup \frac{1}{n} I\bigl(X_1^n;Y_{r,1}^n,Y_1^n\bigr),\nonumber\\
& & {} \quad\qquad \varlimsup_{n\to\infty}\frac{1}{n} \sup I\bigl(X_1^n,X_{r,1}^n;Y_1^n\bigr)\biggr\} \IEEEeqnarraynumspace\label{eq:UB1}
\end{IEEEeqnarray}
where the suprema are over all joint distributions of $\bigl(X_1^n,X_{r,1}^n\bigr)$ satisfying the power constraints \eqref{eq:peakpower1} and \eqref{eq:peakpower2}. Here the second step follows by upper-bounding
\begin{equation*}
I\bigl(X_1^n;Y_1^n\bigr)\leq  I\bigl(X_1^n;Y_{r,1}^n,Y_1^n\bigr)
\end{equation*}
and
\begin{equation*}
I\bigl(X_1^n;Y_1^n\bigr)\leq I\bigl(X_1^n,X_{r,1}^n;Y_1^n\bigr)
\end{equation*}
which in turn follows because $I(A;B)\leq I(A;B,C)$ for every random variables $A$, $B$, and $C$.

The first term on the RHS of \eqref{eq:UB1} is upper-bounded by the capacity of a single-input multiple-output (SIMO) fading channel with peak-power $\const{A}^2$, and the second term on the RHS of \eqref{eq:UB1} is upper-bounded by the capacity of a MISO fading channel with peak-power $\const{A}^2+\const{A}_r^2$, which by \eqref{eq:alpha} is equal to $\const{A}^2\,(1+\rho^2)$. 

Indeed, since $X_1^n$ is independent of $H_{3,1}^n$, we can upper-bound the first term on the RHS of \eqref{eq:UB1} by
\begin{IEEEeqnarray}{lCl}
\frac{1}{n} I\bigl(X_1^n;Y_{r,1}^n,Y_1^n\bigr) & \leq & \frac{1}{n}  I\bigl(X_1^n;Y_{r,1}^n,Y_1^n \bigm| H_{3,1}^n\bigr). \label{eq:whatever}
\end{IEEEeqnarray}
Using the chain rule for mutual information \cite[Th.~2.5.2]{coverthomas91}, we obtain
\begin{IEEEeqnarray}{lCl}
\IEEEeqnarraymulticol{3}{l}{I\bigl(X_1^n;Y_{r,1}^n,Y_1^n \bigm| H_{3,1}^n\bigr)}\nonumber\\
\quad & = & \frac{1}{n} \sum_{k=1}^n I\bigl(X_1^n;Y_{r,k},Y_k\bigm| H_{3,1}^n, Y_{r,1}^{k-1},Y_1^{k-1}\bigr). \nonumber\\
& = & \frac{1}{n} \sum_{k=1}^n I\bigl(X_1^k;Y_{r,k},Y_k\bigm| H_{3,1}^k=0,Y_{r,1}^{k-1},Y_1^{k-1}\bigr) \nonumber\\
& \leq & \frac{1}{n} \sum_{k=1}^n I\bigl(X_1^k,Y_{r,1}^{k-1},Y_1^{k-1};Y_{r,k},Y_k\bigm| H_{3,1}^k=0\bigr) \IEEEeqnarraynumspace
\end{IEEEeqnarray}
where the second equality follows because $X_{r,1}^k$ is a function of $Y_{r,1}^{k-1}$, so $\bigl(H_{3,1}^n,X_{r,1}^k\bigr)$ is known and we can therefore subtract $H_{3,\ell} X_{r,\ell}$ from $Y_{\ell}$, $\ell=1,\ldots,k$, resulting in the same mutual information as if we would set $H_{3,1}^k=0$; and the subsequent inequality follows because $I(A;B|C)\leq I(A,C;B)$ for any random variables $A$, $B$, and $C$.

We next note that, conditioned on \[\bigl(X_k,H_{3,1}^k=0,H_{1,1}^{k-1},H_{2,1}^{k-1}\bigr)\] the pair $\bigl(Y_{r,k},Y_k\bigr)$ is independent of $\bigl(X_1^{k-1},Y_{r,1}^{k-1},Y_1^{k-1}\bigr)$, so adding the observations $\bigl(H_{1,1}^{k-1},H_{2,1}^{k-1}\bigr)$ yields
\begin{IEEEeqnarray}{lCl}
 \IEEEeqnarraymulticol{3}{l}{\frac{1}{n} \sum_{k=1}^n I\bigl(X_1^k,Y_{r,1}^{k-1},Y_1^{k-1};Y_{r,k},Y_k\bigm| H_{3,1}^k=0\bigr)}\nonumber\\
 \quad & \leq & \frac{1}{n} \sum_{k=1}^n I\bigl(X_k,H_{1,1}^{k-1},H_{2,1}^{k-1}; Y_{r,k},Y_k\bigm| H_{3,1}^k=0\bigr).\IEEEeqnarraynumspace
 \end{IEEEeqnarray}
The chain rule for mutual information gives
 \begin{IEEEeqnarray}{lCl}
 \IEEEeqnarraymulticol{3}{l}{\frac{1}{n} \sum_{k=1}^n I\bigl(X_k,H_{1,1}^{k-1},H_{2,1}^{k-1}; Y_{r,k},Y_k\bigm| H_{3,1}^k=0\bigr)} \nonumber\\
\quad & \leq &  \frac{1}{n} \sum_{k=1}^n \sup I\bigl(X_k;Y_{r,k},Y_k\bigm| H_{3,k}=0\bigr) \nonumber\\
& & {}  + \frac{1}{n} \sum_{k=1}^n I\bigl(H_{1,1}^{k-1},H_{2,1}^{k-1}; Y_{r,k},Y_k\bigm| X_k, H_{3,k}=0\bigr)\nonumber\\
& \leq & \sup I\bigl(X_1;Y_{r,1},Y_1\bigm| H_{3,1}=0\bigr) \nonumber\\
& & {} + \frac{1}{n} \sum_{k=1}^n I\bigl(H_{1,1}^{k-1},H_{2,1}^{k-1}; Y_{r,k},Y_k\bigm| X_k, H_{3,k}=0\bigr) \IEEEeqnarraynumspace \label{eq:UB2}
\end{IEEEeqnarray}
where the first supremum is over all input distributions of $X_k$ satisfying \eqref{eq:peakpower1}, and the second supremum is over all input distributions of $X_1$ satisfying \eqref{eq:peakpower1}. The first inequality in \eqref{eq:UB2} follows by upper-bounding each summand in the first sum by its supremum; and the second inequality follows from the stationarity of the channel, which is implies that $\sup I\bigl(X_k;Y_{r,k},Y_k\bigm| H_{3,k}=0\bigr)$ does not depend on $k$.

The first term on the RHS of \eqref{eq:UB2} is the capacity of the memoryless SIMO fading channel, which is given by \cite[Cor.~4.32]{lapidothmoser03_3}
\begin{IEEEeqnarray}{lCl}
\IEEEeqnarraymulticol{3}{l}{\sup  I\bigl(X_1;Y_{r,1},Y_1\bigm| H_{3,1}=0\bigr)\qquad} \nonumber\\
\IEEEeqnarraymulticol{3}{r}{{} = \log\log\SNR - 2\gamma + o(1)} \label{eq:UBmemoryless}
\end{IEEEeqnarray}
where $o(1)$ tends to zero as $\SNR\to\infty$. The second term on the RHS of \eqref{eq:UB2} can be upper-bounded by
\begin{IEEEeqnarray}{lCl}
\IEEEeqnarraymulticol{3}{l}{\frac{1}{n} \sum_{k=1}^n I\bigl(H_{1,1}^{k-1},H_{2,1}^{k-1}; Y_{r,k},Y_k\bigm| X_k, H_{3,k}=0\bigr)} \nonumber\\
\quad & \leq & \frac{1}{n} \sum_{k=1}^n I\bigl(H_{1,1}^{k-1},H_{2,1}^{k-1}; H_{1,k},H_{2,k}\bigr) \nonumber\\
& = & \frac{1}{n} \sum_{k=1}^n \biggl[I\bigl(H_{1,1}^{k-1}; H_{1,k}\bigr)+I\bigl(H_{2,1}^{k-1}; H_{2,k}\bigr)\biggr] \label{eq:UB3}\IEEEeqnarraynumspace
\end{IEEEeqnarray}
where the first step follows by the Data Processing Inequality \cite[Th.~2.8.1]{coverthomas91}; the second step follows because the processes $\{H_{1,k},\,k\in\Integers\}$ and $\{H_{2,k},\,k\in\Integers\}$ are independent.
By Ces\`aro's mean \cite[Th.~4.2.3]{coverthomas91}, it follows that
\begin{IEEEeqnarray}{lCl}
\IEEEeqnarraymulticol{3}{l}{\lim_{n\to\infty}\frac{1}{n} \sum_{k=1}^n \biggl[I\bigl(H_{1,1}^{k-1}; H_{1,k}\bigr)+I\bigl(H_{2,1}^{k-1}; H_{2,k}\bigr)\biggr]}\nonumber\\
\quad & = & \lim_{k\to\infty} I\bigl(H_{1,1}^{k-1}; H_{1,k}\bigr)+\lim_{k\to\infty} I\bigl(H_{2,1}^{k-1}; H_{2,k}\bigr) \nonumber\\
& = & \log\frac{1}{\eps_1^2} + \log\frac{1}{\eps_2^2} \label{eq:UB4}
\end{IEEEeqnarray}
where the second step follows because $\{H_{1,k},\,k\in\Integers\}$ and $\{H_{2,k},\,k\in\Integers\}$ are unit-variance Gaussian processes whose conditional variances, conditioned on the past $(k-1)$ fading coefficients, tend to $\eps_1^2$ and $\eps_2^2$ as $k$ tends to infinity \cite[Lemmas~5.7(b) \& 5.10(c)]{wienermasani57}. Combining \eqref{eq:whatever}--\eqref{eq:UB4}, we obtain
\begin{IEEEeqnarray}{lCl}
\IEEEeqnarraymulticol{3}{l}{\varlimsup_{n\to\infty}\sup \frac{1}{n} I\bigl(X_1^n;Y_{r,1}^n,Y_1^n\bigr)}\nonumber\\
\quad & \leq & \log\log\SNR-2\gamma+\log\frac{1}{\eps_1^2} + \log\frac{1}{\eps_2^2} + o(1). \label{eq:UBfirst}
\end{IEEEeqnarray}

To evaluate the second term on the RHS of \eqref{eq:UB1}, we note that by \eqref{eq:peakpower1}--$\eqref{eq:alpha}$ the channel inputs $X_k$ and $X_{r,k}$ satisfy\begin{equation}
|X_k|^2 + |X_{r,k}|^2 \leq \const{A}^2\,(1+\rho^2), \quad k\in\Integers\label{eq:UBpowerconstraint}
\end{equation}
with probability one. By maximizing over all joint distributions of $\bigl(X_1^n,X_{r,1}^n\bigr)$ satisfying \eqref{eq:UBpowerconstraint} (rather than \eqref{eq:peakpower1} and \eqref{eq:peakpower2}), it follows that
\[\varlimsup_{n\to\infty} \sup \frac{1}{n} I\bigl(X_1^n,X_{r,1}^n;Y_1^n\bigr)\]
is upper-bounded by the capacity of a MISO fading channel with fading processes $\{H_{2,k},\,k\in\Integers\}$ and $\{H_{3,k},\,k\in\Integers\}$ and with peak-power constraint $\const{A}^2(1+\rho^2)$. Consequently, we have \cite[Cor.~8]{kochlapidoth05_3}, \cite[Cor.~8]{kochlapidoth05_1}, \cite[Cor.~5.6]{koch04}
\begin{IEEEeqnarray}{lCl}
\IEEEeqnarraymulticol{3}{l}{\varlimsup_{n\to\infty} \sup \frac{1}{n} I\bigl(X_1^n,X_{r,1}^n;Y_1^n\bigr)} \nonumber\\
\quad & \leq & \log\log\Bigl(\SNR\,\bigl(1+\rho^2\bigr)\Bigr)-1-\gamma \nonumber\\
& & {} + \max\biggl\{\log\frac{1}{\eps^2_2},\log\frac{1}{\eps^2_3}\biggr\} + o(1). \label{eq:UBsecond}
\end{IEEEeqnarray}
Combining \eqref{eq:UBfirst} and \eqref{eq:UBsecond} with \eqref{eq:UB1}, we obtain
\begin{IEEEeqnarray}{lCll}
\IEEEeqnarraymulticol{4}{l}{C(\SNR,\rho)} \nonumber\\
\, & \leq & \min\Biggl\{& \log\log\SNR-2\gamma+\log\frac{1}{\eps_1^2}+\log\frac{1}{\eps^2_2},\nonumber\\
& & & \, \log\log\Bigl(\SNR\,\bigl(1+\rho^2\bigr)\Bigr) -1-\gamma\nonumber\\
& & &  \quad\qquad\qquad {} + \max\biggl\{\log\frac{1}{\eps_2^2},\log\frac{1}{\eps_3^2}\biggr\}\Biggr\}+o(1).\IEEEeqnarraynumspace
\end{IEEEeqnarray}
Computing the difference $\bigl\{C(\SNR,\rho)-\log\log\SNR\bigr\}$ in the limit as the SNR tends to infinity and using
\begin{equation}
\lim_{\SNR\to\infty} \bigl\{\log\log\Bigl(\SNR\,\bigl(1+\rho^2\bigr)\Bigr)-\log\log\SNR\bigr\} = 0
\end{equation}
for every fixed $\rho>0$, it follows that
\begin{IEEEeqnarray}{lCll}
\chi & \leq &  \min\Biggl\{& -2\gamma+\log\frac{1}{\eps_1^2}+\log\frac{1}{\eps_2^2},\nonumber\\
& & & \, \max\biggl\{-1-\gamma+\log\frac{1}{\eps_2^2},-1-\gamma+\log\frac{1}{\eps^2_3}\biggr\}\Biggr\}.\IEEEeqnarraynumspace
\end{IEEEeqnarray}
This proves Theorem~\ref{thm:upperbound}.

\section{Proof of Theorem~\ref{thm:lowerbound1}}
\label{sec:lowerbound}
To prove Theorem~\ref{thm:lowerbound1}, note that the first term in \eqref{eq:thmlowerbound1} is the fading number of the channel between the transmitter and receiver \cite[Cor.~4.42]{lapidothmoser03_3} and is achieved by turning the relay off. It thus remains to derive the second term, which follows from the following proposition.

\begin{proposition}[Decode-and-forward]
\label{prop:decodeforward}
Consider the fading relay channel described in Section~\ref{sec:channel}. Then the rate
\begin{IEEEeqnarray}{lCll}
R & = & \sup\lim_{n\to\infty}\frac{1}{n} \min\Bigl\{& I\bigl(X_1^n;Y_{r,1}^n\bigm| X_{r,1}^n\bigr), \nonumber\\
& & & \quad I\bigl(X_1^n,X_{r,1}^n;Y_1^n\bigr)\Bigr\} \label{eq:propDF}
\end{IEEEeqnarray}
is achievable. The supremum is over all i.i.d.\ processes $\{(X_k,X_{r,k}),\,k\in\Integers\}$
satisfying \eqref{eq:peakpower1} and \eqref{eq:peakpower2}.
\end{proposition}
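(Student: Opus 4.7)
The plan is to use the classical block-Markov decode-and-forward strategy of Cover and El Gamal, adapted to the stationary-ergodic fading setting. I would partition time into $B$ blocks of length $n$, carrying messages $m_1,\ldots,m_{B-1}$ at rate $R$. Fix any i.i.d.\ joint distribution on $(X_k,X_{r,k})$ satisfying the peak constraints \eqref{eq:peakpower1} and \eqref{eq:peakpower2}, and employ superposition coding: independently generate $2^{nR}$ relay codewords $x_r^n(m)$ with components drawn i.i.d.\ from the marginal of $X_{r,1}$, and for each $m'$ independently generate $2^{nR}$ transmitter codewords $x^n(m,m')$ with components drawn i.i.d.\ from the conditional law of $X_1$ given $X_{r,1}=x_{r,k}(m')$. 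In block $b$, the transmitter sends $x^n(m_b,m_{b-1})$ and the relay, which has already decoded $m_{b-1}$ from the previous block, sends $x_r^n(m_{b-1})$.

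Next I would argue that the relay can recover $m_b$ from its block-$b$ observation $y_r^n$: knowing $m_{b-1}$ and hence $x_r^n(m_{b-1})$, it performs joint-typicality decoding with respect to the law of $(X_1^n,X_{r,1}^n,Y_{r,1}^n)$. Standard random-coding analysis shows that the error probability vanishes provided $R<\frac{1}{n}I(X_1^n;Y_{r,1}^n\mid X_{r,1}^n)$ in the limit $n\to\infty$. The receiver then uses backward decoding across all $B$ blocks: having already guessed $m_b$ from the decoding of block $b+1$ (which fixed $x_r^n(m_b)$), it decodes $m_{b-1}$ from $y_1^n$ in block $b$ by searching for a unique $\hat m_{b-1}$ making $\bigl(x^n(m_b,\hat m_{b-1}),x_r^n(\hat m_{b-1}),y_1^n\bigr)$ jointly typical. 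This step succeeds provided $R<\frac{1}{n}I(X_1^n,X_{r,1}^n;Y_1^n)$ in the limit. Letting $B\to\infty$ absorbs the rate loss of the idle initialization block, and taking the supremum over input distributions yields \eqref{eq:propDF}.

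The main obstacle is the typicality/AEP argument in a setting where the channel has memory through the fading processes, so the usual discrete-memoryless typicality lemma does not apply directly. The key observation is that, because the inputs are restricted to be i.i.d., the joint processes $\{(X_k,X_{r,k},Y_{r,k}),\,k\in\Integers\}$ and $\{(X_k,X_{r,k},Y_k),\,k\in\Integers\}$ inherit stationarity and ergodicity from the fading and noise processes. The Shannon--McMillan--Breiman theorem then guarantees the almost-sure convergence of the normalized information densities to the mutual information rates appearing in \eqref{eq:propDF}, which is exactly what is required to justify joint-typicality decoding and to control the expected number of ``confusable'' codeword tuples through a union bound. Once this ergodic-theoretic AEP is in place, the remaining error-probability analysis is a routine adaptation of the argument in Cover and El Gamal's original paper.
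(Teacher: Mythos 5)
Your coding architecture and rate constraints match the paper's: block-Markov superposition encoding with i.i.d.\ inputs, relay decoding requiring $R<\lim_{n\to\infty}\frac1n I\bigl(X_1^n;Y_{r,1}^n\bigm|X_{r,1}^n\bigr)$, and typicality defined through entropy rates and justified by the Shannon--McMillan--Breiman theorem for the jointly stationary ergodic process induced by i.i.d.\ inputs. Where you genuinely depart from the paper is the receiver: you use backward decoding, while the paper uses a sliding-window decoder that combines the outputs of blocks $b-1$ and $b$ (its events \eqref{eq:appDFevent1} and \eqref{eq:appDFevent2}). Both give the same constraint, since $I(X;Y|X_r)+I(X_r;Y)=I(X,X_r;Y)$, but your choice actually buys something: because each backward-decoding step involves only a \emph{single} block of channel output, the two typicality tests for a candidate message never straddle two blocks, so you never encounter the cross-block dependence (created by the fading memory) that forces the paper to establish the extra exponent \eqref{eq:appvanishmem}, i.e.\ $\lim_{n\to\infty}\frac1n I\bigl(Y_1^n(b);X_{r,1}^n(b-1),Y_1^n(b-1)\bigr)=0$, via a data-processing argument through the fading processes. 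In this respect a correctly executed backward-decoding proof would be \emph{simpler} than the paper's.

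There is, however, a genuine gap in how you propose to finish: you describe the relay as ``knowing $m_{b-1}$'' and the receiver as ``having already guessed $m_b$,'' and you defer the rest to a ``routine adaptation'' of Cover and El~Gamal. That classical analysis bounds each block's error probability \emph{conditioned on no errors in previously decoded blocks}, and this conditioning is exactly what breaks down here: the event of successful decoding in other blocks constrains the fading realizations in those blocks, and since the fading has memory this changes the law of the fading in the block currently being decoded. The paper flags this as the main obstruction and restructures the analysis accordingly: the overall error event is included in a union, over blocks, of ``local'' events defined with respect to the \emph{true} messages (the true codeword tuple fails to be typical, or some false tuple is typical), and each such event is then bounded under the unconditioned law---SMB handling the former, and a differential-entropy-rate generalization of the joint-AEP bound (Cover--Thomas, Th.~14.2.3) handling the latter, before a final union bound over the fixed number of blocks. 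Your backward decoder admits the same fix, and more cleanly, but the proposal as written does not contain it, and without it the error analysis is not valid for this channel. A smaller omission: when $P_{X,X_r}$ is not absolutely continuous, densities and the typical sets built from them do not exist, and one needs the paper's quantization/partition argument to cover the general i.i.d.\ inputs allowed in the proposition.
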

\begin{proof}
See Appendix~\ref{app:decodeforward}.
\end{proof}
Proposition~\ref{prop:decodeforward} extends the decode-and-forward scheme proposed in \cite[Th.~1]{coverelgamal79} to channels with memory.

Theorem~\ref{thm:lowerbound1} follows from Proposition~\ref{prop:decodeforward} upon choosing $\{X_k,\,k\in\Integers\}$ and $\{X_{r,k},\,k\in\Integers\}$ to be i.i.d., circularly-symmetric, complex random variables, independent of each other and with
\begin{IEEEeqnarray}{rCll}
\log |X_k|^2 & \sim & \Uniform{\bigl[\log\log\const{A}^2,\log\const{A}^{2\alpha}\bigr]}, \quad & k\in\Integers\label{eq:DFindep1}\\
\log |X_{r,k}|^2 & \sim & \Uniform{\bigl[\log\const{A}_r^{2\beta},\log\const{A}_r^2\bigr]}, \quad & k\in\Integers\IEEEeqnarraynumspace\label{eq:DFindep2}
\end{IEEEeqnarray}
for some $0<\alpha<\beta<1$. Here $\Uniform{\set{S}}$ denotes the uniform distribution over the set $\set{S}$.

Before we prove Theorem~\ref{thm:lowerbound1}, we pause for intuition. Recall that if the channel between the transmitter and the receiver has a larger fading number than the channel between the relay and the receiver, then it is optimal to turn the relay off. This happens if $\chi_2\geq\chi_3$, and we therefore focus on the case where $\chi_2<\chi_3$. It follows from \eqref{eq:fadingnumberMISO} that in this case $\chi\leq\chi_3$. Since every signal sent from the transmitter to the relay interferes at the receiver, there is a trade-off between achieving high data rates from the transmitter to the relay (requiring a large transmit power) and minimizing the interference at the receiver (requiring a low transmit power). In order to attain a fading number that is close to the upper bound $\chi_3$, we choose $P_{X,X_r}(\cdot)$ such that $X_k/X_{r,k}$ vanishes as $\const{A}^2$ tends to infinity, thereby minimizing the interference at the receiver.

The input distribution \eqref{eq:DFindep1} and \eqref{eq:DFindep2} 
trades rates from the transmitter to the relay against rates from the relay to the receiver by using the parameters $\alpha$ and $\beta$. For instance, increasing $\alpha$ allows for larger rates between the transmitter and the relay, but requires a larger $\beta$ (since we have the condition $\beta>\alpha$) that decreases the rates achievable between the relay and the receiver.

\subsection{Lower bound on $\lim_{n\to\infty}\frac{1}{n}I\bigl(X_1^n;Y_{r,1}^n\bigm|X_{r,1}^n\bigr)$}
We lower-bound the first term on the RHS of \eqref{eq:propDF} via:
\begin{IEEEeqnarray}{lCl}
I\bigl(X_1^n;Y_{r,1}^n\bigm|X_{r,1}^n\bigr) & = & I\bigl(X_1^n;Y_{r,1}^n\bigr) \nonumber\\
& = & \sum_{k=1}^n I\bigl(X_k;Y_{r,1}^n\bigm| X_1^{k-1}\bigr) \nonumber\\
& \geq & \sum_{k=\kappa+1}^n I\bigl(X_k;Y_{r,1}^n\bigm| X_1^{k-1}\bigr) \label{eq:GK_1}
\end{IEEEeqnarray}
for some arbitrary $0\leq\kappa<n$. The first step in \eqref{eq:GK_1} follows because $X_1^n$ and $X_{r,1}^n$ are independent, and because $X_1^n$ and $X_{r,1}^n$ are also independent when conditioned on $Y_{r,1}^n$; the second step follows from the chain rule for mutual information; and the third step follows from the nonnegativity of mutual information. Using that $\{X_k,\,k\in\Integers\}$ is i.i.d.\ and that reducing observations does not increase mutual information, we obtain
\begin{IEEEeqnarray}{lCl}
\IEEEeqnarraymulticol{3}{l}{I\bigl(X_k;Y_{r,1}^n\bigm| X_1^{k-1}\bigr)} \nonumber\\
\quad & \geq & I\bigl(X_k;Y_{r,1}^k\bigm| X_1^{k-1}\bigr) \nonumber\\
& = & I\bigl(X_k;Y_{r,1}^k,X_1^{k-1}\bigr) \nonumber\\
& \geq & I\bigl(X_k;Y_{r,k-\kappa}^k,X_{k-\kappa}^{k-1}\bigr) \nonumber\\
& = & I\bigl(X_k;Y_{r,k-\kappa}^k,H_{1,k-\kappa}^{k-1},X_{k-\kappa}^{k-1}\bigr) - \Delta_1(\SNR,\kappa) \label{eq:lb1_1}
\end{IEEEeqnarray}
where $\Delta_1$ is defined as
\begin{IEEEeqnarray}{lCl}
\Delta_1(\SNR,\kappa) & \triangleq & I\bigl(X_k;Y_{r,k-\kappa}^k,H_{1,k-\kappa}^{k-1},X_{k-\kappa}^{k-1}\bigr) \nonumber\\
& & {} - I\bigl(X_k;Y_{r,k-\kappa}^k,X_{k-\kappa}^{k-1}\bigr) \nonumber\\
& = & I\bigl(X_k;H_{1,k-\kappa}^{k-1}\bigm| Y_{r,k-\kappa}^k,X_{k-\kappa}^{k-1}\bigr).
\end{IEEEeqnarray}
Due to the stationarity of the channel and of the proposed coding scheme, $\Delta_1(\SNR,\kappa)$ does not depend on $k$. Furthermore, it follows from \cite[App.~IX]{lapidothmoser03_3} that for every fixed $\kappa$ we have
\begin{equation}
\label{eq:vareps1}
\lim_{\SNR\to\infty} \Delta_1(\SNR,\kappa) = 0.
\end{equation}
We further lower-bound the RHS of \eqref{eq:lb1_1} by
\begin{IEEEeqnarray}{lCl}
I\bigl(X_k;Y_{r,k-\kappa}^k,H_{1,k-\kappa}^{k-1},X_{k-\kappa}^{k-1}\bigr) & \geq & I\bigl(X_k;Y_{r,k},H_{1,k-\kappa}^{k-1}\bigr)\nonumber\\
& = & I\bigl(X_k;Y_{r,k}\bigm| H_{1,k-\kappa}^{k-1}\bigr) \IEEEeqnarraynumspace\label{eq:lb1_2}
\end{IEEEeqnarray}
which follows because reducing observations does not increase mutual information and because $X_k$ and $H_{1,k-\kappa}^{k-1}$ are independent.

We express the fading coefficient $H_{1,k}$ at time $k$ as
\begin{equation*}
H_{1,k} = \bar{H}_{1,k} + \tilde{H}_{1,k}
\end{equation*}
where $\bar{H}_{1,k}=\Expec\bigl[H_{1,k}\bigm| H_{1,k-\kappa}^{k-1}\bigr]$ is the minimum-mean-square-error predictor of $H_{1,k}$ given $H_{1,k-1},\ldots,H_{1,k-\kappa}$, and where $\tilde{H}_{1,k}$ denotes the prediction error. Note that since $\{H_{1,k},\,k\in\Integers\}$ is a zero-mean, complex Gaussian process, it follows that $\bar{H}_{1,k}$ and $\tilde{H}_{1,k}$ are zero-mean, complex Gaussian random variables with variance $1-\eps_{1,\kappa}^2$ and $\eps_{1,\kappa}^2$, respectively. Further note that $\tilde{H}_{1,k}$ is independent of $H_{1,k-\kappa}^{k-1}$ \cite[Lemma 5.8]{wienermasani57}, and that \cite{doob90}, \cite[Lemmas 5.7(b) \& 5.10(c)]{wienermasani57}
\begin{equation}
\lim_{\kappa\to\infty} \eps_{1,\kappa}^2 = \eps_1^2 = \exp\Biggl(\int_{-1/2}^{1/2} \log F_1'(\lambda)\d\lambda\Biggr). \label{eq:noisy_pred}
\end{equation}
With this, we obtain
\begin{IEEEeqnarray}{lCl}
\IEEEeqnarraymulticol{3}{l}{I\bigl(X_k;Y_{r,k}\bigm| H_{1,k-\kappa}^{k-1}\bigr)}\nonumber\\
\quad & = & h\Bigl(\bigl(\bar{H}_{1,k}+\tilde{H}_{1,k}\bigr)X_k+ Z_{r,k}\Bigm| \bar{H}_{1,k}\Bigr) \nonumber\\
& & {} - h\Bigl(\bigl(\bar{H}_{1,k}+\tilde{H}_{1,k}\bigr)X_k+ Z_{r,k}\Bigm| \bar{H}_{1,k},X_k\Bigr) \nonumber\\
& \geq & h\bigl(H_{1,k} X_k+ Z_{r,k}\bigm| H_{1,k},Z_{r,k}\bigr)\nonumber\\
& & {} - h\Bigl(\bigl(\bar{H}_{1,k}+\tilde{H}_{1,k}\bigr)X_k+ Z_{r,k}\Bigm| \bar{H}_{1,k},X_k\Bigr) \label{eq:75}
\end{IEEEeqnarray}
where we have used that conditioning does not increase entropy. From the behavior of differential entropy under translation and scaling by a complex number \cite[Ths.~9.6.3 \& 9.6.4]{coverthomas91}, it follows that
\begin{IEEEeqnarray}{lCl}
\IEEEeqnarraymulticol{3}{l}{h\bigl(H_{1,k} X_k+ Z_{r,k}\bigm| H_{1,k},Z_{r,k}\bigr)} \nonumber\\
\quad & = & h\bigl(H_{1,k}X_k\bigm| H_{1,k}\bigr) \nonumber\\
& = & \E{\log|H_{1,k}|^2} + h(X_k)
\end{IEEEeqnarray}
and
\begin{IEEEeqnarray}{lCl}
\IEEEeqnarraymulticol{3}{l}{h\Bigl(\bigl(\bar{H}_{1,k}+\tilde{H}_{1,k}\bigr)X_k+ Z_{r,k}\Bigm| \bar{H}_{1,k},X_k\Bigr)} \nonumber\\
\quad & = & h\bigl(\tilde{H}_{1,k}X_k + Z_{r,k}\bigm| X_k\bigr) \nonumber\\
& = & \E{\log|X_k|^2} + h\biggl(\tilde{H}_{1,k}+\frac{Z_{r,k}}{X_k}\biggm| X_k\biggr).
\end{IEEEeqnarray}
Since, conditioned on $X_k$, the random variable $\tilde{H}_{1,k}+Z_{r,k}/X_k$ is Gaussian, we have
\begin{IEEEeqnarray}{lCl}
\IEEEeqnarraymulticol{3}{l}{h\biggl(\tilde{H}_{1,k}+\frac{Z_{r,k}}{X_k}\biggm| X_k\biggr)}\nonumber\\
\quad & = &  \log(\pi e) + \E{\log\biggl(\eps^2_{1,\kappa}+\frac{\sigma^2}{|X_k|^2}\biggr)} \nonumber\\
& \leq &  \log(\pi e) + \log\biggl(\eps^2_{1,\kappa}+\frac{\sigma^2}{\log\const{A}^2}\biggr) \label{eq:lb1_3_new}
\end{IEEEeqnarray}
where the last step follows because for our choice of input distribution \eqref{eq:DFindep1} we have $|X_k|^2\geq\log\const{A}^2$ with probability one.

Combining \eqref{eq:75}--\eqref{eq:lb1_3_new} yields
\begin{IEEEeqnarray}{lCl}
\IEEEeqnarraymulticol{3}{l}{I\bigl(X_k;Y_{r,k}\bigm| H_{1,k-\kappa}^{k-1}\bigr)}\nonumber\\
\quad & \geq & \E{\log|H_{1,k}|^2} + h(X_k) - \E{\log|X_k|^2} - \log\pi \nonumber\\
& & {}  - 1 - \log\biggl(\eps^2_{1,\kappa}+\frac{\sigma^2}{\log\const{A}^2}\biggr). \label{eq:lb1_3}
\end{IEEEeqnarray}
The first term on the RHS of \eqref{eq:lb1_3} evaluates to \cite[Sec.~4.331]{gradshteynryzhik00}
\begin{equation}
\label{eq:lb1_4}
\E{\log|H_{1,k}|^2} = -\gamma.
\end{equation}
The subsequent three terms yield \cite[Lemmas 6.15 \& 6.16]{lapidothmoser03_3}
\begin{IEEEeqnarray}{lCl}
\IEEEeqnarraymulticol{3}{l}{h(X_k) - \E{\log|X_k|^2} - \log\pi} \nonumber\\
\quad & = & h\bigl(\log|X_k|^2\bigr) \nonumber\\
& = & \log\bigl(\alpha \log\const{A}^2-\log\log\const{A}^2\bigr).\label{eq:lb1_5}
\end{IEEEeqnarray}
Combining \eqref{eq:lb1_1}--\eqref{eq:lb1_5}, and noting that the RHS of \eqref{eq:lb1_3} does not depend on $k$, we obtain
\begin{IEEEeqnarray}{lCll}
\IEEEeqnarraymulticol{4}{l}{\frac{1}{n} I\bigl(X_1^n;Y_{r,1}^n\bigm|X_{r,1}^n\bigr)}\nonumber\\
\quad & \geq & \frac{n-\kappa}{n} \Biggl[&\log\bigl(\alpha\log\const{A}^2-\log\log\const{A}^2\bigr) -1-\gamma \nonumber\\
& & & \, {} - \log\biggl(\eps^2_{1,\kappa}+\frac{\sigma^2}{\log\const{A}^2}\biggr)-\Delta_1(\SNR,\kappa)\Biggr]\IEEEeqnarraynumspace
\end{IEEEeqnarray}
which tends to
\begin{IEEEeqnarray}{l}
\log\bigl(\alpha\log\const{A}^2-\log\log\const{A}^2\bigr)-1-\gamma \nonumber\\
\quad {} -\log\biggl(\eps^2_{1,\kappa}+\frac{\sigma^2}{\log\const{A}^2}\biggr)-\Delta_1(\SNR,\kappa)
\end{IEEEeqnarray}
as $n$ tends to infinity. Using \eqref{eq:vareps1}, and noting that for every fixed $\rho>0$ we have
\begin{IEEEeqnarray}{rCl}
\IEEEeqnarraymulticol{3}{l}{\lim_{\SNR\to\infty} \bigl\{\log\bigl(\alpha\log\const{A}^2-\log\log\const{A}^2\bigr)-\log\log\SNR\bigr\}}\nonumber\\
\quad\qquad\qquad\qquad\qquad\qquad\qquad\qquad\qquad\qquad & = & \log\alpha \IEEEeqnarraynumspace
\end{IEEEeqnarray}
and
\begin{IEEEeqnarray}{rCl}
\lim_{\SNR\to\infty} \log\biggl(\eps^2_{1,\kappa}+\frac{\sigma^2}{\log\const{A}^2}\biggr) & = & \log\eps^2_{1,\kappa}
\end{IEEEeqnarray}
we obtain
\begin{IEEEeqnarray}{lCl}
\IEEEeqnarraymulticol{3}{l}{\varlimsup_{\SNR\to\infty}\biggl\{\lim_{n\to\infty} \frac{1}{n}  I\bigl(X_1^n;Y_{r,1}^n\bigm|X_{r,1}^n\bigr)-\log\log\SNR\biggr\}} \nonumber\\
\quad & \geq & {} -1-\gamma+\log\frac{1}{\eps_{1,\kappa}^2}+\log\alpha.
\end{IEEEeqnarray}
By \eqref{eq:noisy_pred}, this tends to
\begin{IEEEeqnarray}{lCl}
\IEEEeqnarraymulticol{3}{l}{\varlimsup_{\SNR\to\infty}\biggl\{\lim_{n\to\infty} \frac{1}{n}  I\bigl(X_1^n;Y_{r,1}^n\bigm|X_{r,1}^n\bigr)-\log\log\SNR\biggr\}} \nonumber\\
\quad & \geq & {} -1-\gamma+\log\frac{1}{\eps_{1}^2}+\log\alpha \label{eq:lb1_LB1}
\end{IEEEeqnarray}
as $\kappa$ tends to infinity.

\subsection{Lower bound on $\lim_{n\to\infty}\frac{1}{n} I\bigl(X_1^n,X_{r,1}^n;Y_1^n\bigr)$}
We continue by lower-bounding the second term on the RHS of \eqref{eq:propDF}. The proof is similar to the proof of \eqref{eq:lb1_LB1}, and we will therefore skip some of the details. We start with the chain rule for mutual information to obtain
\begin{IEEEeqnarray}{lCl}
\IEEEeqnarraymulticol{3}{l}{I\bigl(X_1^n,X_{r,1}^n;Y_1^n\bigr)} \nonumber\\
\quad & = & \sum_{k=1}^n I\bigl(X_k,X_{r,k};Y_1^n\bigm|X_1^{k-1},X_{r,1}^{k-1}\bigr) \nonumber\\
& = & \sum_{k=1}^n I\bigl(X_k,X_{r,k};Y_1^n,X_1^{k-1},X_{r,1}^{k-1}\bigr) \nonumber\\
& \geq & \sum_{k=\kappa+1}^n I\bigl(X_k,X_{r,k};Y_{k-\kappa}^k,X_{k-\kappa}^{k-1},X_{r,k-\kappa}^{k-1}\bigr)\label{eq:lb1_a}
\end{IEEEeqnarray}
for some arbitrary $0\leq\kappa<n$. Here the second step follows because $\{(X_k,X_{r,k}),\,k\in\Integers\}$ is i.i.d.  We next define $\Delta_2(\SNR,\kappa)$ as
\begin{IEEEeqnarray}{lCl}
\IEEEeqnarraymulticol{3}{l}{\Delta_2(\SNR,\kappa)} \nonumber\\
\quad & \triangleq & I\bigl(X_k,X_{r,k};Y_{k-\kappa}^k,X_{k-\kappa}^{k-1},X_{r,k-\kappa}^{k-1},H_{3,k-\kappa}^{k-1}\bigr) \nonumber\\
& & {}  - I\bigl(X_k,X_{r,k};Y_{k-\kappa}^k,X_{k-\kappa}^{k-1},X_{r,k-\kappa}^{k-1}\bigr) \nonumber\\
& = & I\bigl(X_k,X_{r,k};H_{3,k-\kappa}^{k-1} \bigm| Y_{k-\kappa}^k,X_{k-\kappa}^{k-1},X_{r,k-\kappa}^{k-1}\bigr) 
\end{IEEEeqnarray}
for which we show in Appendix~\ref{app:vareps2} that, for every fixed $\kappa$,
\begin{equation}
\label{eq:vareps2}
\lim_{\SNR\to\infty} \Delta_2(\SNR,\kappa) = 0.
\end{equation}
With this definition, every summand on the RHS of \eqref{eq:lb1_a} can be lower-bounded by
\begin{IEEEeqnarray}{lCl}
\IEEEeqnarraymulticol{3}{l}{I\bigl(X_k,X_{r,k};Y_{k-\kappa}^k,X_{k-\kappa}^{k-1},X_{r,k-\kappa}^{k-1}\bigr)} \nonumber\\
& = & I\bigl(X_k,X_{r,k};Y_{k-\kappa}^k,X_{k-\kappa}^{k-1},X_{r,k-\kappa}^{k-1},H_{3,k-\kappa}^{k-1}\bigr) -\Delta_2(\SNR,\kappa)\nonumber\\
& = & I\bigl(X_k,X_{r,k};Y_{k-\kappa}^k,X_{k-\kappa}^{k-1},X_{r,k-\kappa}^{k-1}\bigm| H_{3,k-\kappa}^{k-1}\bigr) \nonumber\\
& & {} - \Delta_2(\SNR,\kappa)\nonumber\\
& \geq & I\bigl(X_k,X_{r,k};Y_k \bigm| H_{3,k-\kappa}^{k-1}\bigr) - \Delta_2(\SNR,\kappa) \label{eq:lb1_b}
\end{IEEEeqnarray}
where the second step follows because $\{H_{3,k},\,k\in\Integers\}$ is independent of $(X_k,X_{r,k})$; the third step follows because reducing observations does not increase mutual information.

As above, we express the fading $H_{3,k}$ as
\begin{equation*}
H_{3,k} = \bar{H}_{3,k} + \tilde{H}_{r,k}
\end{equation*}
where $\bar{H}_{3,k}=\Expec\bigl[H_{3,k}\bigm| H_{3,k-\kappa}^{k-1}\bigr]$ and where $\tilde{H}_{3,k}$ is a zero-mean, complex Gaussian random variable of variance $\eps_{3,\kappa}^2$ satisfying
\begin{equation}
\lim_{\kappa\to\infty} \eps_{3,\kappa}^2 = \eps_3^2 = \exp\Biggl(\int_{-1/2}^{1/2}\log F_3'(\lambda)\d\lambda\Biggr). \label{eq:dobi_eps3}
\end{equation}
We thus have
\begin{IEEEeqnarray}{lCl}
\IEEEeqnarraymulticol{3}{l}{I\bigl(X_k,X_{r,k};Y_{k} \bigm| H_{3,k-\kappa}^{k-1}\bigr)} \nonumber\\
\quad & = & h\bigl(Y_k\bigm|H_{3,k-\kappa}^{k-1}\bigr) - h\bigl(Y_k\bigm|X_k,X_{r,k},H_{3,k-\kappa}^{k-1}\bigr) \nonumber\\
& \geq & h\bigl(Y_k\bigm|H_{3,k-\kappa}^{k-1},H_{3,k},H_{2,k},X_k,Z_k\bigr) \nonumber\\
& & {} - h\bigl(Y_k\bigm|X_k,X_{r,k},\bar{H}_{3,k}\bigr) \nonumber\\
& = & h\bigl(H_{3,k} X_{r,k}\bigm|H_{3,k}\bigr) \nonumber\\
& & {} - h\bigl(\tilde{H}_{3,k} X_{r,k}+H_{2,k}X_k+Z_k\bigm| X_k,X_{r,k}\bigr) \IEEEeqnarraynumspace \label{eq:lb1_c}
\end{IEEEeqnarray}
where the second step follows because conditioning does not increase entropy and because $\bar{H}_{3,k}$ is a function of $H_{3,k-\kappa}^{k-1}$; and the last step follows from the property of differential entropy under translation and because the random variable $H_{3,k}X_{r,k}$ is independent of $\bigl(H_{3,k-\kappa}^{k-1},H_{2,k},X_k,Z_k\bigr)$ conditioned on $H_{3,k}$.

As above, we use the behavior of differential entropy under scaling by a complex number to evaluate the entropies on the RHS of \eqref{eq:lb1_c} as
\begin{IEEEeqnarray}{lCl}
h\bigl(H_{3,k} X_{r,k}\bigm|H_{3,k}\bigr) & = & \E{\log|H_{3,k}|^2} + h(X_{r,k})
\end{IEEEeqnarray}
and
\begin{IEEEeqnarray}{lCl}
\IEEEeqnarraymulticol{3}{l}{h\bigl(\tilde{H}_{3,k} X_{r,k}+H_{2,k}X_k+Z_k\bigm| X_k,X_{r,k}\bigr) = \E{\log|X_{r,k}|^2}} \nonumber\\
\IEEEeqnarraymulticol{3}{r}{ {} + h\biggl(\tilde{H}_{3,k}+H_{1,k}\frac{X_k}{X_{r,k}}+\frac{Z_k}{X_{r,k}}\biggm| X_k,X_{r,k}\biggr). \IEEEeqnarraynumspace}
\end{IEEEeqnarray}
Conditioned on $(X_k,X_{r,k})$, the random variable \[\tilde{H}_{3,k}+H_{1,k} X_k/X_{r,k}+Z_k/X_{r,k}\] is complex Gaussian, and we obtain
\begin{IEEEeqnarray}{lCl}
\IEEEeqnarraymulticol{3}{l}{h\biggl(\tilde{H}_{3,k}+H_{1,k}\frac{X_k}{X_{r,k}}+\frac{Z_k}{X_{r,k}}\biggm| X_k,X_{r,k}\biggr)} \nonumber\\
\,\, & = & \log(\pi e) + \E{\log\biggl(\eps^2_{3,\kappa}+\frac{|X_k|^2}{|X_{r,k}|^2}+\frac{\sigma^2}{|X_{r,k}|^2}\biggr)} \nonumber\\
& \leq & \log(\pi e) + \log\biggl(\eps^2_{3,\kappa}+\frac{\const{A}^{2\alpha}}{\const{A}_r^{2\beta}}+\frac{\sigma^2}{\const{A}_r^{2\beta}}\biggr) \nonumber\\
& = & \log(\pi e) +  \log\biggl(\eps^2_{3,\kappa}+\frac{1}{\rho^{2\beta}\const{A}^{2(\beta-\alpha)}}+\frac{\sigma^2}{\rho^{2\beta}\const{A}^{2\beta}}\biggr) \IEEEeqnarraynumspace\label{eq:93}
\end{IEEEeqnarray}
where the inequality follows because for our choice of input distribution \eqref{eq:DFindep1} and \eqref{eq:DFindep2} we have $|X_k|^2\leq \const{A}^{2\alpha}$ and \mbox{$|X_{r,k}|^2 \geq \const{A}_r^{2\beta}$} with probability one; the last equality follows from \eqref{eq:alpha}.

Combining \eqref{eq:lb1_c}--\eqref{eq:93} yields
\begin{IEEEeqnarray}{lCl}
\IEEEeqnarraymulticol{3}{l}{I\bigl(X_k,X_{r,k};Y_{k} \bigm| H_{3,k-\kappa}^{k-1}\bigr)} \nonumber\\
\quad & \geq & \E{\log|H_{3,k}|^2} + h(X_{r,k}) - \E{\log|X_{r,k}|^2} - \log\pi \nonumber\\
& & {} -1 - \log\biggl(\eps^2_{3,\kappa}+\frac{1}{\rho^{2\beta}\const{A}^{2(\beta-\alpha)}}+\frac{\sigma^2}{\rho^{2\beta}\const{A}^{2\beta}}\biggr) \nonumber\\
& = & \log\bigl(\log\const{A}_r^2-\beta \log \const{A}_r^2\bigr) - \gamma \nonumber\\
& & {} -1 - \log\biggl(\eps^2_{3,\kappa}+\frac{1}{\rho^{2\beta}\const{A}^{2(\beta-\alpha)}}+\frac{\sigma^2}{\rho^{2\beta}\const{A}^{2\beta}}\biggr) \IEEEeqnarraynumspace\label{eq:lb1_d}
\end{IEEEeqnarray}
where the last step follows by evaluating the first four terms as in \eqref{eq:lb1_4} and \eqref{eq:lb1_5}.

Combining \eqref{eq:lb1_a}--\eqref{eq:lb1_d}, and noting that the RHS of \eqref{eq:lb1_d} does not depend on $k$, we obtain
\begin{IEEEeqnarray}{lCll}
\IEEEeqnarraymulticol{4}{l}{\frac{1}{n} \bigl(X_1^n,X_{r,1}^n;Y_1^n\bigr)} \nonumber\\
 \quad & \geq & \frac{n-\kappa}{n} \Biggl[& \log\log\bigl(\rho^2\const{A}^2\bigr) + \log(1-\beta) \nonumber\\
 & & & \, {}  -\gamma-1 - \Delta_2(\SNR,\kappa)\nonumber\\
& & & \, {} - \log\biggl(\eps^2_{3,\kappa}+\frac{1}{\rho^{2\beta}\const{A}^{2(\beta-\alpha)}}+\frac{\sigma^2}{\rho^{2\beta}\const{A}^{2\beta}}\biggr)\Biggr]\IEEEeqnarraynumspace
\end{IEEEeqnarray}
which tends to
\begin{IEEEeqnarray}{l}
\log\log\bigl(\rho^2\const{A}^2\bigr) + \log(1-\beta) -\gamma-1 \nonumber\\ {}- \log\biggl(\eps^2_{3,\kappa}+\frac{1}{\rho^{2\beta}\const{A}^{2(\beta-\alpha)}}+\frac{\sigma^2}{\rho^{2\beta}\const{A}^{2\beta}}\biggr) - \Delta_2(\SNR,\kappa) \IEEEeqnarraynumspace
\end{IEEEeqnarray}
as $n$ tends to infinity. Using \eqref{eq:vareps2}, and noting that for every $0<\alpha<\beta<1$ and for every fixed $\rho>0$ we have
\begin{IEEEeqnarray}{rCl}
\lim_{\SNR\to\infty} \bigl\{\log\log\bigl(\rho^2\const{A}^2\bigr) -\log\log\SNR\bigr\} & = & 0
\end{IEEEeqnarray}
and
\begin{IEEEeqnarray}{lCl}
\IEEEeqnarraymulticol{3}{l}{\lim_{\SNR\to\infty} \log\biggl(\eps^2_{3,\kappa}+\frac{1}{\rho^{2\beta}\const{A}^{2(\beta-\alpha)}}+\frac{\sigma^2}{\rho^{2\beta}\const{A}^{2\beta}}\biggr)} \nonumber\\
\qquad\qquad\qquad\qquad\qquad\qquad\qquad\qquad & = & \log\eps_{3,\kappa}^2 \IEEEeqnarraynumspace
\end{IEEEeqnarray}
we obtain
\begin{IEEEeqnarray}{lCl}
\IEEEeqnarraymulticol{3}{l}{\varlimsup_{\SNR\to\infty} \biggl\{\lim_{n\to\infty}\frac{1}{n} \bigl(X_1^n,X_{r,1}^n;Y_1^n\bigr) - \log\log\SNR\biggr\}} \nonumber\\
\quad\qquad\qquad\qquad & \geq & {} - 1 - \gamma + \log\frac{1}{\eps_{3,\kappa}^2} + \log(1-\beta). \IEEEeqnarraynumspace
\end{IEEEeqnarray}
By \eqref{eq:dobi_eps3}, this tends to
\begin{IEEEeqnarray}{lCl}
\IEEEeqnarraymulticol{3}{l}{\varlimsup_{\SNR\to\infty} \biggl\{\lim_{n\to\infty}\frac{1}{n} \bigl(X_1^n,X_{r,1}^n;Y_1^n\bigr) - \log\log\SNR\biggr\}} \nonumber\\
\quad & \geq & - 1 - \gamma + \log\frac{1}{\eps_{3}^2} + \log(1-\beta) \label{eq:lb1_LB2}
\end{IEEEeqnarray}
as we let $\kappa$ tend to infinity.

\subsection{Maximizing over $\alpha$ and $\beta$}
It follows from \eqref{eq:propDF}, \eqref{eq:lb1_LB1}, and \eqref{eq:lb1_LB2} that a decode-and-forward strategy can achieve the fading number
\begin{IEEEeqnarray}{lCll}
\chi & \geq & \min\biggl\{& -1-\gamma+\log\frac{1}{\eps_1^2}+\log\alpha, \nonumber\\
& & & \, {} -1-\gamma+\log\frac{1}{\eps_3^2}+\log(1-\beta)\biggr\} \label{eq:lb1_beforemax}
\end{IEEEeqnarray}
for every $0<\alpha<\beta<1$. We next prove Theorem~\ref{thm:lowerbound1} by maximizing over $\alpha$ and $\beta$. Note that the first argument of the minimum in \eqref{eq:lb1_beforemax} is increasing in $\alpha$, whereas the second argument is decreasing in $\beta$. Consequently, we have
\begin{IEEEeqnarray}{lCl}
\IEEEeqnarraymulticol{3}{l}{\sup_{0<\alpha<\beta<1}\min\biggl\{-1-\gamma+\log\frac{1}{\eps_1^2}+\log\alpha,}\nonumber\\
\IEEEeqnarraymulticol{3}{l}{\,\,\qquad\qquad\qquad {} -1-\gamma+\log\frac{1}{\eps_3^2}+\log(1-\beta)\biggr\}}\nonumber\\
\quad & = & \max_{0<\alpha<1} \min\biggl\{-1-\gamma+\log\frac{1}{\eps_1^2}+\log\alpha, \nonumber\\
 & & \,\,\quad\qquad\qquad {} -1-\gamma+\log\frac{1}{\eps_3^2}+\log(1-\alpha)\biggr\} \IEEEeqnarraynumspace\label{eq:lb1_getridofbeta}
\end{IEEEeqnarray}
where the maximum on the RHS of \eqref{eq:lb1_getridofbeta} exists because \mbox{$x\mapsto \log(x)$} is continuous on $0<x<1$ and because $\alpha=0$ or $\alpha=1$ would imply that the RHS of \eqref{eq:lb1_getridofbeta} is $-\infty$, which is clearly suboptimal.

We next note that the first argument of the minimum in \eqref{eq:lb1_getridofbeta} is increasing in $\alpha$, whereas the second argument is decreasing in $\alpha$. Consequently, the optimal $\alpha$ must satisfy
\begin{equation}
\label{eq:alpha=beta}
-1-\gamma+\log\frac{1}{\eps^2_1}+\log\alpha = -1-\gamma+\log\frac{1}{\eps_3^2} + \log(1-\alpha).
\end{equation}
Solving \eqref{eq:alpha=beta} yields
\begin{equation}
\label{eq:optalpha}
\alpha = \frac{\eps_1^2}{\eps_1^2+\eps_3^2}
\end{equation}
which combined with \eqref{eq:lb1_beforemax} proves Theorem~\ref{thm:lowerbound1}.

\section{Quantize Map and Forward}
\label{sec:discussion}
Recently, a strategy called \emph{quantize-map-and-forward} was introduced by Avestimehr et al. \cite{avestimehrdiggavitse11}. They showed that this scheme achieves rates that are within a constant gap of the max-flow min-cut upper bound, where the gap depends on the number of relays but not on the channel parameters. For example, for the Gaussian relay channel with a single relay, and for the two-relay Gaussian diamond network, the gap is not more than one bit.

However, for the Gaussian relay channel with a single relay, rates that are within one bit of the max-flow min-cut upper bound can also be achieved by decode-and-forward \cite[Th.~3.1]{avestimehrdiggavitse11}. We therefore believe that for the above fading relay channel quantize-map-and-forward will give rates that are comparable to the ones presented in Theorem~\ref{thm:lowerbound1}. (For fading relay channels with more than one relay, quantize-map-and-forward may be superior to decode-and-forward.) Indeed, if the fading coefficient of the channel between the transmitter and the relay can be predicted more accurately from its infinite past than the fading coefficient of the channel between the relay and the receiver, then at high SNR decode-and-forward achieves rates that are within one bit of capacity (Corollary~\ref{cor:1bit}). If the fading coefficient of the channel between the transmitter and the relay cannot be predicted more accurately than the fading coefficient of the channel between the relay and the receiver, then the gap between the upper bound \eqref{eq:thmUB} and the lower bound \eqref{eq:thmlowerbound1} may be larger than one bit.

\section{Conclusion}
\label{sec:conclusion}
We have studied the capacity of noncoherent fading relay channels, where all terminals are aware of the statistics of the fading but not of their realizations. We demonstrated that, if the fading coefficient of the channel between the transmitter and the receiver can be predicted more accurately from its infinite past than the fading coefficient of the channel between the relay and the receiver, then direct communication achieves the fading number. We further showed that if the fading coefficient of the channel between the transmitter and the relay can be predicted more accurately from its infinite past than the fading coefficient of the channel between the relay and the receiver, then the fading number of the relay channel is within one bit of the fading number of the TRC-MISO fading channel.


\appendices

\section{Proof of Proposition~\ref{prop:nonasymptotic}}
\label{app:nonasymptotic}
Proposition~\ref{prop:nonasymptotic} follows by combining the \emph{asymptotic} lower bounds on the capacity of noncoherent fading \emph{relay} channels (see Theorem~\ref{thm:lowerbound1}) with the \emph{nonasymptotic} lower bounds on the capacity of \emph{point-to-point} noncoherent fading channels \cite[Prop.~4.1]{koch09}. The proof of Proposition~\ref{prop:nonasymptotic} is thus very similar to the proof of Theorem~\ref{thm:lowerbound1}. For completeness, we repeat the main arguments here.

To prove Proposition~\ref{prop:nonasymptotic}, we use a decode-and-forward strategy (Proposition~\ref{prop:decodeforward}) and evaluate \eqref{eq:propDF}, namely,
\begin{IEEEeqnarray}{lCll}
R & = &  \lim_{n\to\infty} \sup \frac{1}{n} \min\Bigl\{& I\bigl(X_1^n;Y_{r,1}^n\bigm| X_{r,1}^n\bigr), \nonumber\\
& & & \quad I\bigl(X_1^n,X_{r,1}^n;Y_1^n\bigr)\Bigr\} \label{eq:appDF}
\end{IEEEeqnarray}
 for $\{X_k,\,k\in\Integers\}$ and $\{X_{r,k},\,k\in\Integers\}$ being i.i.d., circularly-symmetric, complex random variables, independent of each other and with
\begin{IEEEeqnarray}{lCll}
\log|X_k|^2 & \sim & \Uniform{\bigl[\alpha\log\bigl(\delta^2\const{A}^2\bigr),\alpha\log\const{A}^2\bigr]}, \quad & k\in\Integers\IEEEeqnarraynumspace\label{eq:85}\\
\log|X_{r,k}|^2 & \sim & \Uniform{\bigl[\log\bigl(\delta_r^2\const{A}_r^2\bigr),\log\const{A}_r^2\bigr]}, \quad & k\in\Integers \label{eq:86}
\end{IEEEeqnarray}
where $0<\alpha,\delta,\delta_r<1$. Note that \eqref{eq:85} and \eqref{eq:86} are generalizations of the input distributions \eqref{eq:DFindep1} and \eqref{eq:DFindep2} used to prove Theorem~\ref{thm:lowerbound1}. Both pairs of distributions are equal when
\begin{IEEEeqnarray}{lCl}
\delta & = & \frac{\bigl(\log\const{A}^2\bigr)^{\frac{1}{2\alpha}}}{\const{A}} \\
\delta_r & = & \const{A}_r^{-2(1-\beta)}.
\end{IEEEeqnarray}

\subsection{Lower bound on $\lim_{n\to\infty} \frac{1}{n} I\bigl(X_1^n;Y_{r,1}^n\bigm| X_{r,1}^n\bigr)$}
We evaluate the first term on the RHS of \eqref{eq:appDF} by using independent $\{X_k,\,k\in\Integers\}$ and $\{X_{r,k},\,k\in\Integers\}$ so that
\begin{equation}
\label{eq:appnonasymptotic1}
\lim_{n\to\infty} \frac{1}{n} I\bigl(X_1^n;Y_{r,1}^n\bigm| X_{r,1}^n\bigr) = \lim_{n\to\infty} \frac{1}{n} I\bigl(X_1^n;Y_{r,1}^n\bigr).
\end{equation}
The RHS of \eqref{eq:appnonasymptotic1} corresponds to the rates achievable over the point-to-point fading channel between the transmitter and the relay. Since \eqref{eq:85} is the distribution used to prove Proposition~4.1 in \cite{koch09} (provided that we replace $\const{A}^2$ and $\alpha$ in \cite{koch09} by $\const{A}^{2\alpha}$ and $\delta^\alpha$) it follows from \cite[Prop.~4.1]{koch09} and the definition of the SNR that
\begin{IEEEeqnarray}{lCl}
\IEEEeqnarraymulticol{3}{l}{\lim_{n\to\infty} \frac{1}{n} I\bigl(X_1^n;Y_{r,1}^n\bigr)} \nonumber\\
\,\,\, & \geq & \log\biggl(\frac{\sigma^{2(1-\alpha)}}{\delta^{\alpha}\SNR^{\alpha}}\biggr) -\int_{-1/2}^{1/2}\log\biggl(F_1'(\lambda)+\frac{\sigma^{2(1-\alpha)}}{\delta^{2\alpha}\SNR^{\alpha}}\biggr)\d\lambda \nonumber\\
& & {} - \exp\Biggl(\frac{\sigma^{2(1-\alpha)} e}{\alpha\log\bigl(\frac{1}{\delta^2}\bigr)\delta^{\alpha}\SNR^{\alpha}}\Biggr)\times \nonumber\\
& & \quad\qquad\times \Ei{-\frac{\sigma^{2(1-\alpha)} e}{\alpha\log\bigl(\frac{1}{\delta^2}\bigr)\delta^{\alpha}\SNR^{\alpha}}}. \label{eq:appnonasymptotic_L1}\IEEEeqnarraynumspace
\end{IEEEeqnarray}

\subsection{Lower bound on $\lim_{n\to\infty} \frac{1}{n} I\bigl(X_1^n,X_{r,1}^n;Y_1^n\bigr)$}
We next lower-bound the second term on the RHS of \eqref{eq:appDF}. Since reducing observations cannot increase mutual information, we have
\begin{IEEEeqnarray}{lCl}
\IEEEeqnarraymulticol{3}{l}{\lim_{n\to\infty} \frac{1}{n} I\bigl(X_1^n,X_{r,1}^n;Y_1^n\bigr)}\nonumber\\
\quad & = & \lim_{n\to\infty} \frac{1}{n} \sum_{k=1}^n I\bigl(X_k,X_{r,k};Y_1^n\bigm| X_1^{k-1},X_{r,1}^{k-1}\bigr) \nonumber\\
& \geq & \lim_{n\to\infty}\frac{1}{n} \sum_{k=1}^n I\bigl(X_k,X_{r,k};Y_1^k\bigm| X_1^{k-1},X_{r,1}^{k-1}\bigr). \IEEEeqnarraynumspace
\end{IEEEeqnarray}
Using a Ces\'aro-type theorem \cite[Th.~4.2.3]{coverthomas91}, we obtain
\begin{IEEEeqnarray}{lCl}
\IEEEeqnarraymulticol{3}{l}{\lim_{n\to\infty}\frac{1}{n} \sum_{k=1}^n I\bigl(X_k,X_{r,k};Y_1^k\bigm| X_1^{k-1},X_{r,1}^{k-1}\bigr)}\nonumber\\
\,\, & \geq & \varliminf_{k\to\infty} I\bigl(X_k,X_{r,k};Y_1^k\bigm| X_1^{k-1},X_{r,1}^{k-1}\bigr) \nonumber\\
& \geq & \varliminf_{k\to\infty} I\Biggl(X_k,X_{r,k};Y_k,\biggl\{\frac{Y_{\ell}}{X_{r,\ell}}\biggr\}_{\ell=1}^{k-1}\Biggm| X_1^{k-1},X_{r,1}^{k-1}\Biggr) \IEEEeqnarraynumspace \label{eq:appnonasymptotic2}
\end{IEEEeqnarray}
where $\varliminf$ denotes the \emph{limit inferior}. We next minimize the RHS of \eqref{eq:appnonasymptotic2} over all $\bigl(x_1^{k-1},x_{r,1}^{k-1}\bigr)$ satisfying
\begin{IEEEeqnarray}{lCcCll}
\delta^{2\alpha} \const{A}^{2\alpha} & \leq & |x_{\ell}|^2 & \leq & \const{A}^{2\alpha}, \quad &\ell=1,\ldots,k-1 \label{eq:stuff_satisfying1} \\
\delta_r^2\const{A}_r^2 & \leq & |x_{r,\ell}|^2 & \leq & \const{A}^2_r, \quad &\ell=1,\ldots,k-1.\label{eq:stuff_satisfying2}
\end{IEEEeqnarray}
Specifically, we show that the RHS of \eqref{eq:appnonasymptotic2} is minimized when
\begin{equation}
|x_{\ell}|^2=\const{A}^{2\alpha} \quad \textnormal{and} \quad |x_{r,\ell}|^2=\delta_r^2\const{A}_r^2
\end{equation} 
for $\ell=1,\ldots,k-1$. Indeed, since $\{H_{2,k},\,k\in\Integers\}$ and $\{Z_k,\,k\in\Integers\}$ are both sequences of i.i.d.\ Gaussian random variables, it follows that, conditioned on \[\bigl(X_1^{k-1},X_{r,1}^{k-1}\bigr)=\bigl(x_1^{k-1},x_{r,1}^{k-1}\bigr)\] the random variables $Y_{\ell}/x_{r,\ell}$ satisfy
\begin{IEEEeqnarray}{lCl}
\frac{Y_{\ell}}{x_{r,\ell}} & = & H_{3,\ell}+H_{2,\ell} \frac{x_{\ell}}{x_{r,\ell}} + \frac{Z_{\ell}}{x_{r,\ell}} \nonumber\\
& \eqlaw & H_{3,\ell}+\biggl(\frac{x_{\ell}}{x_{r,\ell}}+\frac{\sigma^2}{x_{r,\ell}}\biggr) W_{\ell} \label{eq:appnonasymptotic3}
\end{IEEEeqnarray}
where $\{W_k,\,k\in\Integers\}$ is a sequence of i.i.d., zero-mean, unit-variance, complex Gaussian random variables, and where $A\eqlaw B$ indicates that $A$ and $B$ have the same law. The second term on the RHS of \eqref{eq:appnonasymptotic3} can be viewed as an additive-noise term. Thus, by choosing $|x_{\ell}|^2=\const{A}^{2\alpha}$ and $|x_{r,\ell}|^2=\delta_r^2\const{A}_r^2$, the variance of the additive noise is maximized. We next argue that maximizing the variance of the additive noise minimizes the mutual information. Indeed, suppose that the noise that minimizes the mutual information is not the one with maximum variance. Then we can add i.i.d.\ zero-mean Gaussian noise $\{U_k,\,k\in\Integers\}$ to $Y_{\ell}/x_{r,\ell}$ such that $Y_{\ell}/x_{r,\ell}+U_{\ell}$ has the same distribution as $Y_{\ell}/x_{r,\ell}$ when $|x_{\ell}|^2=\const{A}^{2\alpha}$ and $|x_{r,\ell}|^2=\delta_r^2\const{A}_r^2$. The claim follows by the Data Processing Inequality.

Using \eqref{eq:alpha}, we obtain from \eqref{eq:appnonasymptotic2} that
\begin{IEEEeqnarray}{lCl}
\IEEEeqnarraymulticol{3}{l}{\varliminf_{k\to\infty} I\Biggl(X_k,X_{r,k};Y_k,\biggl\{\frac{Y_{\ell}}{X_{r,\ell}}\biggr\}_{\ell=1}^{k-1}\Biggm| X_1^{k-1},X_{r,1}^{k-1}\Biggr)} \nonumber\\
\quad & \geq & \varliminf_{k\to\infty} I\Bigl(X_k,X_{r,k};Y_k, \bigl\{H_{3,\ell}+\xi W_{\ell}\bigr\}_{\ell=1}^{k-1}\Bigr) \nonumber\\
& = & \varliminf_{k\to\infty} I\Bigl(X_k,X_{r,k};Y_k \Bigm| \bigl\{H_{3,\ell}+\xi W_{\ell}\bigr\}_{\ell=1}^{k-1}\Bigr) \label{eq:appnonasymptotic4}
\end{IEEEeqnarray}
where
\begin{equation*}
\xi \triangleq \frac{1}{\delta_r^2\rho^2\const{A}^{2(1-\alpha)}}+\frac{\sigma^2}{\delta_r^2\rho^2\const{A}^2}.
\end{equation*}
Here, the first step follows by minimizing over all $\bigl(x_1^{k-1},x_{r,1}^{k-1}\bigr)$ satisfying \eqref{eq:stuff_satisfying1} and \eqref{eq:stuff_satisfying2} and because the joint law of $\bigl(X_k,X_{r,k},Y_k,\{H_{3,\ell}+\xi W_{\ell}\}_{\ell=1}^{k-1}\bigr)$
does not depend on $\bigl(x_1^{k-1},x_{r,1}^{k-1}\bigr)$; the last step follows because the pair $\bigl(X_k,X_{r,k}\bigr)$ is independent of $\bigl(\{H_{3,k},\,k\in\Integers\}, \{W_k,\,k\in\Integers\}\bigr)$.

We continue by expressing the mutual information as the difference of two differential entropies, i.e., we have
\begin{IEEEeqnarray}{lCl}
\IEEEeqnarraymulticol{3}{l}{I\Bigl(X_k,X_{r,k};Y_k \Bigm| \bigl\{H_{3,\ell}+\xi W_{\ell}\bigr\}_{\ell=1}^{k-1}\Bigr)}\nonumber\\
\quad & = &  h\Bigl(Y_k\Bigm| \bigl\{H_{3,\ell}+\xi W_{\ell}\bigr\}_{\ell=1}^{k-1}\Bigr)\nonumber\\
& & {} - h\Bigl(Y_k\Bigm|X_k,X_{r,k},\bigl\{H_{3,\ell}+\xi W_{\ell}\bigr\}_{\ell=1}^{k-1}\Bigr). \IEEEeqnarraynumspace \label{eq:appbla}
\end{IEEEeqnarray}
For the second entropy, it follows from the behavior of differential entropy under scaling by a complex number that
\begin{IEEEeqnarray}{lCl}
\IEEEeqnarraymulticol{3}{l}{h\Bigl(Y_k\Bigm|X_k,X_{r,k},\bigl\{H_{3,\ell}+\xi W_{\ell}\bigr\}_{\ell=1}^{k-1}\Bigr)} \nonumber\\
\quad & = & \E{\log|X_{r,k}|^2}\nonumber\\
& & {} +  h\biggl(\frac{Y_k}{X_{r,k}}\biggm| X_k,X_{r,k},\bigl\{H_{3,\ell}+\xi W_{\ell}\bigr\}_{\ell=1}^{k-1}\biggr) \nonumber\\
& \leq & \log\bigl(\delta_r\rho^2\const{A}^2\bigr)+ \log(\pi e) + \log\bigl(\eps_{3,k}^2(\xi)+\xi\bigr) \IEEEeqnarraynumspace \label{eq:appnonasymptotic5}
\end{IEEEeqnarray}
where $\eps_{3,k}^2(\xi)$ denotes the minimum-mean-square error in predicting $H_{3,k}$ from $(H_{3,k-1}+\xi W_{k-1}),\ldots,(H_{3,1}+\xi W_{1})$. Note that we have \cite[Sec.~III]{lapidoth05}
\begin{equation}
\label{eq:appnoisypred}
\lim_{k\to\infty} \eps_{3,k}^2(\xi) = \exp\Biggl(\int_{-1/2}^{1/2} \log\bigl(F_3'(\lambda)+\xi\bigr)\d\lambda\Biggr) - \xi.
\end{equation}
The inequality in \eqref{eq:appnonasymptotic5} follows by evaluating $\E{\log|X_{r,k}|^2}$ for the distribution \eqref{eq:86} and by noting that, conditioned on
\begin{equation*}
\Bigl(X_k,X_{r,k},\bigl\{H_{3,\ell}+\xi W_{\ell}\bigr\}_{\ell=1}^{k-1}\Bigr)
\end{equation*}
the random variable \[\frac{Y_k}{X_{r,k}} = H_{3,k}+H_{2,k}\frac{X_k}{X_{r,k}}+\frac{Z_k}{X_{r,k}}\] is complex Gaussian with variance
\begin{equation*}
\eps^2_{3,k}(\xi)+\frac{|X_k|^2}{|X_{r,k}|^2} + \frac{\sigma^2}{|X_{r,k}|^2}.
\end{equation*}
Since maximizing the differential entropy of a complex Gaussian random variable is tantamount to maximizing its variance, and since the variance is maximized for $|x_k|^2=\const{A}^{2\alpha}$ and $|x_{r,k}|^2=\delta_r^2\const{A}_r^2$, the inequality in \eqref{eq:appnonasymptotic5} follows.

For the first entropy on the RHS of \eqref{eq:appbla}, we have
\begin{IEEEeqnarray}{lCl}
\IEEEeqnarraymulticol{3}{l}{h\Bigl(Y_k\Bigm|\bigl\{H_{3,\ell}+\xi W_{\ell}\bigr\}_{\ell=1}^{k-1}\Bigr)} \nonumber\\
\,\, & \geq & h\bigl(H_{3,k} X_{r,k}+H_{2,k} X_k + Z_k \bigm| H_{3,k}\bigr) \nonumber\\
& \geq & \E{\log\biggl(e^{\log|H_{3,k}|^2+h(X_{r,k})}+e^{h(H_{2,k}X_k)}+\pi e\sigma^2\biggr)} \IEEEeqnarraynumspace\label{eq:appnonasymptotic6}
\end{IEEEeqnarray}
where the first step follows because conditioning does not increase entropy and because, conditioned on $H_{3,k}$, the channel output $Y_k$ is independent of $\{H_{3,\ell}+\xi W_{\ell}\}_{\ell=1}^{k-1}$; the second step follows from the Entropy Power Inequality \cite[Th.~16.6.3]{coverthomas91} and from the property of differential entropy under scaling by a complex number. 

Following the same steps as in \eqref{eq:lb1_5}, the differential entropy of $X_{r,k}$ for the distribution \eqref{eq:86} can be evaluated as
\begin{IEEEeqnarray}{lCl}
h(X_{r,k}) 
& = & \log\biggl(\log\biggl(\frac{1}{\delta_r^2}\biggr) \delta_r\const{A}_r^2\pi\biggr). \label{eq:appEP1}
\end{IEEEeqnarray}
We lower-bound $h(H_{2,k}X_k)$ by conditioning on $H_{2,k}$ and by using the property of differential entropy under scaling by a complex number:
\begin{equation}
h\bigl(H_{2,k}X_k\bigr) \geq \E{\log |H_{2,k}|^2} + h(X_k). \label{eq:120}
\end{equation}
By noting that $\E{\log |H_{2,k}|^2}=-\gamma$, cf.~\eqref{eq:lb1_4}, and following the same steps as in \eqref{eq:lb1_5} to evaluate the differential entropy of $X_k$ for the distribution \eqref{eq:85}, it follows that
\begin{equation}
h\bigl(H_{2,k}X_k\bigr) \geq \log\biggl(e^{-\gamma}\alpha\log\biggl(\frac{1}{\delta^2}\biggr)\delta^{\alpha}\const{A}^{2\alpha}\pi\biggr). \label{eq:appEP2}
\end{equation}
Combining \eqref{eq:appEP1} and \eqref{eq:appEP2} with \eqref{eq:appnonasymptotic6} yields
\begin{IEEEeqnarray}{lCl}
\IEEEeqnarraymulticol{3}{l}{h\Bigl(Y_k\Bigm|\bigl\{H_{3,\ell}+\xi W_{\ell}\bigr\}_{\ell=1}^{k-1}\Bigr)} \nonumber\\
\quad & \geq & \Exp\Biggl[\log\biggl(|H_{3,k}|^2\log\biggl(\frac{1}{\delta_r^2}\biggr)\delta_r\rho^2\const{A}^2\pi \nonumber\\
& & {} \qquad\qquad +e^{-\gamma}\alpha\log\biggl(\frac{1}{\delta^2}\biggr)\delta^{\alpha}\const{A}^{2\alpha}\pi+\pi e\sigma^2\biggr)\Biggr] \nonumber\\
& = & \log\pi + \log\log\frac{1}{\delta_r^2} + \log\bigl(\delta_r\rho^2\const{A}^2\bigr) \nonumber\\
& & {} + \E{\log\Biggl(|H_{3,k}|^2+\frac{e^{-\gamma}\alpha\log\bigl(\frac{1}{\delta^2}\bigr)\delta^{\alpha}\const{A}^{2\alpha}+e\sigma^2}{\log\bigl(\frac{1}{\delta_r^2}\bigr)\delta_r\rho^2\const{A}^2}\Biggr)} \nonumber\\
& = &  \log\pi + \log\log\frac{1}{\delta_r^2} + \log\bigl(\delta_r\rho^2\const{A}^2\bigr) \nonumber\\
& & {} + \log\Biggl(\frac{e^{-\gamma}\alpha\log\bigl(\frac{1}{\delta^2}\bigr)\delta^{\alpha}\const{A}^{2\alpha}+e\sigma^2}{\log\bigl(\frac{1}{\delta_r^2}\bigr)\delta_r\rho^2\const{A}^2}\Biggr) \nonumber\\
& & {} - \exp\Biggl(\frac{e^{-\gamma}\alpha\log\bigl(\frac{1}{\delta^2}\bigr)\delta^{\alpha}\const{A}^{2\alpha}+e\sigma^2}{\log\bigl(\frac{1}{\delta_r^2}\bigr)\delta_r\rho^2\const{A}^2}\Biggr)\times\nonumber\\
& & \quad\qquad \times\Ei{-\frac{e^{-\gamma}\alpha\log\bigl(\frac{1}{\delta^2}\bigr)\delta^{\alpha}\const{A}^{2\alpha}+e\sigma^2}{\log\bigl(\frac{1}{\delta_r^2}\bigr)\delta_r\rho^2\const{A}^2}} \label{eq:appnonasymptotic7}
\end{IEEEeqnarray}
where the last step follows by noting that $|H_{3,k}|^2$ has an exponential distribution with mean $1$ for which the expectation is given in \cite[Sec.~4.337]{gradshteynryzhik00}. 

Together with \eqref{eq:appbla} and \eqref{eq:appnonasymptotic5}, this yields
\begin{IEEEeqnarray}{lCl}
\IEEEeqnarraymulticol{3}{l}{I\Bigl(X_k,X_{r,k};Y_k \Bigm| \bigl\{H_{3,\ell}+\xi W_{\ell}\bigr\}_{\ell=1}^{k-1}\Bigr)}\nonumber\\
\quad & \geq & \log\Biggl(\frac{e^{-(\gamma+1)}\alpha\log\bigl(\frac{1}{\delta^2}\bigr)\delta^{\alpha}\const{A}^{2\alpha}+\sigma^2}{\delta_r\rho^2\const{A}^2}\Biggr)\nonumber\\
& & {} - \exp\Biggl(\frac{e^{-\gamma}\alpha\log\bigl(\frac{1}{\delta^2}\bigr)\delta^{\alpha}\const{A}^{2\alpha}+e\sigma^2}{\log\bigl(\frac{1}{\delta_r^2}\bigr)\delta_r\rho^2\const{A}^2}\Biggr)\times \nonumber\\
& & \qquad\quad\times \Ei{-\frac{e^{-\gamma}\alpha\log\bigl(\frac{1}{\delta^2}\bigr)\delta^{\alpha}\const{A}^{2\alpha}+e\sigma^2}{\log\bigl(\frac{1}{\delta_r^2}\bigr)\delta_r\rho^2\const{A}^2}} \nonumber\\
& & {} - \log\bigl(\eps_{r,k}^2(\xi)+\xi\bigr) \label{eq:appnonasymptotic8}
\end{IEEEeqnarray}
which, by \eqref{eq:appnoisypred}, tends to
\begin{IEEEeqnarray}{l}
 \log\Biggl(\frac{e^{-(\gamma+1)}\alpha\log\bigl(\frac{1}{\delta^2}\bigr)\delta^{\alpha}\const{A}^{2\alpha}+\sigma^2}{\delta_r\rho^2\const{A}^2}\Biggr) \nonumber\\
\quad {} -  \int_{-1/2}^{1/2}\log\biggl(F'_3(\lambda)+\frac{1}{\delta_r^2\rho^2\const{A}^{2(1-\alpha)}}+\frac{\sigma^2}{\delta_r^2\rho^2\const{A}^2}\biggr)\d\lambda \nonumber\\
 \quad {} - \exp\Biggl(\frac{e^{-\gamma}\alpha\log\bigl(\frac{1}{\delta^2}\bigr)\delta^{\alpha}\const{A}^{2\alpha}+e\sigma^2}{\log\bigl(\frac{1}{\delta_r^2}\bigr)\delta_r\rho^2\const{A}^2}\Biggr)\times\nonumber\\
\qquad\qquad \times \Ei{-\frac{e^{-\gamma}\alpha\log\bigl(\frac{1}{\delta^2}\bigr)\delta^{\alpha}\const{A}^{2\alpha}+e\sigma^2}{\log\bigl(\frac{1}{\delta_r^2}\bigr)\delta_r\rho^2\const{A}^2}} \label{eq:appnonasymptotic8_2}
\end{IEEEeqnarray}
as $k$ tends to infinity. It thus follows from \eqref{eq:appnonasymptotic2}--\eqref{eq:appnonasymptotic8_2} and the definition of the SNR that
\begin{IEEEeqnarray}{lCl}
\IEEEeqnarraymulticol{3}{l}{\lim_{n\to\infty} \frac{1}{n} I\bigl(X_1^n,X_{r,1}^n;Y_1^n\bigr)}\nonumber\\
\,\,\, & \geq & \log\Biggl(\frac{e^{-(\gamma+1)}\alpha\log\bigl(\frac{1}{\delta^2}\bigr)\delta^{\alpha}\SNR^{\alpha}\sigma^{2(\alpha-1)}+1}{\delta_r\rho^2\SNR}\Biggr) \nonumber\\
& & - \int_{-1/2}^{1/2}\log\biggl(F'_3(\lambda)+\frac{\sigma^{2(\alpha-1)}}{\delta_r^2\rho^2\SNR^{1-\alpha}}+\frac{1}{\delta_r^2\rho^2\SNR}\biggr)\d\lambda \nonumber\\
& &  {} - \exp\Biggl(\frac{e^{-\gamma}\alpha\log\bigl(\frac{1}{\delta^2}\bigr)\delta^{\alpha}\SNR^{\alpha}\sigma^{2(\alpha-1)}+e}{\log\bigl(\frac{1}{\delta_r^2}\bigr)\delta_r\rho^2\SNR}\Biggr)\times\nonumber\\
& & \qquad \times\Ei{-\frac{e^{-\gamma}\alpha\log\bigl(\frac{1}{\delta^2}\bigr)\delta^{\alpha}\SNR^{\alpha}\sigma^{2(\alpha-1)}+e}{\log\bigl(\frac{1}{\delta_r^2}\bigr)\delta_r\rho^2\SNR}} \label{eq:appnonasymptotic_L2}.\IEEEeqnarraynumspace
\end{IEEEeqnarray}
Combining \eqref{eq:appnonasymptotic_L1} and \eqref{eq:appnonasymptotic_L2} with \eqref{eq:propDF}, and maximizing over $0<\delta,\alpha,\delta_r<1$, proves Proposition~\ref{prop:nonasymptotic}.

\section{Proof of Proposition~\ref{prop:decodeforward}}
\label{app:decodeforward}
Proposition~\ref{prop:decodeforward} generalizes to channels with memory a classic result based on the decode-and-forward strategy proposed in \cite{coverelgamal79}. As in the memoryless case, we use \emph{block-Markov superposition encoding}, cf.\ \cite[Ch.~9]{kramer07}. Most steps of the proof in \cite{coverelgamal79} can be readily extended to channels with memory by defining the set of typical sequences via entropy rates rather than via entropies, cf.~\eqref{eq:appAEP}. The main difference is that for memoryless channels the events \eqref{eq:appDFevent1} and \eqref{eq:appDFevent2} below are independent of each other, whereas for channels with memory these events are dependent. Consequently, we obtain a third term on the RHS of \eqref{eq:appDFrecrate1} for which we need to show that its exponent equals to zero, cf.~\eqref{eq:appvanishmem}. For the sake of completeness, we give a detailed proof below.

\textbf{Codebook construction:} Encoding is performed in $\const{B}+1$ blocks of $n$ symbols. For each block, we generate a separate codebook. That is, we fix some distribution $P_{X,X_r}(\cdot)$ and some rate $\tilde{R}$. For every block $b=1,\ldots,\const{B}+1$ the codebook of the relay is constructed by drawing $\lfloor e^{n\tilde{R}}\rfloor$ codewords $x_{r,1}^n(v;b)$, $v=1,\dots,\lfloor e^{n\tilde{R}}\rfloor$ i.i.d.\ according to the distribution $P_{X_r}(\cdot)$. (Here, $\lfloor a\rfloor$ denotes the largest integer that is less than or equal to $a$.) As for the codebook of the transmitter, for every $v=1,\ldots,\lfloor e^{n\tilde{R}}\rfloor$ we generate $\lfloor e^{n\tilde{R}}\rfloor$ codewords $x_1^n(w,v;b)$, $w=1,\ldots,\lfloor e^{n\tilde{R}}\rfloor$ independently according to the conditional distribution $P_{X|X_r}(\cdot)$, i.e., we draw each symbol $x_{k}(w,v;b)$ according to $P_{X|X_r}\bigl(\cdot\bigm|x_{r,k}(v;b)\bigr)$.

In the proof, we assume that $P_{X,X_r}(\cdot)$ is absolutely continuous with respect to the Lebesgue measure, which implies that the random variables $(X_1^n,X_{r,1}^n,Y_{r,1}^n,Y_1^n)$ have a joint probability density function. (We shall denote the probability density function of a random variable $A$ by $f_{A}(\cdot)$.) The case where $P_{X,X_r}(\cdot)$ is not absolutely continuous with respect to the Lebesgue measure can be treated by partitioning the sample spaces of the channel inputs and outputs into a finite collection of mutually exclusive events, and by studying the resulting discrete problem following the steps below. (To this end, we need to replace the \emph{differential} entropy rates in the definition of jointly typical sequences \eqref{eq:appAEP} with entropy rates.) The result then follows by taking the supremum over all partitions, cf.\ \cite[Sec.~2.5]{gallager68}.

\textbf{Transmitter:} The message $m$ to be transmitted is divided into $\const{B}$ equally-sized blocks $m_1,\ldots,m_{\const{B}}$ of $\log\bigl(\lfloor e^{n\tilde{R}}\rfloor\bigr)$ nats each. In block $b=1,\ldots,\const{B}+1$ the transmitter sends out the codeword $x_{1}^n(m_b,m_{b-1};b)$, where we set $m_0=m_{\const{B}+1}=1$.

\textbf{Relay:} After the transmission of block $b=1,\ldots,\const{B}$ is completed, the relay has observed the sequence of outputs $y_{r,1}^n(b)$ and tries to find an $m_{r,b}$ such that, for some arbitrary $\varepsilon>0$,
\begin{IEEEeqnarray}{r}
\bigl(x_1^n(m_{r,b},\hat{m}_{r,b-1};b),x_{r,1}^n(\hat{m}_{r,b-1};b),y_{r,1}^n(b)\bigr) \qquad\IEEEeqnarraynumspace\nonumber\\
 \in \set{A}_{\varepsilon}(X_1^n,X_{r,1}^n,Y_{r,1}^n) \IEEEeqnarraynumspace\label{eq:apptypicalrelay}
\end{IEEEeqnarray}
where $\hat{m}_{r,b-1}$ denotes the relay's estimate of the message for block $b-1$, and where $\set{A}_{\varepsilon}(X_1^n,X_{r,1}^n,Y_{r,1}^n)$ denotes the \emph{set of jointly typical sequences} with respect to $P_{X_1^n,X_{r,1}^n,Y_{r,1}^n}(\cdot)$. That is
\begin{IEEEeqnarray}{lCll}
\IEEEeqnarraymulticol{4}{l}{\set{A}_{\varepsilon}(A_{1,1}^n,\ldots,A_{\tau,1}^n)}\nonumber\\
\quad & \triangleq & \Biggl\{& a_{\set{I},1}^n\in\Complex^{n|\set{I}|}, \forall \set{I}\subseteq \{1,\ldots,\tau\}\colon \nonumber\\
& & & \biggl|-\frac{1}{n}\log f_{A_{\set{I},1}^n}(a_{\set{I},1}^n)-h\bigl(\{A_{\set{I},k}\}\bigr)\biggr| < \varepsilon \Biggr\} \IEEEeqnarraynumspace\label{eq:appAEP}
\end{IEEEeqnarray}
where $a_{\set{I},1}^n$ denotes the set of sequences $a_{t,1}^n$ with $t\in\set{I}$; $|\set{I}|$ denotes the cardinality of the set $\set{I}$; and $h(\{A_{\set{I},k}\})$ denotes the entropy rate of the random processes $\{A_{t,k},\,k\in\Integers\}$, $t\in\set{I}$, i.e., we have
\begin{equation*}
h\bigl(\{A_{\set{I},k}\}\bigr) \triangleq \lim_{n\to\infty} \frac{h(A_{\set{I},1}^n)}{n}.
\end{equation*}
If one or more $m_{r,b}$ can be found satisfying \eqref{eq:apptypicalrelay}, then the relay chooses one of them, calls this choice $\hat{m}_{r,b}$, and transmits $x_{r,1}^n(\hat{m}_{r,b};b+1)$ in the subsequent block. If no such $\hat{m}_{r,b}$ is found, then the relay sets $\hat{m}_{r,b}=1$ and transmits $x_{r,1}^n(1;b+1)$ in the subsequent block.

\textbf{Receiver:} After block $b=2,\ldots,\const{B}+1$ the receiver has observed the outputs $y_{1}^n(b-1)$ and $y_{1}^n(b)$. It tries to find an $m_{b-1}$ such that
\begin{IEEEeqnarray}{r}
\bigl(x_{1}^n(m_{b-1},\hat{m}_{b-2};b-1),x_{r,1}^n(\hat{m}_{b-2};b-1),y_1^n(b-1)\bigr)\IEEEeqnarraynumspace\,\,\nonumber\\
\in \set{A}_{\varepsilon}(X_1^n,X_{r,1}^n,Y_1^n) \IEEEeqnarraynumspace\,\, \label{eq:apptypicalrec1}
\end{IEEEeqnarray}
and
\begin{IEEEeqnarray}{lCl}
\bigl(x_{r,1}^n(m_{b-1};b),y_1^n(b)\bigr) & \in & \set{A}_{\varepsilon}(X_{r,1}^n,Y_1^n) \label{eq:apptypicalrec2}
\end{IEEEeqnarray}
where $\hat{m}_{b-2}$ is the receiver's estimate of $m_{b-2}$. If one or more such $m_{b-1}$ are found, then the receiver chooses one of them and calls this choice $\hat{m}_{b-1}$. If no such $m_{b-1}$ is found, then the receiver sets $\hat{m}_{b-1}=1$.

\textbf{Analysis:} Block-Markov superposition coding is typically analyzed by upper-bounding the error probability for each block $b$ conditioned on the event $\mathscr{F}_{b-1}$ that no errors have been made up to block $b$. This approach does not work well for channels with memory. Indeed, if no errors have been made up to block $b$, then the noise and the fading in the previous blocks must be in the successful decoding regions of the relay and the receiver. Since the fading has memory, this implies that conditioning on $\mathscr{F}_{b-1}$ changes the distribution of the fading. (A similar problem occurs when analyzing the error probability of rate-splitting for multiple-access channels \cite[Sec.~II]{rimoldiurbanke96}.) We therefore analyze the error probability in a slightly different way.

For each block $b=1,\ldots,\const{B}+1$, let $\mathscr{T}_{r,b}(\hat{m}_{r,b})$ denote the event that $\hat{m}_{r,b}$ satisfies 
\begin{IEEEeqnarray}{r}
\bigl(x_{1}^n(\hat{m}_{r,b},m_{b-1};b),x_{r,1}^n(m_{b-1};b),y_{r,1}^n(b)\bigr)\qquad\IEEEeqnarraynumspace \nonumber\\
\in \set{A}_{\varepsilon}(X_1^n,X_{r,1}^n,Y_{r,1}^n) \IEEEeqnarraynumspace\label{eq:appDF_Erb+}
\end{IEEEeqnarray}
and let $\mathscr{T}^{\textnormal{c}}_{r,b}(\hat{m}_{r,b})$ denote the event that $\hat{m}_{r,b}$ does not satisfy \eqref{eq:appDF_Erb+}.  Similarly, let $\mathscr{T}_b(\hat{m}_{b-1})$ denote the event that $\hat{m}_{b-1}$ satisfies
\begin{IEEEeqnarray}{r}
\bigl(x_1^n(\hat{m}_{b-1},m_{b-2};b-1),x_{r,1}^n(m_{b-2};b-1),y_1^n(b-1)\bigr)\IEEEeqnarraynumspace\nonumber\\
 \in \set{A}_{\varepsilon}(X_1^n,X_{r,1}^n,Y_1^n) \IEEEeqnarraynumspace\label{eq:appDFevent1}
\end{IEEEeqnarray}
and
\begin{IEEEeqnarray}{lCl}
\bigl(x_{r,1}^n(\hat{m}_{b-1};b),y_1^n(b)\bigr) & \in & \set{A}_{\varepsilon}(X_{r,1}^n,Y_1^n)\label{eq:appDFevent2}
\end{IEEEeqnarray}
and let $\mathscr{T}_b^{\textnormal{c}}(\hat{m}_{b-1})$ denote the event that $\hat{m}_{b-1}$ does not satisfy \eqref{eq:appDFevent1} and \eqref{eq:appDFevent2}. The event that either the relay 
or the receiver makes an error in at least one of the blocks is a subset of the union of events
\begin{IEEEeqnarray*}{l}
\Biggl(\bigcup_{b=1}^{\const{B}} \mathscr{T}^{\textnormal{c}}_{r,b}(m_b) \cup \bigcup_{\hat{m}_{r,b}\neq m_b} \mathscr{T}_{r,b}(\hat{m}_{r,b})\Biggr) \\
\qquad\quad \cup  \Biggl(\bigcup_{b=2}^{\const{B}+1} \mathscr{T}^{\textnormal{c}}_b(m_{b-1}) \bigcup_{\hat{m}_{b-1}\neq m_{b-1}} \mathscr{T}_b(\hat{m}_{b-1})\Biggr).
\end{IEEEeqnarray*}
It thus follows from the union bound that the probability of error is upper-bounded by
\begin{IEEEeqnarray}{lCl}
\IEEEeqnarraymulticol{3}{l}{\Prob(\text{error})} \nonumber\\
 & \leq &\Prob\Biggl( \mathscr{T}^{\textnormal{c}}_{r,1}(m_1) \cup \bigcup_{\hat{m}_{r,1}\neq m_1} \mathscr{T}_{r,1}(\hat{m}_{r,1})\Biggr)  \nonumber\\
& & {} + \sum_{b=2}^{\const{B}}\Biggl[\Prob\bigl(\mathscr{T}^{\textnormal{c}}_{r,b}(m_b)\cup\mathscr{T}^{\textnormal{c}}_b(m_{b-1})\bigr)\nonumber\\
& & \quad {} + \Prob\Biggl(\bigcup_{\hat{m}_{r,b}\neq m_b} \mathscr{T}_{r,b}(\hat{m}_{r,b})\cup\bigcup_{\hat{m}_{b-1}\neq m_{b-1}} \mathscr{T}_b(\hat{m}_{b-1})\Biggr)  \Biggr] \nonumber\\
& & {} + \Prob\Biggl(\mathscr{T}^{\textnormal{c}}_{\const{B}+1}(m_{\const{B}}) \cup \bigcup_{\hat{m}_{\const{B}}\neq m_{\const{B}}} \mathscr{T}_{\const{B}+1}(\hat{m}_{\const{B}})\Biggr). \label{eq:app_T0}
\end{IEEEeqnarray}

We next upper-bound the error probability for each block $b$. The overall probability of error is then upper-bounded by $(\const{B}+1)$ times the maximum error probability of each block. Consequently, if for each block the error probability tends to zero as $n$ tends to infinity, then so does the overall probability of error. 

In order to upper-bound 
\begin{equation}
\label{eq:app_notyp}
\Prob\bigl(\mathscr{T}^{\textnormal{c}}_{r,b}(m_b)\cup\mathscr{T}^{\textnormal{c}}_b(m_{b-1})\bigr)
\end{equation}
note that for a given $(m_b,m_{b-1})$ the process \mbox{$\{(X_k,X_{r,k}),\,k\in\Integers\}$} is i.i.d.\ and jointly independent of the stationary and ergodic, complex Gaussian fading processes \mbox{$\{H_{\ell,k},\,k\in\Integers\}$}, \mbox{$\ell=1,2,3$} and of the i.i.d.\ Gaussian noise processes \mbox{$\{Z_{r,k},\,k\in\Integers\}$} and $\{Z_k,\,k\in\Integers\}$. This implies that the process $\{(X_k,X_{r,k},Y_{r,k},Y_k),\,k\in\Integers\}$ is jointly stationary and ergodic, and the Shannon-McMillan-Breiman Theorem \cite[Th.~2]{algoetcover88} yields \eqref{eq:appDFtypical1}--\eqref{eq:appDFtypical3}, shown at the top of this page, which imply that \eqref{eq:app_notyp} tends to zero as $n$ tends to infinity. 
\begin{figure*}[t!]
\begin{IEEEeqnarray}{rCll}
\lim_{n\to\infty} \Prob\Bigl(\bigl(X_1^n(m_{b},m_{b-1};b),X_{r,1}^n(m_{b-1};b),Y_{r,1}^n(b)\bigr)\in\set{A}_{\varepsilon}(X_1^n,X_{r,1}^n,Y_{r,1}^n)\Bigr) & = & 1, \quad & \varepsilon>0 \label{eq:appDFtypical1}\\
\lim_{n\to\infty} \Prob\Bigl(\bigl(X_1^n(m_{b-1},m_{b-2};b-1),X_{r,1}^n(m_{b-2};b-1),Y_1^n(b-1)\bigr)\in\set{A}_{\varepsilon}(X_1^n,X_{r,1}^n,Y_1^n)\Bigr) & = & 1, \quad & \varepsilon>0 \IEEEeqnarraynumspace \label{eq:appDFtypical2}\\
\lim_{n\to\infty} \Prob\Bigl(\bigl(X_{r,1}^n(m_{b-1};b),Y_1^n(b)\bigr)\in\set{A}_{\varepsilon}(X_{r,1}^n,Y_1^n)\Bigr) & = & 1, \quad & \varepsilon>0.\label{eq:appDFtypical3}
\end{IEEEeqnarray}
 \rule{\textwidth}{0.5pt}
\end{figure*}

We continue by upper-bounding
\begin{IEEEeqnarray}{lCl}
\IEEEeqnarraymulticol{3}{l}{\Prob\Biggl(\bigcup_{\hat{m}_{r,b}\neq m_b} \mathscr{T}_{r,b}(\hat{m}_{r,b})\cup\bigcup_{\hat{m}_{b-1}\neq m_{b-1}} \mathscr{T}_b(\hat{m}_{b-1})\Biggr)}\nonumber\\
 & \leq & \sum_{\hat{m}_{r,b}\neq m_b} \Prob\bigl(\mathscr{T}_{r,b}(\hat{m}_{r,b})\bigr) + \sum_{\hat{m}_{b-1}\neq m_{b-1}} \Prob\bigl(\mathscr{T}_b(\hat{m}_{b-1})\bigr) \IEEEeqnarraynumspace\label{eq:app_T1}
\end{IEEEeqnarray}
using the union bound. To analyze the summands in the first sum on the RHS of \eqref{eq:app_T1}, note that for $\hat{m}_{r,b}\neq m_b$ the triple \[\bigl(X_1^n(\hat{m}_{r,b},m_{r,b-1};b),X_{r,1}^n(m_{b-1};b),Y_{r,1}^n(b)\bigr)\] is distributed according to \[P_{X_{r,1}^n}(\cdot) P_{X_1^n|X_{r,1}^n}(\cdot) P_{Y_{r,1}^n|X_{r,1}^n}(\cdot).\] Generalizing \cite[Th.~14.2.3]{coverthomas91} to channels with memory\footnote{To this end, we need to replace the entropies in the proof of \cite[Th.~14.2.3]{coverthomas91} by the corresponding differential entropy rates.} yields that, for every $\hat{m}_{r,b}\neq m_b$, we have
\begin{IEEEeqnarray}{lCl}
\IEEEeqnarraymulticol{3}{l}{\Prob\bigl(\mathscr{T}_{r,b}(\hat{m}_{r,b})\bigr)} \nonumber\\
\quad & \leq & \exp\Biggl(-n \biggl(\lim_{\eta\to\infty}\frac{1}{\eta} I\bigl(X_1^{\eta};Y_{r,1}^{\eta} \bigm| X_{r,1}^{\eta}\bigr)-6 \varepsilon\biggr) \Biggr). \IEEEeqnarraynumspace
\end{IEEEeqnarray}
Since there are $\lfloor e^{n\tilde{R}}\rfloor-1$ different values for $\hat{m}_{r,b}\neq m_b$, this yields
\begin{IEEEeqnarray}{lCl}
\IEEEeqnarraymulticol{3}{l}{\sum_{\hat{m}_{r,b}\neq m_b} \Prob\bigl(\mathscr{T}_{r,b}(\hat{m}_{r,b})\bigr)} \nonumber\\
\quad & \leq & \exp\Biggl(n \biggl(\tilde{R} - \lim_{\eta\to\infty}\frac{1}{\eta} I\bigl(X_1^{\eta};Y_{r,1}^{\eta} \bigm| X_{r,1}^{\eta}\bigr)+6 \varepsilon\biggr) \Biggr) \IEEEeqnarraynumspace\label{eq:appDFrelayrate}
\end{IEEEeqnarray}
which implies that if
\begin{equation*}
\tilde{R} < \lim_{n\to\infty}\frac{1}{n} I\bigl(X_1^n;Y_{r,1}^n \bigm| X_{r,1}^n\bigr)-6 \varepsilon
\end{equation*}
then the first sum on the RHS of \eqref{eq:app_T1} vanishes as $n$ tends to infinity.

To analyze the summands in the second sum on the RHS of \eqref{eq:app_T1}, note that for $\hat{m}_{b-1}\neq m_{b-1}$ the tuple
\begin{IEEEeqnarray*}{r}
\bigl(X_1^n(\hat{m}_{b-1},m_{b-2};b-1),X_{r,1}^n(m_{b-2};b-1),\qquad\\
Y_1^n(b-1),X_{r,1}^n(\hat{m}_{b-1};b),Y_1^n(b)\bigr)
\end{IEEEeqnarray*}
is distributed according to
\begin{IEEEeqnarray*}{r}
P_{X_{r,1}^n(b-1)}(\cdot) P_{X_1^n(b-1)|X_{r,1}^n(b-1)}(\cdot) P_{Y_{1}^n(b-1)|X_{r,1}^n(b-1)}(\cdot)\times\quad\,\,\\ 
\times P_{X_{r,1}^n(b)}(\cdot) P_{Y_1^n(b)|X_{r,1}^n(b-1),Y_1^n(b-1)}(\cdot)
\end{IEEEeqnarray*}
where the arguments after the random variables indicate whether the vector belongs to block $b$ or $(b-1)$. Extending \cite[Ths.~14.2.1 \& 14.2.3]{coverthomas91} to the above channel, we obtain for every $\hat{m}_{b-1}\neq m_{b-1}$
\begin{IEEEeqnarray}{lCl}
\IEEEeqnarraymulticol{3}{l}{\Prob\bigl(\mathscr{T}_b(\hat{m}_{b-1})\bigr)} \nonumber\\
\quad & \leq & e^{-n\Bigl(\lim_{\eta\to\infty}\frac{1}{\eta} I\bigl(X_1^{\eta}(b-1);Y_1^{\eta}(b-1)\bigm| X_{r,1}^{\eta}(b-1)\bigr)-6\varepsilon\Bigr)}\nonumber\\
& & {} \times e^{-n\Bigl(\lim_{\eta\to\infty} \frac{1}{\eta} I\bigl(X_{r,1}^{\eta}(b);Y_1^{\eta}(b)\bigr)-4\varepsilon\Bigr)}\nonumber\\
& & {} \times e^{n\Bigl(\lim_{\eta\to\infty}\frac{1}{\eta} I\bigl(Y_1^{\eta}(b);X_{r,1}^{\eta}(b-1),Y_1^{\eta}(b-1)\bigr) \Bigr)}. \label{eq:appDFrecrate1}
\end{IEEEeqnarray}
Since the codebook construction does not depend on the block $b$, and since the channel is stationary, it follows that
\begin{IEEEeqnarray}{lCl}
\IEEEeqnarraymulticol{3}{l}{\lim_{n\to\infty}\frac{1}{n} I\bigl(X_1^n(b-1);Y_1^n(b-1)\bigm| X_{r,1}^n(b-1)\bigr)} \nonumber\\
\quad & = & \lim_{n\to\infty}\frac{1}{n} I\bigl(X_1^n;Y_1^n\bigm| X_{r,1}^n\bigr)
\end{IEEEeqnarray}
and
\begin{equation}
\lim_{n\to\infty} \frac{1}{n} I\bigl(X_{r,1}^n(b);Y_1^n(b)\bigr) = \lim_{n\to\infty} \frac{1}{n} I\bigl(X_{r,1}^n;Y_1^n\bigr).
\end{equation}
We next show that
\begin{equation}
\label{eq:appvanishmem}
\lim_{n\to\infty}\frac{1}{n} I\bigl(Y_1^n(b);X_{r,1}^n(b-1),Y_1^n(b-1)\bigr) = 0.
\end{equation}
We first note that by the stationarity of \[\bigl(Y_1^n(b),X_{r,1}^n(b-1),Y_1^n(b-1)\bigr)\] we have
\begin{IEEEeqnarray}{lCl}
\IEEEeqnarraymulticol{3}{l}{I\bigl(Y_1^n(b);X_{r,1}^n(b-1),Y_1^n(b-1)\bigr)} \nonumber\\
\qquad\qquad & = & I\bigl(Y_1^n;X_{r,-n+1}^0,Y_{-n+1}^0\bigr).\label{eq:appvanishproof1}
\end{IEEEeqnarray}
Using the Data Processing Inequality on the Markov chain
\begin{IEEEeqnarray*}{lCl}
Y_1^n & \markov & \bigl(H_{2,1}^n,H_{3,1}^n\bigr) \\
& \markov & \bigl(H_{2,-n+1}^0,H_{3,-n+1}^0\bigr) \\
& \markov & \bigl(X_{r,-n+1}^0,Y_{-n+1}^0\bigr)
\end{IEEEeqnarray*}
the expression \eqref{eq:appvanishproof1} can be upper-bounded by
\begin{IEEEeqnarray}{lCl}
\IEEEeqnarraymulticol{3}{l}{I\bigl(Y_1^n;X_{r,-n+1}^0,Y_{-n+1}^0\bigr)} \nonumber\\ 
\quad & \leq & I\bigl(H_{2,1}^n,H_{3,1}^n;H_{2,-n+1}^0,H_{3,-n+1}^0\bigr).
\end{IEEEeqnarray}
Since the processes $\{H_{2,k},\,k\in\Integers\}$ and $\{H_{3,k},\,k\in\Integers\}$ are independent, it follows that
\begin{IEEEeqnarray}{lCl}
\IEEEeqnarraymulticol{3}{l}{I\bigl(H_{2,1}^n,H_{3,1}^n;H_{2,-n+1}^0,H_{3,-n+1}^0\bigr)}\nonumber\\
\quad & = & I\bigl(H_{2,1}^n;H_{2,-n+1}^0\bigr) + I\bigl(H_{3,1}^n;H_{3,-n+1}^0\bigr) \nonumber\\
& \leq & I\bigl(H_{2,1}^n;H_{2,-\infty}^0\bigr) + I\bigl(H_{3,1}^n;H_{3,-\infty}^0\bigr)
\end{IEEEeqnarray}
where the inequality follows because adding observations does not decrease mutual information.

Using the chain rule for mutual information and Ces\'aro's mean, we obtain
\begin{IEEEeqnarray}{lCl}
\IEEEeqnarraymulticol{3}{l}{\lim_{n\to\infty} \frac{1}{n} I\bigl(H_{2,1}^n;H_{2,-\infty}^0\bigr)} \nonumber\\
\quad & =  & \lim_{k\to\infty} I\bigl(H_{2,k};H_{2,-\infty}^0\bigm| H_{2,1}^{k-1}\bigr) \nonumber\\
& = & \lim_{k\to\infty} h\bigl(H_{2,k}\bigm|H_{2,1}^{k-1}\bigr) - \lim_{k\to\infty} h\bigl(H_{2,k}\bigm| H_{2,-\infty}^{k-1}\bigr) \nonumber\\
& = & h\bigl(\{H_{2,k}\}\bigr) - h\bigl(\{H_{2,k}\}\bigr) \nonumber\\
& = & 0
\end{IEEEeqnarray}
where the third step follows from the stationarity of $\{H_{2,k},\,k\in\Integers\}$ \cite[Th.~4.2.1]{coverthomas91}. In the same way, it can be shown that
\begin{equation}
\label{eq:appvanishproof2}
\lim_{n\to\infty} \frac{1}{n} I\bigl(H_{3,1}^n;H_{3,-\infty}^0\bigr) = 0.
\end{equation}
Combining \eqref{eq:appvanishproof1}--\eqref{eq:appvanishproof2} proves \eqref{eq:appvanishmem}. We thus obtain from \eqref{eq:appDFrecrate1}--\eqref{eq:appvanishmem} that
\begin{IEEEeqnarray}{lCl}
\IEEEeqnarraymulticol{3}{l}{\Prob\bigl(\mathscr{T}_b(\hat{m}_{b-1})\bigr)} \nonumber\\
\quad & \leq & e^{-n\bigl( \lim_{\eta\to\infty}\frac{1}{\eta} I(X_1^{\eta};Y_1^{\eta} | X_{r,1}^{\eta}) + \lim_{\eta\to\infty} \frac{1}{\eta} I(X_{r,1}^{\eta};Y_1^{\eta})-10\varepsilon\bigr)}.\nonumber\\ \label{eq:appDFrecrate2}
\end{IEEEeqnarray}
Since there are $\lfloor e^{n\tilde{R}}\rfloor-1$ different values for $\hat{m}_{b-1}\neq m_{b-1}$ this yields
\begin{IEEEeqnarray}{lCl}
\IEEEeqnarraymulticol{3}{l}{\sum_{\hat{m}_{b-1}\neq m_{b-1}} \Prob\bigl(\mathscr{T}_b(\hat{m}_{b-1})\bigr)}\nonumber\\
\quad & \leq & e^{n\bigl(\tilde{R} - \lim_{\eta\to\infty}\frac{1}{\eta} I(X_1^{\eta};Y_1^{\eta} | X_{r,1}^{\eta})+\lim_{\eta\to\infty} \frac{1}{\eta} I(X_{r,1}^{\eta};Y_1^{\eta})+10\varepsilon\bigr)} \nonumber\\
& = & e^{n\bigl(\tilde{R} - \lim_{\eta\to\infty}\frac{1}{\eta} I(X_1^{\eta},X_{r,1}^{\eta};Y_1^{\eta})+10\varepsilon\bigr)}\label{eq:appDFrecrate3}
\end{IEEEeqnarray}
which implies that the second sum on the RHS of \eqref{eq:app_T1} vanishes as $n$ tends to infinity, provided that
\begin{equation*}
\tilde{R} < \lim_{n\to\infty}\frac{1}{n} I\bigl(X_1^n,X_{r,1}^n;Y_1^n\bigr)-10\varepsilon.
\end{equation*}
Since $\varepsilon>0$ is arbitrary, it follows from \eqref{eq:app_T0},  \eqref{eq:appDFtypical1}--\eqref{eq:app_T1}, \eqref{eq:appDFrelayrate}, and \eqref{eq:appDFrecrate3} that for every block $b$ any rate satisfying
\begin{IEEEeqnarray}{lCll}
\tilde{R} & < & \min\Biggl\{& \lim_{n\to\infty}\frac{1}{n} I\bigl(X_1^n;Y_{r,1}^n \bigm| X_{r,1}^n\bigr), \nonumber\\
& & & \quad\,\lim_{n\to\infty}\frac{1}{n} I\bigl(X_1^n,X_{r,1}^n;Y_1^n\bigr)\Biggr\} \label{eq:appDFthis_is_the_rate}
\end{IEEEeqnarray}
is achievable. Consequently, the complete message $m$ can be transmitted over $\const{B}+1$ blocks with an overall rate
\begin{equation}
R = \frac{\const{B}\tilde{R}}{\const{B}+1}.
\end{equation}
By letting $\const{B}$ tend to infinity, it follows that, for every i.i.d.\ process $\{(X_k,X_{r,k}),\,k\in\Integers\}$, we can achieve the rate \eqref{eq:appDFthis_is_the_rate}, thus proving Proposition~\ref{prop:decodeforward}.

\section{The limit of $\Delta_2(\SNR,\kappa)$}
\label{app:vareps2}
In the following we show that $\Delta_2(\SNR,\kappa)$ tends to zero as $\SNR$ tends to infinity. The proof follows along the same lines as the proof in \cite[Appendix IX]{lapidothmoser03_3}.

We first note that $\Delta_2(\SNR,\kappa) \geq 0$. It thus suffices to show
that $\varlimsup_{\kappa\to\infty}\Delta_2(\SNR,\kappa) \leq 0$. We have
\begin{IEEEeqnarray}{lCl}
  \IEEEeqnarraymulticol{3}{l}{\Delta_2(\SNR,\kappa)} \nonumber\\
  \quad &  = &
  I\bigl(X_k,X_{r,k};H_{3,k-\kappa}^{k-1}\bigm|Y_{k-\kappa}^k,X_{k-\kappa}^{k-1},X_{r,k-\kappa}^{k-1}\bigr)\nonumber\\
  & = & h\bigl(H_{3,k-\kappa}^{k-1}\bigm|Y_{k-\kappa}^k,X_{k-\kappa}^{k-1},X_{r,k-\kappa}^{k-1})\nonumber\\
  & & {} - h\bigl(H_{3,k-\kappa}^{k-1}\bigm|Y_{k-\kappa}^k,X_{k-\kappa}^{k},X_{r,k-\kappa}^{k}\bigr)\nonumber\\
  & \leq & h\bigl(H_{3,k-\kappa}^{k-1}\bigm|Y_{k-\kappa}^{k-1},X_{k-\kappa}^{k-1},X_{r,k-\kappa}^{k-1}\bigr) \nonumber\\
  & & {} - h\bigl(H_{3,k-\kappa}^{k-1}\bigm|Y_{k-\kappa}^k,X_{k-\kappa}^{k},X_{r,k-\kappa}^{k},H_{3,k}\bigr)\nonumber\\
  & = &  h\bigl(H_{3,k-\kappa}^{k-1}\bigm|Y_{k-\kappa}^{k-1},X_{k-\kappa}^{k-1},X_{r,k-\kappa}^{k-1}\bigr) \nonumber\\
  & & {} -h\bigl(H_{3,k-\kappa}^{k-1}\bigm|Y_{k-\kappa}^{k-1},X_{k-\kappa}^{k-1},X_{r,k-\kappa}^{k-1},H_{3,k}\bigr)\nonumber\\
  & = & I\bigl(H_{3,k-\kappa}^{k-1};H_{3,k}\bigm|Y_{k-\kappa}^{k-1},X_{k-\kappa}^{k-1},X_{r,k-\kappa}^{k-1}\bigr) \label{eq:156}
  \end{IEEEeqnarray}
where the inequality follows because conditioning cannot increase entropy; the subsequent equality follows because, conditioned on $\bigl(Y_{k-\kappa}^{k-1},X_{k-\kappa}^{k-1},X_{r,k-\kappa}^{k-1},H_{3,k}\bigr)$, the fading coefficients $H_{3,k-\kappa}^{k-1}$ are independent of $\bigl(Y_k,X_k,X_{r,k}\bigr)$.
We thus have
\begin{IEEEeqnarray}{lCl}
 \IEEEeqnarraymulticol{3}{l}{\Delta_2(\SNR,\kappa)} \nonumber\\
\quad & \leq &
  h\bigl(H_{3,k}\bigm|Y_{k-\kappa}^{k-1},X_{k-\kappa}^{k-1},X_{r,k-\kappa}^{k-1}\bigr) \nonumber\\
  & & {} - h\bigl(H_{3,k}\bigm|Y_{k-\kappa}^{k-1},X_{k-\kappa}^{k-1},X_{r,k-\kappa}^{k-1},H_{3,k-\kappa}^{k-1}\bigr)\nonumber\\
  & = & h\bigl(H_{3,k}\bigm|Y_{k-\kappa}^{k-1},X_{k-\kappa}^{k-1},X_{r,k-\kappa}^{k-1}\bigr) - h\bigl(H_{3,k}\bigm|H_{3,k-\kappa}^{k-1}\bigr)\nonumber\\
  & = & h\Biggl(H_{3,k}\Biggm|\biggl\{\frac{Y_{\ell}}{X_{r,\ell}}\biggr\}_{\ell=k-\kappa}^{k-1},X_{k-\kappa}^{k-1},X_{r,k-\kappa}^{k-1}\Biggr) \nonumber\\
  & & {} - h\bigl(H_{3,k}\bigm|H_{3,k-\kappa}^{k-1}\bigr)\nonumber\\
  & \leq & h\Bigl(H_{3,k}\Bigm|\bigl\{H_{3,\ell}+\zeta\, W_{\ell}\bigr\}_{\ell=k-\kappa}^{k-1}\Bigr) \nonumber\\
  & & {} -h\bigl(H_{3,k}\bigm| H_{3,k-\kappa}^{k-1}\bigr)\label{eq:appvareps2_1}
\end{IEEEeqnarray}
where $\{W_{k},\,k\in\Integers\}$ is a sequence of i.i.d., zero-mean, unit-variance, circularly-symmetric, complex Gaussian random variables, and where
\begin{equation*}
  \zeta \triangleq \frac{\const{A}^{2\alpha}}{\const{A}_r^{2\beta}} + \frac{\sigma^2}{\const{A}_r^{2\beta}}.
\end{equation*}
The second step in \eqref{eq:appvareps2_1} follows because, conditioned on $H_{3,k-\kappa}^{k-1}$, the present fading $H_{3,k}$ is independent of $\bigl(X_{k-\kappa}^{k-1},X_{r,k-\kappa}^{k-1},Y_{k-\kappa}^{k-1}\bigr)$; the last step in \eqref{eq:appvareps2_1} follows because the first differential entropy is maximized for $|x_{\ell}|^2=\const{A}^{2\alpha}$ and $|x_{r,\ell}|^2=\const{A}_r^{2\beta}$, in which case
\begin{equation*}
\biggl\{\frac{Y_{\ell}}{X_{r,\ell}}\biggr\}_{\ell=k-\kappa}^{k-1}= \biggl\{H_{2,\ell}\frac{X_{\ell}}{X_{r,\ell}}+H_{3,\ell}+\frac{Z_{\ell}}{X_{r,\ell}}\biggr\}_{\ell=k-\kappa}^{k-1}
\end{equation*}
has the same law as $\bigl\{H_{3,\ell}+\zeta\, W_{\ell}\bigr\}_{\ell=k-\kappa}^{k-1}$. Noting that
\begin{IEEEeqnarray}{lCl}
\IEEEeqnarraymulticol{3}{l}{h\Bigl(H_{3,k}\Bigm|\bigl\{H_{3,\ell}+\zeta\, W_{\ell}\bigr\}_{\ell=k-\kappa}^{k-1}\Bigr)-h\bigl(H_{3,k}\bigm| H_{3,k-\kappa}^{k-1}\bigr)}\nonumber\\
\quad & = & h\Bigl(H_{3,k},\bigl\{H_{3,\ell}+\zeta\, W_{\ell}\bigr\}_{\ell=k-\kappa}^{k-1}\Bigr)-h\bigl(H_{3,k-\kappa}^{k}\bigr) \nonumber\\
& & {} - I\Bigl(\bigl\{H_{3,\ell}+\zeta\, W_{\ell}\bigr\}_{\ell=k-\kappa}^{k-1};W_{k-\kappa}^{k-1}\Bigr) \nonumber\\
& \leq & h\Bigl(H_{3,k},\bigl\{H_{3,\ell}+\zeta\, W_{\ell}\bigr\}_{\ell=k-\kappa}^{k-1}\Bigr)-h\bigl(H_{3,k-\kappa}^{k}\bigr) \IEEEeqnarraynumspace
\end{IEEEeqnarray}
we obtain
\begin{IEEEeqnarray}{lCl}
\IEEEeqnarraymulticol{3}{l}{\Delta_2(\SNR,\kappa)} \nonumber\\
\quad  &\leq & h\Bigl(H_{3,k},\bigl\{H_{3,\ell}+\zeta\, W_{\ell}\bigr\}_{\ell=k-\kappa}^{k-1}\Bigr) - h\bigl(H_{3,k-\kappa}^{k}\bigr).\IEEEeqnarraynumspace
\end{IEEEeqnarray}
The claim now follows by \cite[Lemma 6.11]{lapidothmoser03_3}, which states that if $\vect{H} \in \Complex^{\nu}$ is a random vector of finite Frobenius norm and finite differential entropy, and if $\vect{W} \in \Complex^{\nu}$ is a Gaussian random vector that is independent of $\vect{H}$, then
\begin{equation*}
  \lim_{\sigma^2 \to 0} \left\{h(\vect{H}+\sigma^2\vect{W})-h(\vect{H})\right\} = 0.
\end{equation*}

\section*{Acknowledgment}
The authors would like to thank Amos Lapidoth, Alfonso Martinez, the Associate Editor Elza Erkip, and the anonymous referees for their constructive comments.


\IEEEtriggeratref{10}

\end{document}